%%% theorem definitions
%\documentclass[a4paper, 12 pt, reqno]{article}
%%% theorem definitions
\documentclass[a4paper, 11 pt, reqno]{article}

\usepackage{amssymb,amsmath}
\usepackage{amsfonts}
\usepackage{amsthm}
\usepackage{graphicx}
\usepackage{subfig}
\usepackage{dsfont}
\usepackage[margin=1in]{geometry}
\usepackage{setspace}
\usepackage{accents}
\usepackage{bbm}
\usepackage{endnotes}

\bibliographystyle{apalike}

\usepackage{float}
\restylefloat{figure}

\onehalfspacing
%\doublespacing
%\parskip 6 pt

%\usepackage[titletoc,title]{appendix}
\usepackage[]{appendix}
\usepackage{doi}
\usepackage{enumitem}
\usepackage{authblk}

\usepackage[comma]{natbib}

\expandafter\def\expandafter\normalsize\expandafter{%
	\normalsize
	\setlength\abovedisplayskip{6pt}
	\setlength\belowdisplayskip{6pt}
	\setlength\abovedisplayshortskip{4pt}
	\setlength\belowdisplayshortskip{4pt}
}

\usepackage{float}
\restylefloat{figure}

\usepackage{calrsfs}
\usepackage{lipsum}

\newtheoremstyle{break}
{\topsep}{\topsep}%
{\itshape}{}%
{\bfseries}{}%
{\newline}{}%
\theoremstyle{break}
\newtheorem{proposition}{Proposition}
\newtheorem{assumption} {Assumption}
\newtheorem{definition}{Definition}
\newtheorem{corollary} {Corollary}
\newtheorem{lemma}{Lemma}

\newtheorem{example}{Example}

\newcommand{\indicator}[1]{\mathbbm{1}_{#1}}

\newcommand{\Real}{\mathbb{R}}
\newcommand{\Realo}{\mathbb{R}_{\geq 0}}

\newcommand{\Class}{W}
\newcommand{\eqdev}[1]{\hat{#1}}
\newcommand{\strategy}{\vec{s}}
\newcommand{\devstrategy}{\vec{s}'}
\newcommand{\network}{g}
\newcommand{\eqstrategy}{\vec{s}^{*}}
\newcommand{\wnetwork}{g}
\newcommand{\wneighbor}{\underaccent{\bar}{N}}
\newcommand{\sneighbor}{\bar{N}}
\newcommand{\T}{T}

\newcommand{\ssp}{v}
\renewcommand{\sp}{V}
\newcommand{\bp}{A}
\newcommand{\sbp}{a}
\newcommand{\Bip}[1]{K_{#1}}
\usepackage[usenames,dvipsnames,svgnames,table]{xcolor}

\newcommand{\costnew}{\tilde{c}}
\newcommand{\pinew}{\tilde{\pi}}

\newcommand{\avgi}{w^*}

\newcommand{\es}{s^{*}}

\newcommand{\esij}{{s_{ij}^{*}}}
\newcommand{\esji}{{s_{ji}^*}}

\newcommand{\ewi}{{w_{i}^{*}}}
\newcommand{\ewj}{{w_{j}^*}}
\newcommand{\ewk}{{w_{k}^*}}

\newcommand{\bsij}{\bar{s}_{ij}}
\newcommand{\bsji}{\bar{s}_{ji}}

\newcommand{\Den}{\Omega}
\newcommand{\hessian}{\mathbf{H}}
\newcommand{\Jac}{\mathbf{G}} 

\newcommand{\Hmat}{\mathbf{H}}
\newcommand{\friends}{F}
\newcommand{\bnet}{\bar{g}}
\newcommand{\nestrategy}{\bar{\strategy}}
\newcommand{\bw}{\bar{w}}
\newcommand{\bs}{\bar{s}}
\newcommand{\ew}{w^{*}}
\newcommand{\bnetwork}{\bar{\network}}

\renewcommand{\vec}[1]{\mathbf{#1}}
\renewcommand{\vec}[1]{\boldsymbol{#1}}

\graphicspath{{Figures/}}

\title{A Noncooperative Model of Contest Network Formation\footnote{ 	
		I am grateful to Fernando Vega-Redondo and Piero Gottardi for their advice. I would also like to thank Stefano Battiston, Yann Bramoull\'{e}, Arnaud Dragicevic, Matthias Dahm, Joerg Franke, Timo Hiller, Andrea Mattozzi, Nicola Pavoni,  William H. Sandholm. This paper is based on a chapter of my PhD thesis. I'm grateful to Cooperazione Italiana allo Sviluppo for the financial support. A previous version of this paper was circulated under the name: \textit{Rent Seeking and Power Hierarchies: a Noncooperative Model of Network Formation with Antagonistic Links}. Declarations of interest: none. 
	} 
}

%\title{A Noncooperative Model of Contest Network Formation
%}
\author[1]{Kenan Huremovi\'{c}\\ { \vspace{-10pt}\footnotesize IMT School for Advanced Studies Lucca}\footnote{IMT School for Advanced Studies Lucca, Piazza S. Francesco, 19, 55100 Lucca, Italy. E-mail: kenan.huremovic@imtlucca.com,  Website: https://sites.google.com/site/kenanhuremovic}}
%\author[1]{ }

\date{May 15, 2020}
\begin{document}
	\begin{singlespace}
	\maketitle
	\end{singlespace}

\begin{abstract}
\begin{singlespace}
	\vspace*{-10pt}
	%In this paper we study a model of weighted network formation.  The bilateral interaction is modeled as a Tullock contest game with the possibility of a draw. We describe stable networks under different concepts of stability. We show that a Nash stable network is either the empty network or the complete network. The complete network is not immune to bilateral deviations.  When we allow for limited farsightedness, stable networks immune to bilateral deviations must be complete $M$-partite networks, with partitions of different sizes. The empty network is the efficient network.  We provide several  comparative statics results illustrating the importance of network structure in mediating the effects of shocks and interventions. In particular, we show that an increase in the likelihood of a draw has a non-monotonic effect on the level of wasteful contest spending in the society.  To the best of our knowledge, this paper is the first attempt to model weighted network formation when the actions of individuals are neither strategic complements nor strategic substitutes. 

In this paper we study a model of weighted network formation.  The bilateral interaction is modeled as a Tullock contest game with the possibility of a draw. We describe stable networks under different concepts of stability. We show that a Nash stable network is either the empty network or the complete network. The complete network is not immune to bilateral deviations.  When we allow for limited farsightedness, a stable network immune to bilateral deviations must be a complete $M$-partite network, with partitions of different sizes. We provide several  comparative statics results illustrating the importance of the structure of stable networks in mediating the effects of shocks and interventions. In particular, we show that an increase in the likelihood of a draw has a non-monotonic effect on the level of wasteful contest spending in the society.  To the best of our knowledge, this paper is the first attempt to model weighted network formation when the actions of individuals are neither strategic complements nor strategic substitutes.
\end{singlespace}

\end{abstract}

%\textbf{Running Title:} Contest Network Formation

\textbf{Key Words: }{\small  network formation; weighted network; contest; limited farsightedness.}

\textbf{JEL:} D85; D74; C72. 

\pagebreak
\section{Introduction}
\vspace*{-10pt}
A contest is a strategic interaction in which opposing parties make costly investments in order to increase their chances of gaining control over scarce resources. Contests have been studied in different settings, including political rent seeking \citep{hillman1989politically}, discretionary spending of top managers \citep{inderst2007}, competition for funding \citep{pfeffer1980}, sport \citep{szymanski2003}, litigation \citep{sytch2014friends}, and  armed conflict \citep{konig2017networks}. Agents often compete with several  opponents simultaneously. In this case, the set of bilateral contest relations in a population can be described as a network, in which each agent is a node, and a link indicates the contest between two agents. Contest networks emerge in many situations. For instance, \citep{sytch2014friends} studies the observed network of patent infringements and antitrust lawsuits among US pharmaceutical firms. \citep{konig2017networks} theoretically and empirically demonstrates the importance of the network structure of conflicts among groups in the Second Congo War. One may also expect that the structure  of a contest network has important implications in other settings, including  distributional conflicts in a federation as in \citep{warneryd1998distributional}, lobbying for discretionary spending of top managers as in  \citep{inderst2007}, and appropriation of property rights as in  \citep{mackenzie2013restricted}.  

In this paper we propose a model in which players make costly investments (exert costly effort) to extract resources from other players in the society. It is a model of weighted network formation, in which \textit{players choose with whom to engage in a bilateral contest and how much to invest in each of their contests}.  Our starting point is the model introduced in \citep{Franke2015}. In their model, the set of bilateral contests in the population is given, hence the structure of contest network is exogenous. The prize of a contest is a fixed transfer from the loser to the victor. Our first departure from \citep{Franke2015} is in the definition of the bilateral contest game, where we use a different specification which, being more general than one used in \citep{Franke2015},  allows ties. The main difference between our paper and \citep{Franke2015} is that we propose a model in which the structure of the contest network is determined endogenously. We say that a link between two players exists or that they are engaged in a contest when at least one of them invests a nonzero effort in fighting the other.  Our main contribution is that we describe \textit{stable} network structures under different notions of stability, and we provide several comparative statics results that highlight the importance of the network structure when assessing how changes in the parameters of the model affect individual and aggregate outcomes.

We consider three notions of stability in this paper. Two of them,   the \textit{Nash stability} and the \textit{strong pairwise stability} (\cite{bloch2009communication}), are standard in the literature of weighted network formation. The third equilibrium concept, labeled as the limited farsighted pairwise stability (LFPS), is introduced in this paper. The LFPS network is a network which is stable to unilateral and  bilateral deviations of limited farsighted players.

We show that the Nash stable network is, generically, the complete network in which players exert the same effort in all contests.\footnote{The empty network is Nash stable, for instance, in the case when the marginal cost of effort, for any level of effort, is so high that a non-zero investment against an opponent who invests 0 is still not profitable. We explicitly state this condition in Proposition \ref{prop:NashStableNetworks}. }  The Nash stable network is the complete network, even though every player would prefer not to be engaged in any of her contests.  For any contest in the complete network, both players would be better off if they destroyed the link between them. However, if one player unilaterally deviates and chooses investment 0, the other player is strictly better off if she invests a non-zero effort in the contest between them. To address this type of coordination problem in network formation games,  \citep{bloch2009communication} introduced a refinement of the Nash stability which allows bilateral deviations -- \textit{strong pairwise stability}. Since, in the complete network, any two players prefer to destroy the link between them, it immediately follows that a non-empty  strongly pairwise stable network does not exist.

%We start our analysis of stability by defining Nash stable networks. We show that the Nash stable network is, generically, the complete network in which players exert the same effort in all contests.\footnote{The empty network is Nash stable, for instance, in the case when the marginal cost of effort, for any level of effort, is so high that a non-zero investment against an opponent who invests 0 is still not profitable. We explicitly state this condition in Proposition \ref{prop:NashStableNetworks}. } The Nash stable network is complete,  even though every player would prefer not to be engaged in any of her contests. The reason is the coordination problem when two players contemplate destroying the link between them. For any contest in the complete network, both players would be better off if they destroyed the link between them. However, if one player unilaterally deviates and chooses investment 0, the other player is strictly better off if she invests a non-zero effort in the contest between them.  The  complete network is not immune to bilateral deviations - and thus a strong pairwise stable network \citep{bloch2009communication} generically does not exist.\footnote{The empty network is immune to bilateral deviations.} 

Starting a contest unilaterally is always a profitable action for a player because she does not take into account, when making this decision, that the new opponent will fight back. We consider an alternative stability concept where we relax this assumption and allow limited forward looking. We assume that a player, when forming a link, takes into account that the new opponent will fight back. However, we still assume that players do not take into account further adjustments in other players' strategies that may be a consequence of the new link creation. In that sense, players are \textit{limited farsighted}. We define a \textit{limited farsighted pairwise stable network} (LFPS) as a network that is immune to both unilateral and bilateral deviations of limited farsighted players. 

The limited farsightedness assumption provides tractability, and we believe it is also sensible. Indeed, calculating the effects of a change in the network structure on the equilibrium investment profiles is a very complicated nonlinear problem even when the number of nodes in the network is small. Assuming that players are able to make these calculations, for any contemplated choice of opponents and efforts, would be a very strong assumption about their cognitive abilities. Moreover, recent experimental results suggest that, even in a simple bilateral Tullock contest game, players find it very difficult to anticipate opponents' best responses to their actions, and  even when the action of an opponent is known, they fail to calculate their own best response correctly \citep{masiliunas2014behavioral}. In \citep{kirchsteiger2016limited} authors find evidence in favor of the limited farsightedness in an experimental investigation of much simpler network formation games. 

 We show that in every LFPS non-empty network, players must be partitioned in $M \geq 2$ partitions of \textit{unequal} sizes. Members of the same partition do not have links with each other, but have links with all other players in the network. So, even though players are ex-ante homogeneous, \textit{a stable non-empty network is necessarily asymmetric.}  To understand this result, the concept of a player's strength is useful. In the model, a player is strong when her opponents are weak. Thus, the strength of a player can be seen as a recursive measure of her position in the contest network. In the model, a strong player\footnote{Strength is an endogenous concept in our model, and it is a function of the global network structure.} has an incentive to form a link with a weak player, provided that the difference in their strengths is large enough. This is simply because it is cheaper to win a contest with a weak player than with a strong player. As the number of opponents of a weak player increases, she becomes relatively weaker and therefore a more attractive opponent for other strong players. This mechanism leads to network configurations with potentially three types of players in a stable network. The strongest players in the society (\textit{attackers}) win all of their contests. \textit{Hybrid} type players are strong enough to win against the weakest players,  but are, at the same time, weak enough to be attractive opponents for the strongest players. The weakest players are \textit{victims}. They lose  all of their contests. We find that there will always be a single class of attackers and a single class of victims in a stable non-empty network. The remaining $M-2$ classes, if they exist, must be classes of hybrids. There are no links between the members of the same class in a LFPS network,  whereas there is a link between any two players from different classes.  The class of attackers is the largest class, while the class of victims is the smallest class. 

Studying contests on networks is a challenging task. There is no closed-form solution for the equilibrium of the contest game on a given network, and the research effort so far has been mostly focused on the existence and uniqueness of the Nash equilibrium and analysis of special cases in which a very specific network structure is assumed \citep{Franke2015, matros2018contests, xu2019networks}. In this paper we study an even more complex problem of contest network formation, in which players choose both the effort for each contest they are involved in (as in the game on a given network) and their opponents. Even though there is no explicit solution of the game when the network is fixed, we are able to decribe the structure of stable networks  under different notions of stability.

Finally, we examine how the level of inefficiency in a stable network, as measured by the total contest (wasteful) spending, depends on the parameters of the model. We mention a few interesting results.  When the stable network is asymmetric enough, an increase in the likelihood of a draw (i.e. a third party mediation intervention) may actually lead to an increase in the overall contest spending. On the other hand, when the network is not very asymmetric, an increase in the likelihood of a draw will always lead to a decrease in the contest spending.  We also describe how an idiosyncratic cost shock (i.e. a third party intervention affecting only one player in the network) propagates through the network, and affects the investments of other players.

\subsection{Related work}
\vspace*{-10pt}

This paper contributes to a broad literature of network formation \citep{jackson1996,bala2000noncooperative,herings2009farsightedly}. The issue of network formation has been recognized and studied in a number of settings, including provision of public goods \citep{galeotti2010law, kinateder2017public}, favor exchange \citep{masson2018model},  collaboration between firms \citep{goyal2008hybrid}, and trade \citep{mauleon2010networks}. For a survey of network formation literature see \citep{mauleon2016network}. In particular, our paper contributes to the literature of weighted network formation in which players choose their investment levels specifically for each link. Several other papers study network formation with link-specific actions. \citep{goyal2008hybrid} studies the formation of R\&D networks between firms that also compete in a market. \citep{bloch2009communication} and the follow-up work by \citep{deroian2009endogenous} study a model of network formation in which agents choose how much to invest in each of their communication links. \citep{baumann2017} develops a model of friendship formation in which players choose how much time to devote to socializing with each of their friends, and how much time to spend alone. All of these papers consider a bilateral interaction which is directly beneficial to both parties (i.e. collaboration, communication, socializing). Our model deals with a qualitatively different type of interactions - contests. Moreover,  in the model presented in this paper, neighbors' actions are neither strategic substitutes nor strategic complements. Since one of the  stability concept we use in the paper (LFPS) assumes players are forward-looking, our paper contributes to the branch of the literature on network formation that relaxes the assumption that agents are myopic when forming connections \citep{herings2009farsightedly, grandjean2011connections,zhang2013farsighted, kirchsteiger2016limited, herings2019stability}

 Studying contests has a long tradition in economics, starting from seminal works  on rent seeking \citep{tullock1967}, and lobbying \citep{krueger1974}. A recent comprehensive review of the literature on contests can be found in \citep{corchon2018}. This literature is mostly concerned with the analysis of single battle $n$-lateral contest games, or  multi-battle contests, with specific  (symmetric) contest structures \citep{kvasov2007contests,konrad2009multi}. In this paper we  consider a much more complex environment in which a population of players plays interrelated bilateral contests on a general network structure. We model the bilateral contest game following \citep{nti1997comparative,amegashie2006contest} and \citep{ blavatskyy2010contest}.   Since, in our model, the transfer size does not depend on the number of opponents (same as in \citep{Franke2015}), our model captures the situations in which the prize is relational. For instance, this is may be the case in lobbying \citep{hillman1989politically},  appropriation of property rights \citep{mackenzie2013restricted}, and litigation \citep{sytch2014friends}. In our comparative statics exercises we  show that accounting for the network structure of bilateral contests when studying the effects of changes in the parameters of the contest model on the equilibrium outcomes (as done for instance in \citep{nti1997comparative} ),  may lead to qualitatively different results compared to the case when the network structure is ignored.  

The importance of the structure of a contest network has recently been acknowledged in the literature, both theoretically and empirically. There are several papers that study contests on a given network structure.  \citep{Franke2015} develops a model in which players play bilateral contests with their neighbors on a given graph.  Using the variational inequality approach \citep{xu2019networks} generalizes the model of \citep{Franke2015} to multilateral contests on a given hypergraph. In a related paper \citep{matros2018contests} studies a model in which there are two types of nodes: players and contests. Players connected to the same contest play a multilateral contest game. \citep{dziubinski2016dynamic} studies a model in which connections between players determine potential conflicts, and agents sequentially choose if they wish to start a conflict with their neighbors and the effort level they are going to exert. \citep{konig2017networks} studies a model of conflict on a given network with two types of links: enmity links and  alliance links. All agents participate in a single $n$-lateral contest and the network structure is built in the payoff function. They also conduct an econometric analysis using data on the Second Congo War, and find that there are significant fighting externalities across contests.  None of these models consider network formation. The model in this paper endogenizes the network structure in the model of \citep{Franke2015}, and provides new comparative static results.

There are a few papers that are concerned with formation of contest networks. \citep{jackson2015networks} studies the impact of trade on the formation of interstate alliances and on the onset of war. They show that trade can mitigate conflict. \citep{grandjean2017endogenous} studies a network formation model in which agents form a network of collaboration links and then engage in a single $n$-lateral contest. The position of a player in the collaboration network determines her valuation of the contest prize. The closest paper to ours is \citep{hiller2016friends}, which  develops a model of network formation in which players form  positive links (friendship) and negative links (enmity). A negative link indicates that players are involved in a contest. However,  contrary to our model,  in  \citep{hiller2016friends} players do not choose the fighting effort, and therefore the model in \citep{hiller2016friends} is not a model of weighted network formation. \citep{goyal2016conflict} provides a comprehensive review of the literature on conflict and networks.

The rest of the paper is organized in 5 sections. Section \ref{sec:TheModel} lays out the model. In Section \ref{sec:Analysis} we characterize efficient and LFPS networks. In Section \ref{sec:CompStatic} we present comparative static results.  Section \ref{sec:Discussion} provides a characterization of Nash stable networks and strongly pairwise stable networks. We conclude in Section \ref{sec:Conclusion}. All the proofs are given in Appendix A.
%%%%%%%%%%%%%%%%%%%%%%%%%%%%%%%%%%%%%%%%%%%%%%%%%%%%%%%%%%%%%
\section{Model}\label{sec:TheModel}
%%%%%%%%%%%%%%%%%%%%%%%%%%%%%%%%%%%%%%%%%%%%%%%%%%%%%%%%%%%%%
\vspace*{-10pt}

In this section we describe our network formation model. In the next paragraph we informally summarize the model. In Subsection \ref{subsec:Setup} we formally introduce the notion of a contest network, and describe the model. In Subsection \ref{subsec:EfficiencyAndStability} we define stable networks. 

Informally, we  consider a population composed of a finite number of \textit{ ex-ante identical players}. Players can engage in bilateral contests. The outcome of a contest is probabilistic, and depends on costly investments by both parties. The prize of the contest is a fixed transfer from  the defeated to the victor. Individuals choose both with whom to engage in a contest and how much to invest in each of their contests. We are interested in stable social structures that arise from this type of interaction, and how the structure of a stable contest network mediates the effects of various types of shocks and third party interventions.

\subsection{Setup}\label{subsec:Setup}
\vspace*{-10pt}
Denote with $N=\{1,2,...,n\}$ the set of players. Each player $i \in N$ chooses how much to invest in bilateral contests with other players. Strategy of player $i$ is vector $\strategy_i = (s_{i1}, s_{i2},..., s_{i,i-1}, s_{i,i+1},..., s_{in}) \in \Realo^{n-1}$, where $s_{ij}$ denotes the investment of player $i$ in bilateral contest with $j$.

 The expected revenue of a bilateral contest between players $i$ and $j$, $\pi_{ij}(s_{ij}, s_{ji})$, is defined by:
\begin{align}\label{eq:CSF}
\pi_{ij}(s_{ij}, s_{ji}; r) = \frac{\phi (s_{ij})}{\phi (s_{ij})+\phi (s_{ji})+r}\T-\frac{\phi (s_{ji})}{\phi (s_{ij})+\phi (s_{ji})+r}\T.
\end{align}
The expression $\frac{\phi (s_{ij})}{\phi (s_{ij})+\phi (s_{ji})+r} \in \left[0,1\right]$  determines the probability with which $i$ wins the transfer $\T =1$ from $j$, and it defines the  Contest Success Function (CSF) $F:\Realo^2 \rightarrow \left[0,1\right]$.  The specific form of CSF we use in this paper is  introduced in \citep{nti1997comparative}. The technology function $\phi :\Realo \rightarrow \Realo$  in \eqref{eq:CSF} transforms the investment in the contest (i.e. money, effort) into actual means of fighting (i.e. guns, lawyers).
The parameter $r\geq 0$ captures the likelihood of a draw (there is no transfer between players in the event of  a draw). There are many situations in which contests can end without a winner. For instance,  a litigation can end in a mistrial, sport contests often end in a tie, etc. Alternatively, one can interpret $r$ as noise in a transferable contest, using CSF proposed in \citep{blavatskyy2010contest} and modeling  noise as in \citep{amegashie2006contest}. In this paper we refer to $r$ simply as \textit{the likelihood of a draw}.\footnote{For other interpretations of $r$ see \citep{nti1997comparative}.}  A comprehensive review of contest models that allow ties can be found in \citep{corchon2018}. 

We make the following assumption about the technology function $\phi$.
\begin{assumption}\label{ass:Technology}
	Technology function $\phi :\Realo \rightarrow \Realo$  is assumed to be: \textit{(i) continuous and twice differentiable, (ii) strictly increasing and weakly concave, and (iii) $\phi(0) = 0$.}
\end{assumption}

In Assumption \ref{ass:Technology} (i) is assumed for analytical convenience, (ii) imposes non-increasing retruns to scale and (iii) guarantees that zero investment implies zero actual means of fighting.    

The CSF used in \eqref{eq:CSF} is fairly general, and includes CSFs studied in \citep{tullock1980eff, loury1979market, dixit1987strategic} as special cases. In particular,  by setting $\phi$ to be identity mapping and $r=0$ we get the CSF used in \citep{Franke2015}.

 We say that individuals $i$ and $j$ are linked (connected)  or that there is a contest between them  when at least one of them exerts positive effort in fighting the other.  Since efforts are non-negative this will be the case if and only if $s_{ij} + s_{ji} >0$.  Strategy profile $\strategy$  defines (induces)  \textit{weighted network} $g(\strategy)$.\footnote{To simplify notation, we omit dependence on $\strategy$ whenever there is no danger of ambiguity. Strategy profile $\strategy$ defines the adjacency matrix of network $\network(\strategy)$.} Weight $s_{ij}$  is assigned to arc $i\rightarrow j$.   When $i$ and $j$ are linked ($s_{ij} + s_{ji} >0$) we write  $ij \in \network(\strategy)$. Clearly $ij \in \network(\strategy) $ if and only if $ji \in \network(\strategy)$.  In this paper we use the terms \textit{link} and \textit{contest} as synonyms when talking about network $g(\strategy)$. We will use $N_i$ to denote the neighborhood of node $i$, so $N_i = \lbrace{j\in N: ij \in g \rbrace}$, and $d_i = |N_i|$ to denote the degree of node $i$.  An example of contest network is presented in Figure \ref{fig:ExampleNetwork}.
 
 \begin{figure}[H]
 	\begin{center}
 	\includegraphics[scale=1]{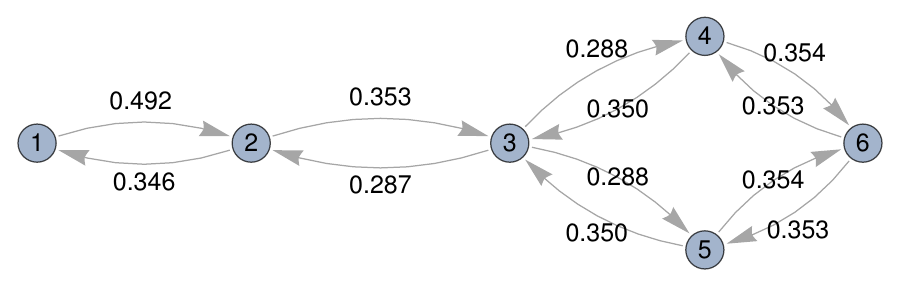}
 	\caption{Weighted network $g(\strategy)$  with 6 nodes. Weights $s_{ij}$ are indicated next to corresponding  arc. To denote that players 2 and 3 are connected we write $23 \in \network$ or $32 \in \network$.   }\label{fig:ExampleNetwork}
 \end{center}
 \end{figure}

The expected payoff of agent $i$ from network $g(\strategy) $ is defined by:
\begin{equation} \label{eq:Payoff}
\pi _{i}(g(\strategy))=\sum_{j\in
	N_i} \pi_{ij}(s_{ij}, s_{ji}; r) -c(w_{i}), 
\end{equation}
where  $$w_i=\sum_{j\in N_i} s_{ij}$$ is the total investment of player $i$ in all of her contests.  Function  $c:\Realo\rightarrow \Realo$ is the cost function. We make the following assumption about the cost fuction. 
\begin{assumption}\label{ass:Cost}
Function $c:\Realo\rightarrow \Realo$ is continuous, twice continuously differentiable, strictly increasing and strictly convex,  with $c(0) = 0$. 
\end{assumption}

% We use $\network (\strategy)$ to denote the non-weighted and undirected network induced by $\wnetwork(\strategy)$. This means that $g_{ij} \in \network(\strategy)$ if $s_{ij}>0 \vee s_{ji} >0$. When strategy profile $\strategy$ results in a network $\network(\strategy)$ we say that $\strategy$ supports $\network(\strategy)$.  We refer to $\network(\strategy)$ as the \textit{network of interactions} and use $g_{ij}$ to denote the contest between $i$ and $j$. 

We conclude this section by specifying what it means to form or destroy a link. Consider strategy profile $\strategy$. Suppose that strategies $\strategy_i$ and $\strategy_j$ are such that $s_{ij} = s_{ji} = 0$. This means $ij \notin g(\strategy)$. We say that player \textit{$i$ starts a contest} with $j$ or that \textit{$i$ forms link $ij$}, when $i$ deviates from strategy $\strategy_i$  to strategy $\hat{\strategy}_i$ such that $\hat{s}_{ij} >0$. 
If, strategies $\strategy_i$ and  $\strategy_j$ are such that $s_{ij} + s_{ji}>0$, and after a (potentially bilateral) deviation of players $i$ and $j$ to strategies $\hat{\strategy}_i$ and $\hat{\strategy}_j$, we have $\hat{s}_{ij} +  \hat{s}_{ji} =0$, we say that \textit{players $i$ and $j$ ended contest} $ij$ or \textit{deleted link $ij$}.

\subsection{Stable networks}\label{subsec:EfficiencyAndStability}
\vspace*{-10pt}
In this subsection  we introduce two concepts of network stability which we employ in this paper. We first define Nash stable networks, and point out why using this standard equilibrium notion may be inadequate for the model we study. Then we introduce limited farsighted pairwise stability (LFPS), which circumvents the shortcomings of Nash stability while still allowing for a reasonable tractability in the analysis. In Section \ref{sec:Discussion} we discuss how LFPS relates to other stability concepts usually employed when studying the formation of weighted networks, and the role of the limited farsightedness assumption.  

%Define the value of network $g(\strategy)$ with:
%\begin{align}\label{eq:ValueNetwork}
%V(g(\strategy)) = \sum_{i=1}^{n}\pi_{i}(g(\strategy)).
%\end{align}	
%We say that network $g(\strategy)$ is efficient if it is a maximizer of the value function $V$. 
%\begin{definition}\label{def:EfficientNet}
%	Network $g(\strategy)$ is efficient if $V(g(\strategy)) \geq V(g(\strategy'))$ for any $\strategy'$.
%\end{definition}	

We define Nash stable networks as in  \citep[Definition 2]{bloch2009communication}:
% To define stable networks, we need to introduce some additional notation. For any strategy profile $\strategy$ we will use $( \devstrategy_i, \strategy_{-i},)$ to denote strategy profile after $i$ deviates from $\strategy_i$ to $\strategy_i'$. Similarly,  we use  $(\devstrategy_i, \devstrategy_j, \strategy_{-i-j})$  to denote strategy profile after $i$ and $j$ jointly deviate from $(\strategy_{i}, \strategy_{j})$ to $(\devstrategy_{i}, \devstrategy_{j})$. 

\begin{definition}[Nash stable networks]\label{def:NashStable}
	A network $g(\strategy)$ is Nash stable if there is no individual $i$ and strategy $\strategy_{i}'$ such that 
	\begin{align*}
		\pi_{i}\left(g(\strategy_i', \strategy_{-i})\right) > \pi_{i} \left( g(\strategy) \right). 
	\end{align*}	
\end{definition}	
 Network $g(\strategy)$ is Nash stable if no player can unilaterally alter her investment pattern and obtain a higher payoff. The Nash equilibrium may not be the most suitable stability concept for our model. There are at least two reasons for this. First, we show that starting a contest is profitable for any player, except in extreme cases.\footnote{See Section \ref{sec:Discussion} for more details.} Thus, a deviation which leads to the formation of a new link is always profitable. Second, a deviation which results in the destruction of a link is never profitable.  The former is a consequence of the lack of forward looking when starting a contest. When players are not farsighted, they  do not take into account that the opponent will \textit{fight back}.  The latter is a consequence of the fact that Nash stability deals only with unilateral deviations.  We discuss these points in more detail in Section \ref{sec:Discussion}, where we provide a characterization of Nash stable networks in Proposition \ref{prop:NashStableNetworks}.

To address the issues pointed out in the previous paragraph, we consider a model in which (i) we assume that when $i$ decides to form a link with $j$, she takes into account the immediate reaction from $j$ (i.e. anticipates that $j$ will fight back), and (ii) we allow for bilateral deviations of players. In the following paragraphs we discuss (i) in more detail. 

Models of network formation usually assume either pure myopia (i.e. Nash stability, pairwise stability) or complete farsightedness (i.e. pairwise farsighted stability) \citep{vannetelbosch2015network}. In our model, pure myopia implies that starting a contest is always profitable. Given the complexity of the network effects, full farsightedness is too strong of an assumption to make. Indeed, even for networks with a small number of nodes, solving for the equilibrium requires finding the roots of a high order polynomial. Thus, calculating all future adjustments in other players' strategies after a deviation is computationally extremely demanding.  Moreover, experimental results suggest that limited farsightedness may be the most accurate way to describe players' behavior in network formation games \citep{kirchsteiger2016limited}. In this paper we adopt a specific form of limited farsightedness, described in the next paragraph. 

Consider strategy profile $\strategy$. Let  $\friends_i = \{j \in N| \; ij \notin g(\strategy)\}$. Thus, $\friends_{i}$ is the set of players with whom player $i$ does not have a contest. Consider a situation in which $i$ contemplates initiating contests with players $j \in L_i \subseteq \friends_i$. We assume that, when  assessing the payoff of starting contest $ij$ with action $s_{ij}$,  player $i$ expects that $j$ will fight back by choosing the best response  $BR(s_{ij})$, given  $j's$ current contest investments $\strategy_j$. This means that,  when $i$ forms links to players from set $L_i$ by deviating from $\strategy_i$ to  $\strategy_i'$,  her expected payoff is $\pi_i \left(g(\devstrategy_i, \eqdev{\strategy}_{L_i}, \strategy_{-i-L_i}) \right)$ where $ \eqdev{\strategy}_{L_{i}} = \left(\eqdev{\strategy}_j\right)_{j \in L_i}$ is such that for each $j \in L_i$:
%
%\begin{itemize}
	%\item[(i)] for each $j \in L_i$ and for $k \neq i$, $\hat{s}_{jk}= s_{jk}$
	%\item[(ii)] for each $j \in L_i$:
\begin{align} \label{eq:AddingALinkS}
\pi_{j} \left(\wnetwork (\strategy_i', \hat{\strategy}_{j}, \strategy_{L_i-j}, \strategy_{-i- L_i})\right) \geq \pi_{j} \left(\wnetwork (\strategy_i', \hat{\strategy}'_{j}, \strategy_{L_i-j}, \strategy_{-i- L_i})\right),
\end{align} 
for each $\eqdev{\strategy}'_j$ with $\eqdev{s}'_{jk} =\eqdev{s}_{jk}= s_{jk}$ for $k \neq i$.
%\end{itemize}
%
%
Here we use $-i -L_i$ to denote all players, except $i$ and players from  $L_i$. We write $L_i - j$ to denote players in $L_i$ except player  $j$.

We are now ready to state the stability concept we use in this paper.
\begin{definition}[Limited Farsighted Pairwise Stable Networks]\label{def:StableNetworks}
	Weighted network $\wnetwork = \wnetwork(\eqstrategy)$ is stable if conditions (U) and (B) hold.
	\begin{itemize}
		\item[(U)] For any player $i\in N$, and any, potentially empty, set $L_i \subseteq \friends_i$, and any strategy $\strategy_i \in \Realo^{n-1}$,
		\begin{align*}
		\pi_{i}\left(\network(\eqstrategy) \right) \geq \pi_i\left( \network \left(\strategy_i, \eqdev{\strategy}_{L_i}, \eqstrategy_{-i-L_i}\right)\right).
		\end{align*}
		\item[(B)] For any pair of players $(i,j)$ such that $ij \in g(\eqstrategy)$, any two sets $L_i \subseteq \friends_i$ and $L_j \subseteq \friends_{j}$,  and  any two strategies $\strategy_i$ and $\strategy_j$ such that $ij \notin g(\strategy_i, \strategy_j, \eqstrategy_{-i-j})$,
		\begin{align*}
		\pi_{i} (\wnetwork(\strategy_i, \strategy_j,\eqdev{\strategy}_{L_i}, \eqstrategy_{-i-j-L_i}) \geq \pi_{i} (\wnetwork(\eqstrategy)) \Rightarrow  \pi_{j} (\wnetwork(\strategy_j, \strategy_i,\eqdev{\strategy}_{L_j}, \eqstrategy_{-j-i-L_j}) < \pi_{j} (\wnetwork(\eqstrategy)).
		\end{align*}
	\end{itemize}
\end{definition}

Part (U) of Definition \ref{def:StableNetworks} states that no player $i \in N$ has an incentive to unilaterally deviate and change her pattern of contest investments. The important assumption there is that if the deviation entails the onset of a contest with player $j$, player $i$ takes into account that $j$ may fight back, as discussed in the paragraph preceeding equation \eqref{eq:AddingALinkS}. Part (B) of Definition \ref{def:StableNetworks} states no two players find it profitable to jointly deviate by deleting the link between them, while at the same time potentially adjusting their strategies in other contests or forming new links.
Thus, a bilateral deviation may include the deletion of a link, and formation of many other links. This is a feature shared with models of network formation studied in  \citep{goyal2007}, \citep{bloch2009communication}, and \citep{dev2018group}.

It is clear that, in order to start a contest (create a link), the action of one party suffices. This is a natural property, since, for instance, to start a litigation process it is sufficient that one side files a lawsuit. On the other hand, to end  contest $ij$,  both players $i$ and $j$ must choose zero investment. In other words, to make peace, both sides must choose not to fight.  Therefore, in our model, the creation of a link is the result of an unilateral action, while  the destruction of a link is a result of  a bilateral action.

%%%%%%%%%%%%%%%%%%%%%%%%%%%%%%%%%%%%%%%%%%%%%%%%%%%%%%%%%%%%%%%%%%%%%%
%     					NETWORK FORMATION							 %
%%%%%%%%%%%%%%%%%%%%%%%%%%%%%%%%%%%%%%%%%%%%%%%%%%%%%%%%%%%%%%%%%%%%%%

\section{Analysis}\label{sec:Analysis}
\vspace*{-10pt}
We start our analysis by outlining important properties of the network formation game in Section \ref{ss:PrelimConsiderations}. We  then turn our attention to the analysis of stable networks in Section \ref{subsec:StableNetworks}.
 
\subsection{Preliminary considerations}\label{ss:PrelimConsiderations}
\vspace*{-10pt}

We begin our analysis by outlining the properties of the payoff function and the nature of strategic interactions.
It is straightforward to verify that the payoff function \eqref{eq:Payoff} of player $i$ is increasing and concave in $\strategy_i$, and decreasing and convex in $\strategy_{-i}$. The sign of the first and the sign of the second derivative of the payoff function with respect to $r$ depend on $\strategy_i$ and $\strategy_{-i}$. When a player's probability of winning is greater than the probability of losing in all contests, the payoff function will be decreasing and convex in $r$. Similarly, if the probability of winning is lower than the probability of losing in all of her contests, the payoff function is increasing and concave in $r$.\footnote{When $r=0$ the payoff function is not defined at the point $s_{ij}=s_{ji} =0$, however, this does not affect our results.}  The best reply curves of the bilateral contest game are nonlinear and non-monotonic. The bilateral contest game is neither a game of strategic complements nor strategic substitutes.  To the best of our knowledge, the only papers that consider this type of bilateral strategic interactions on networks are \citep{Franke2015,matros2018contests, xu2019networks} and \citep{bourles2017altruism}. Neither of these papers studies network formation.

\subsection{Stable networks}\label{subsec:StableNetworks}
\vspace*{-10pt}
We now turn to describing LFPS network architectures. We start by introducing  a few useful concepts and observations. Then through a series of intermediate results arrive at our main result in this section -  a description of stable networks. 

 For our analysis it is useful to introduce unweighted graph $\bnet$ assigned to network $g(\strategy)$. We refer to $\bnet$  as the structure of contest network $g(\strategy)$ since it fully describes the set of contests in the population. 
		
		\begin{definition}\label{def:Graphg}
			Graph $\bnet = (N, E)$ with set of vertices $N$ and set edges $ E \subseteq \{\{i,j\} :i,j \in N \land i \neq j\}$  is said to represent the structure of network $\network(\strategy)$  or that it is induced by $\strategy$ if $ij \in E \Leftrightarrow ij \in \network(\strategy)$.  
		\end{definition}
	Graph $\bnet$ representing the structure of network from Figure \ref{fig:ExampleNetwork} is presented in Figure \ref{fig:ExampleGraph}.
	\begin{figure}[H]
		\begin{center}
		\includegraphics[scale=1]{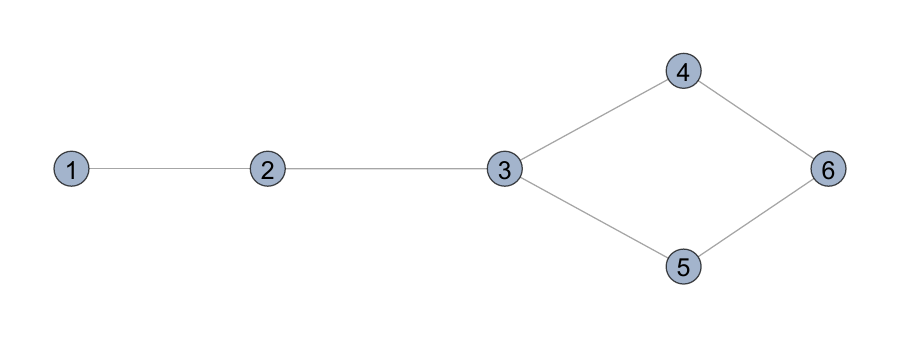}
		\end{center}
	\caption{Graph representing the structure of weighted network from Figure \ref{fig:ExampleNetwork}. }\label{fig:ExampleGraph}
	\end{figure}
		
		To indicate that two nodes are connected in $\bnet$ we, abusing notation, write $ij \in \bnet$.  Thus graph $\bnet$  induced by $\strategy$ can be thought of as an unweighted and undirected projection of network $g(\strategy)$. \textit{To avoid confusion, we use the word network to refer to the  weighted network $g = g(\strategy)$, and the word graph to refer to the unweighted network $\bnet$.}\footnote{Terms \textit{graph} and \textit{network} are practically synonyms. The term graph is more often used to denote a mathematical object, while the term network is more often used to denote the graph that represents a real-world object \citep{jackson2008,estrada2015first} }

The following proposition states that if  $g(\eqstrategy)$ is a LFPS network then there does not exist another LFPS network $g(\strategy ')$ such that $\eqstrategy \neq \strategy '$ with the property that $ij \in \network(\eqstrategy) \Leftrightarrow ij \in  \network(\strategy')$.   Hence,  \textit{without ambiguity, we can talk about LFPS stability of graph $\bnet$.} Definition \ref{def:StableStructure} formally introduces the  notion of stable  graph.

\begin{proposition}\label{prop:OneStrategyPerLFPS}
Let $g(\eqstrategy)$ be a LFPS network. If $g(\strategy')$ is a LFPS  and such that $ij \in g(\strategy')$  if and only if $ij \in g(\eqstrategy)$ then $\strategy' = \eqstrategy$. If additionally $\phi'(0) = \infty$,  or  $r \rightarrow 0$ then  $\es_{ij}>0$ and $\es_{ji} >0$ whenever $ij \in g(\eqstrategy)$. 
\end{proposition}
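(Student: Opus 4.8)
The plan is to prove the two assertions separately, in both cases exploiting the fact that an LFPS profile must in particular satisfy the unilateral-stability condition (U).

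For the first (uniqueness) claim, I would first note that among the deviations permitted in (U), any strategy $\strategy_i$ supported on the current neighbourhood $N_i$ (i.e. with $s_{ik}=0$ for $k\notin N_i$) creates no new link, so the associated set is $L_i=\emptyset$ and (U) collapses to the plain inequality $\pi_i(\network(\eqstrategy))\ge \pi_i(\network(\strategy_i,\eqstrategy_{-i}))$. Hence each player's investment vector is a best response in the contest game played on the \emph{fixed} graph $\bnet$; that is, the restriction of $\eqstrategy$ to $\bnet$ is a Nash equilibrium of that fixed-network game, and the same holds for $\strategy'$. It then remains to invoke uniqueness of this fixed-network equilibrium. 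I would record that $\pi_i$ is \emph{strictly} concave in $\strategy_i$: each $\pi_{ij}$ is an increasing strictly concave function of $\phi(s_{ij})$ composed with the increasing concave $\phi$, so $\partial^2\pi_{ij}/\partial s_{ij}^2<0$ even when $\phi''\equiv 0$, and since $-c(w_i)$ is concave, summing over the distinct coordinates $s_{ij}$ yields strict concavity in $\strategy_i$. Strict concavity gives a unique best response, whence uniqueness of the fixed-network equilibrium follows from the standard (Rosen-type, diagonally strictly concave) argument for Tullock contests. Since $\eqstrategy$ and $\strategy'$ induce the same graph and are both equilibria of this uniquely-solvable game, $\eqstrategy=\strategy'$ (coordinates off $\bnet$ being $0$ in both).

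For the positivity claim, take $ij\in\network(\eqstrategy)$, so $\es_{ij}+\es_{ji}>0$; without loss $\es_{ji}>0$, and I would argue by contradiction that $\es_{ij}=0$ is impossible, which combined with the symmetric argument forces both investments to be positive. The ingredients are the boundary optimality of $i$ at $s_{ij}=0$ and the interior optimality of $j$ at $\es_{ji}>0$, both consequences of (U). Differentiating the payoff and using $\phi(0)=0$, the marginal gain to $i$ of entering at $s_{ij}=0$ is
\[
\left.\frac{\partial \pi_i}{\partial s_{ij}}\right|_{s_{ij}=0}
= \frac{\phi'(0)\,\bigl(2\phi(\es_{ji})+r\bigr)}{\bigl(\phi(\es_{ji})+r\bigr)^{2}} - c'(\ewi),
\]
and optimality of $\es_{ij}=0$ requires this to be nonpositive. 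If $\phi'(0)=\infty$, the first term is $+\infty$ (as $\es_{ji}>0$ keeps the denominator finite and positive), immediately contradicting the nonpositivity requirement, so $\es_{ij}>0$. If instead $r\to 0$, I would play $i$'s boundary condition against $j$'s interior condition: the displayed inequality, together with $2\phi(\es_{ji})+r\ge\phi(\es_{ji})+r$, gives $\phi'(0)/(\phi(\es_{ji})+r)\le c'(\ewi)$; since equilibrium total investments are uniformly bounded (comparing $\pi_i(\eqstrategy)$ with investing zero yields $c(\ewi)\le 2(n-1)$, hence $c'(\ewi)$ is bounded independently of $r$), this forces $\phi(\es_{ji})+r$, and therefore $\es_{ji}$, to stay bounded away from $0$. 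But then $j$'s interior first-order condition equates her marginal gain $r\phi'(\es_{ji})/(\phi(\es_{ji})+r)^2\to 0$ to $c'(\ewj)>0$, a contradiction for small $r$. Either way $\es_{ij}=0$ is untenable, so every active link carries strictly positive investment on both sides.

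The main obstacle is, for the first claim, supplying the uniqueness of the fixed-network contest equilibrium: because the bilateral game is neither sub- nor supermodular, one cannot lean on monotone-comparative-statics and must instead extract uniqueness from strict concavity via a diagonal-strict-concavity argument. For the second claim the delicate regime is $r\to 0$, where the marginal incentive to enter a contest blows up near zero; consequently $i$'s condition alone is inconclusive, and the argument must couple $i$'s boundary optimality with $j$'s interior optimality, which is precisely where the uniform bound on equilibrium investments is needed to drive the contradiction.
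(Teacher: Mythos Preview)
Your argument is correct and, for the uniqueness part, essentially mirrors the paper: both of you observe that (U) with $L_i=\emptyset$ reduces to the Nash condition for the contest game $C(\bnet)$ on the fixed graph, and then invoke Rosen-type diagonal strict concavity to conclude that this fixed-network equilibrium is unique. The case $\phi'(0)=\infty$ is also handled identically.

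The interesting divergence is in the $r\to 0$ interiority argument. The paper restricts to the benchmark $\phi(x)=x$, $c(x)=x^2$ and argues in the opposite order: from $j$'s interior first-order condition $r=2\bw_j(r+\bs_{ji})^2$ it first extracts $\bs_{ji}\to 0$, and then shows that $i$'s marginal benefit $(r+2\bs_{ji})/(r+\bs_{ji})^2$ blows up, violating $i$'s boundary optimality. You instead start from $i$'s boundary condition, combine it with the uniform payoff bound $c(\ewi)\le 2(n-1)$ to pin $\phi(\es_{ji})$ away from zero, and then let $j$'s interior condition fail because its left-hand side is $O(r)$. Your route is more general: it goes through for any $\phi,c$ satisfying Assumptions~1--2 (using that concavity and strict monotonicity of $\phi$ force $0<\phi'(0)<\infty$ in this branch, so the lower bound $\phi(\es_{ji})+r\ge \phi'(0)/c'(\ewi)$ is nontrivial), whereas the paper's computation is tied to the linear--quadratic specification. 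The paper's version, on the other hand, is slightly more concrete and avoids the auxiliary uniform bound on $c'(\ewi)$.
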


\begin{definition}\label{def:StableStructure}
	Graph $\bnet$ is said to be LFPS stable when there exists a strategy profile $\eqstrategy$ such that $g(\eqstrategy)$ is LFPS, and $\eqstrategy$ induces $\bnet$. 
\end{definition}

In the remaining part of the analysis of LFPS networks we will, for mathematical convenience, maintain the following assumption. 
  \begin{assumption} \label{ass:PhiAndr}
		We assume that $\phi'(0) = \infty$ or $r \rightarrow 0$.     
	\end{assumption}

The importance of Assumption \ref{ass:PhiAndr} is that it guarantees,  according to Proposition \ref{prop:OneStrategyPerLFPS}, that both players involved invest positive amount in each bilateral contest, and hence we do not have to deal with corner solutions.  For instance, the contest success function considered in \citep{Franke2015} satisfies Assumption \ref{ass:PhiAndr}.  

We now define the strength of a player.

\begin{definition}\label{def:Strength}
	Consider  $g(\strategy)$ that satisfies (U) for $L_i= \emptyset$. Player $i \in N$ is said to be stronger than player $j \in N$ in $g$  whenever $w_{i} < w_{j}$. 
\end{definition}

Definition \ref{def:Strength} is motivated with the result that for two players $i$ and $j$, such that $ij \in \network(\strategy)$ and $g(\strategy)$ satisfies (U)  for $L_i = \emptyset$,  $i$ wins contest $ij$ whenever $w_i < w_j$. We state and prove this result formally in Proposition \ref{prop:StrenghtMotivation} in Appendix A.\footnote{This result, in the context of the game on a fixed network, appears in \cite[Proposition 2]{Franke2015} for the case when $r=0$,  $\phi(x) =x$ and $c(x) = x^2$.} 
 This seemingly counter-intuitive result is a direct consequence of the convexity of the cost function -- high $w_j$ implies that both the marginal cost and per unit cost is high for player $j$. Therefore $j$ spends less than $i$ in contest $ij$, although she spends more overall in all of her contests.  Intuitively, high $w_j$ may be attributed  to strong opponents or a high number of opponents. For instance, if  $N_i \subset N_j$ then, additionally to fighting all rivals of $i$, $j$ fights with other opponents as well. Hence, it is not surprising that we find that in this case $j$ spends more resources overall, and therefore is weaker than $i$. Since  LFPS network $g(\eqstrategy)$ satisfies (U) for $L_i =\emptyset$ by definition, $\ew_i$ determines the strenght of player $i$ in a stable network.  It is important to note that $\ew_i$  is an equilibrium outcome determined by the underlying network topology. 

It is useful to partition players in a stable network with respect to their strengths.  To do that. sort  $(\ew_i)_{i\in N}$  starting from the lowest $(\ew_{1} < \ew_{2}<...< \ew_{M}),$  where $M \leq n$ is the number of different total equilibrium investment levels.  We use $\Class_i$ to denote the class of players that have the $i$-th lowest total investment level, and use  $ m_i = |\Class_i|$ do denote the cardinality of class $i$.
\begin{definition}\label{def:3TypesPlayers}
	Player $a \in \Class_{i}$ is an attacker if all of her contests are with agents from $\overline{\Class}_{i}=\{\Class_{j}|j>i\}.$ Player $a \in \Class_{i}$ is a hybrid if there exist players $b$ and $c$ such that $ab, ac \in g$ and $\ew_{b}> \ew_{a} > \ew_{c}$. Player $a \in \Class_{i}$ is a victim if she has all of her contests with players from  $\underline{\Class}_{i}=\{\Class_{j}|j<i\}.$
\end{definition}

Definition \ref{def:3TypesPlayers} acknowledges the fact that a contest between two players of the same strength is not profitable to any of the players involved, and hence cannot be part of a stable network.

If $j$ is weaker than $i$ in stable network $g(\eqstrategy)$ and $ij \in g(\eqstrategy)$,  there exists a  bilateral deviation which is profitable for $j$ in which $i$ and $j$ destroy link $ij$. This is simply because $j$ loses the contest $ij$ and thus prefers not to engage in it (see Proposition \ref{prop:StrenghtMotivation} in Appendix A). Therefore, \textit{we say that $i$ controls link $ij$ if $i$ is stronger than $j$}. This in particular implies that in a stable network every attacker  must receive a positive payoff. If this were not true for some attacker $i$ and contest $ij$, then a joint deviation in which $i$ and $j$  choose $s_{ij} = s_{ji} =0$ (delete link $ij$) would be profitable for both $i$ and $j$.  

In order to study the network formation, it is important to be able to compare contests in the network. We now state a result which enables us to do that. 

\begin{proposition}\label{prop:RelativeSpendingContest}
	Let $g(\eqstrategy)$ be a LFPS network. Suppose $a \in \Class_i, \; b \in \Class_j, \; c \in \Class_k$ such that  $i<j<k$ and $ab \in g, \;ac \in g, \;bc \in g$. Then $\es_{ab} > \es_{ac}$, $\es_{ba} > \es_{ca}$, $\es_{ca} < \es_{cb}$ and $\es_{ac} > \es_{bc}$. 
	Furthermore,  $\pi_{ac}^* > \pi_{ab}^*$.
\end{proposition}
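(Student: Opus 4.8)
My plan is to reduce every comparison to a two-variable comparative-statics exercise for a single bilateral contest, exploiting the fact that, under Assumption \ref{ass:PhiAndr}, all investments are interior and each player's marginal return is equalized across her contests. First I would record the first-order condition governing the investment $\es_{ij}$ of player $i$ in contest $ij$. Writing $D=\phi(s)+\phi(t)+r$ and
\[
MV(s,t)\;=\;\frac{(2\phi(t)+r)\,\phi'(s)}{\big(\phi(s)+\phi(t)+r\big)^{2}}\;=\;\frac{\partial \pi_{ij}}{\partial s_{ij}},
\]
interior optimality forces $MV(\es_{ij},\es_{ji})=c'(\ew_i)$ in every contest $ij\in g(\eqstrategy)$. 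Hence $(\es_{ij},\es_{ji})$ solves the pair $MV(s,t)=c'(\ew_i)$, $MV(t,s)=c'(\ew_j)$, which depends on the two players only through their marginal costs. Setting $\lambda_a=c'(\ew_a)$, $\lambda_b=c'(\ew_b)$, $\lambda_c=c'(\ew_c)$, strict convexity of $c$ together with $\ew_a<\ew_b<\ew_c$ gives $\lambda_a<\lambda_b<\lambda_c$, and I may write $\es_{ab}=S(\lambda_a,\lambda_b)$, $\es_{ba}=S(\lambda_b,\lambda_a)$, and similarly for the remaining contests, where $S(\lambda,\mu)$ is the equilibrium investment of the player with own marginal cost $\lambda$ against an opponent of marginal cost $\mu$. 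Existence, uniqueness and smoothness of this bilateral equilibrium (established earlier) let me treat $S$ as a $C^{1}$ function along the relevant paths.

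The next step is the sign analysis of $S$. Two facts drive everything: concavity of $\pi_{ij}$ in own effort gives that the derivative of $MV$ in its first argument is negative, while a direct computation yields that the derivative of $MV(s,t)$ in its second argument equals $2\phi'(s)\phi'(t)\big(\phi(s)-\phi(t)\big)/D^{3}$, so it has the sign of $\phi(s)-\phi(t)$ — positive for the player investing more (the winner) and negative for the one investing less (the loser). By Proposition \ref{prop:StrenghtMotivation}, and since lower $\ew$ means lower marginal cost, the player with the smaller $\lambda$ is precisely the winner. Applying the implicit function theorem to the two-equation system, the Jacobian has negative diagonal entries and off-diagonal entries of opposite sign, hence positive determinant $\det J>0$; Cramer's rule then gives $\partial S/\partial\lambda<0$ (raising one's own cost always lowers one's own effort) and $\partial S/\partial\mu$ with the sign of $\phi(t)-\phi(s)$, i.e.\ negative when the player wins and positive when she loses. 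Integrating these signed derivatives along the appropriate one-parameter paths — and noting the winner/loser status is constant along each, since $\lambda_a$ is always smallest and $\lambda_c$ always largest — delivers the four inequalities: the opponent-cost effect (with $a$ the winner throughout) gives $\es_{ab}=S(\lambda_a,\lambda_b)>S(\lambda_a,\lambda_c)=\es_{ac}$, and (with $c$ the loser throughout) $\es_{ca}=S(\lambda_c,\lambda_a)<S(\lambda_c,\lambda_b)=\es_{cb}$; the own-cost effect gives $\es_{ba}=S(\lambda_b,\lambda_a)>S(\lambda_c,\lambda_a)=\es_{ca}$ and $\es_{ac}=S(\lambda_a,\lambda_c)>S(\lambda_b,\lambda_c)=\es_{bc}$.

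For the payoff statement I would study $\Pi(\lambda,\mu)=\pi_{ij}(S(\lambda,\mu),S(\mu,\lambda);r)$, the equilibrium contest revenue of the cost-$\lambda$ player, and show $\Pi(\lambda_a,\cdot)$ is strictly increasing; since $\lambda_c>\lambda_b$ this yields $\pi_{ac}^{*}=\Pi(\lambda_a,\lambda_c)>\Pi(\lambda_a,\lambda_b)=\pi_{ab}^{*}$. Using $\partial\pi_{ij}/\partial s=MV(s,t)=\lambda$ and, from the zero-sum revenue structure $\pi_{ij}=-\pi_{ji}$ combined with the opponent's first-order condition, $\partial\pi_{ij}/\partial t=-\mu$, the total derivative is
\[
\frac{d\Pi}{d\mu}\;=\;\lambda\,\frac{\partial S(\lambda,\mu)}{\partial\mu}\;-\;\mu\,\frac{\partial S(\mu,\lambda)}{\partial\mu}\;=\;\frac{-\lambda\,MV_t(s,t)-\mu\,MV_s(s,t)}{\det J},
\]
where $MV_s,MV_t$ denote the partials of $MV$ in its first and second argument. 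Substituting the explicit expressions and simplifying, the numerator collapses to
\[
\frac{(2\phi(t)+r)\,\phi'(t)}{D^{4}}\Big(2\phi'(s)^{2}-(2\phi(s)+r)\phi''(s)\Big),
\]
which is strictly positive because $\phi'>0$, $\phi''\le 0$, $r\ge 0$, and interiority forces $\phi(t)>0$. Hence $d\Pi/d\mu>0$, completing the proof.

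I anticipate the main obstacle to be this last step. The two terms of $d\Pi/d\mu$ have opposite signs — the winner both lowers her own effort and faces a less aggressive opponent — so the inequality is not sign-definite term by term and genuinely requires the algebraic cancellation above, whose positivity rests squarely on the weak concavity of $\phi$. A secondary technical point is justifying the path-integration of the comparative-statics derivatives, namely that along each path the bilateral equilibrium stays interior and $\det J$ stays strictly positive; both hold because the two marginal costs never coincide on the relevant paths, so the two players are never tied.
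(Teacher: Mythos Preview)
Your proposal is correct and, for the four effort inequalities, is essentially the paper's own argument: both parameterize the bilateral FOC system by the two players' marginal costs (you use $\lambda=c'(\ew)$ directly, the paper uses $\ew$ and then $c'(\ew)$, which is the same thing by strict convexity of $c$), apply the implicit function theorem, and read off the signs of the own-cost and cross-cost derivatives. Your observation that $\det J>0$ because the two off-diagonal entries $MV_2(s,t)$ and $MV_2(t,s)$ have opposite signs is exactly the mechanism behind the paper's positivity of its denominator $Den$.

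For the revenue comparison $\pi_{ac}^*>\pi_{ab}^*$ you take a different route from the paper. The paper first manipulates the FOCs into the semi-closed form
\[
\pi_{ij}^{*}\;=\;1-\frac{2\phi'(\esji)c'(\ew_i)}{\phi'(\esij)c'(\ew_j)+\phi'(\esji)c'(\ew_i)}
\]
and then asserts monotonicity in $\ew_j$; strictly speaking this still requires tracking how $\esij,\esji$ move with $\ew_j$, which the paper leaves implicit. Your argument instead computes the total derivative $d\Pi/d\mu$ directly, using the two FOCs to evaluate $\partial\pi_{ij}/\partial s=\lambda$ and $\partial\pi_{ij}/\partial t=-\mu$, and then carries out the algebraic cancellation to obtain the manifestly positive expression
\[
\frac{(2\phi(t)+r)\phi'(t)}{D^{4}}\bigl(2\phi'(s)^{2}-(2\phi(s)+r)\phi''(s)\bigr)\cdot\frac{1}{\det J}.
\]
This is a cleaner and more self-contained verification of the same fact: it makes explicit that the inequality rests on $\phi''\le 0$ and interiority, and avoids having to disentangle the indirect effects in the paper's closed form. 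Your caveat about interiority and $\det J>0$ along the integration paths is well placed and is handled, since under Assumption~\ref{ass:PhiAndr} the bilateral system has a strictly positive solution for every pair $(\lambda,\mu)$ with $\lambda\neq\mu$, and the winner/loser ordering is fixed along each path.
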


Proposition \ref{prop:RelativeSpendingContest}  states that a strong player ($a$) engaged in contests with players weaker than her ($b$ and $c$), spends less, and has a less intensive contest with weaker players among her opponents. On the other hand, a weak player ($c$) spends less in the contest with the stronger of the two opponents (both stronger than her). Intuitively, this result relies on the following two observations. First,  the resources are more costly on the margin for weak players.  Second, the best reply functions in a bilateral contest are nonmonotonic -- $s_{ij}$ is increasing in $s_{ji}$ when $s_{ij} > s_{ji}$, and is decreasing in $s_{ji}$ when $s_{ij} < s_{ji}$.  Having these two points in mind, compare for instance contests $ab$ and $ac$. As the strongest player, $a$ wins in both contests. Since $b$ is stronger than $c$, the resources for $b$ are cheaper on the margin,  leading to  $s_{ba}^* > s_{ca}^* $. Since the best reply of $a$ increases with the efforts of $b$ and $c$, $a$ will spend more in the contest with $b$ than in the contest with $c$ ($s_{ab}^* > s_{ac}^*$).  
Player  $a$ spends more in contest with $b$ than in contest in $c$, but, on the other hand,  $b$ spends more than $c$ in contest with $a$.  Thus, it is a-priory not clear which contest, $ab$ or $ac$, brings higher benefit to $a$. The last part of the theorem states that player $a$ earns higher expected revenue from contest $ac$. Since she also spends less in this contest, the contest with the weaker player of the two is more beneficial for $a$.

We now turn to the identifying necessary conditions  for stability of network $g(\eqstrategy)$, which are stated in Proposition \ref{prop:FormationMain}. The formal arguments leading to the proof of Proposition \ref{prop:FormationMain} are developed through a series of intermediate results. We first argue that a nonempty stable network must be connected, and then -- by considering profitable deviations of attackers, hybrids, and victims -- we identify network structures that can be stable. In the next few paragraphs we provide the main intuition behind these results. Formal statements and proofs of the auxiliary results are relegated to Appendix A. 

 Our first observation is that a player prefers to be in contest with weaker opponents, since resources are more costly (on the margin) to weaker opponents, and therefore they fight back with less intensity.  This implies that  $\wnetwork(\strategy)$  cannot be stable if, for some player $i$ and two players $j$ and $k$ such that $w_i < w_{j} \leq w_{k}$, we have $ij \in \network(\strategy)$ and $ik \notin \network(\strategy)$. If this were the case,  then it would be profitable for $i$  and $j$ to jointly deviate by destroying link $ij$ and for $i$ to form the link with player $k$ instead of the link with $j$. Since this result is important to understand linking pattern of attackers and hybrids we state it as a separate Lemma. 
\begin{lemma}\label{lem:ProfitableDeviations}
	If $ij \in \network(\eqstrategy)$, and $\wnetwork(\eqstrategy)$ is LFPS, then $ik \in \network(\eqstrategy) \;  \forall(k \in N): \ew_{k} \geq \ew_{j}$.
\end{lemma}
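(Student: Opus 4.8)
The plan is to argue by contradiction through a blocking bilateral deviation, exactly the one sketched in the paragraph preceding the statement. Suppose $\network(\eqstrategy)$ is LFPS with $ij\in\network(\eqstrategy)$, but there is a $k$ with $\ew_k\ge\ew_j$ and $ik\notin\network(\eqstrategy)$. Since two equally strong players cannot be in a contest in a stable network, $\ew_i\ne\ew_j$; the substantive case — and the one relevant to the linking pattern of attackers and hybrids, which is what the lemma is used for — is $\ew_i<\ew_j$, so that $\ew_k\ge\ew_j>\ew_i$ makes $k$ strictly weaker than $i$. I would consider the deviation of the pair $(i,j)$ in which both delete link $ij$ (each setting its investment in that contest to $0$), with $L_j=\emptyset$ and $L_i=\{k\}$, and in which $i$ simply reallocates her former investment against $j$ onto the new contest, $\hat s_{ik}=\es_{ij}$, leaving all her other investments unchanged. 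The goal is to show this makes $j$ strictly and $i$ weakly better off, contradicting condition (B).

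For player $j$ the argument is immediate: since $\ew_i<\ew_j$, Proposition~\ref{prop:StrenghtMotivation} gives that $i$ wins contest $ij$, so $j$'s expected revenue from it is negative. Deleting this losing contest (with $L_j=\emptyset$, re-optimizing her remaining investments) then strictly raises $\pi_j$ — this is exactly the "controlling link" observation, so $j$ is willing to sever $ij$.

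For player $i$ the point is that, because $\hat s_{ik}=\es_{ij}$ and nothing else moves, her total investment $\ew_i$ and hence her cost $c(\ew_i)$ are unchanged; only her contest revenue changes, from $\pi_{ij}(\es_{ij},\es_{ji})$ to $\pi_{ik}(\es_{ij},\hat s_{ki})$, where under limited farsightedness $\hat s_{ki}=BR(\es_{ij})$ is $k$'s best reply. Since contest revenue is decreasing in the opponent's effort, it suffices to show that the weaker player $k$ fights back no harder than $j$ did, i.e. $\hat s_{ki}\le\es_{ji}$. This is the crux. Both are best replies to the same opponent effort $\es_{ij}$, so by Assumption~\ref{ass:PhiAndr} (interior solutions) they satisfy the first-order conditions $MR(\es_{ji})=c'(\ew_j)$ for $j$ and $MR(\hat s_{ki})=c'(\ew_k+\hat s_{ki})$ for $k$, where $MR(\cdot)$ is the marginal contest-revenue at opponent effort $\es_{ij}$ (common across players by the symmetry of \eqref{eq:CSF}), and $k$'s total after forming the link is $\ew_k+\hat s_{ki}$ since her other investments stay put. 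If $\hat s_{ki}>\es_{ji}$, then strict concavity of revenue in own effort (Assumption~\ref{ass:Technology}) makes $MR$ strictly decreasing, giving $c'(\ew_k+\hat s_{ki})=MR(\hat s_{ki})<MR(\es_{ji})=c'(\ew_j)$; strict convexity of $c$ then forces $\ew_k+\hat s_{ki}<\ew_j$, contradicting $\ew_k\ge\ew_j$ together with $\hat s_{ki}>0$. Hence $\hat s_{ki}\le\es_{ji}$, so $\pi_{ik}(\es_{ij},\hat s_{ki})\ge\pi_{ij}(\es_{ij},\es_{ji})$ and $i$ is weakly better off (and could gain strictly by re-optimizing).

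Putting the pieces together, the deviation leaves $i$ weakly and $j$ strictly better off, so condition (B) is violated, contradicting LFPS; therefore $ik\in\network(\eqstrategy)$. I expect the main obstacle to be the monotone comparison $\hat s_{ki}\le\es_{ji}$ that formalizes "a weaker opponent fights back less intensely": the trick is to hold $i$'s total spending fixed so that her cost drops out of the comparison, and to observe that forming the new link only raises $k$'s total investment, pushing her marginal cost up and her best reply down relative to $j$'s. A secondary point requiring care is the bookkeeping of who controls link $ij$ (i.e. restricting to $\ew_i<\ew_j$), since that is what guarantees $j$'s willingness to delete it.
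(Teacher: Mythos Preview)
Your proposal is correct and follows essentially the same route as the paper: the identical bilateral deviation (delete $ij$, have $i$ attack $k$ with the reallocated effort $\hat s_{ik}=\es_{ij}$, while $j$ simply drops her investment in $ij$), and the same first-order-condition comparison to establish $\hat s_{ki}\le \es_{ji}$ from $\ew_k+\hat s_{ki}>\ew_j$ and the strict convexity of $c$. The paper argues the inequality directly rather than by contradiction and obtains it strictly, but the mechanism is the same; your treatment of the non-substantive case $\ew_i\ge\ew_j$ is about as terse as the paper's own (``if contest $ij$ is not profitable for $i$, then it cannot be part of the stable network'').
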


A direct implication of Lemma \ref{lem:ProfitableDeviations} is that an attacker is connected with the weakest players in a stable network. This in turn implies that a non-empty stable network $\wnetwork$ must be connected.  Indeed, if there were two components in a stable network, then there would exist at least one attacker that is not connected to the weakest player in the network. 

We now argue that all attackers focus their fighting effort on the same set of weak players. To this end, we rule out the possibility that two attackers $i$ and $j$  have different neighborhoods  ($N_i \neq N_j$) in a stable network.\footnote{See Lemmas 2-4 in Appendix A for formal arguments.} If $N_i$ and $N_j$ are different and not nested, then, without loss of generality,  $i$ has an opponent which is  not an opponent of $j$ and is not stronger than all opponents of $j$. According to Lemma 1, this is incompatible with a stable network.  $N_i \subset N_j$ implies that the weaker of the two players ($j$) finds it beneficial to fight, additionally to all players from $N_i$, with other players as well. But since these additional contests are profitable for player $j$ (otherwise $j$, as an attacker, would have a profitable deviation), initiating  them is a profitable deviation for $i$, because $i$ is stronger than $j$.

 All members of the same class of hybrids in a stable network must have the same neighborhood  as well. To understand this result, it is useful to partition neighborhood of a hybrid player into the set of rivals that are stronger than her (\textit{strong neighborhood}) and the set of rivals that are weaker than her (\textit{weak neighborhood}). From our discussion of attackers, it follows that members of the strongest class of hybrids $\Class_2$ have the same strong neighborhood (attackers). Hybrid $i \in \Class_2$ behaves as an attacker relative to (weaker than her) rivals in the weak neighborhood. Hence, we can apply a similar reasoning to the one we used when discussing attackers to argue that hybrids from $\Class_2$ have the same weak neighborhood as well. Proceeding analogously, we show that the claim must hold for members of all hybrid classes $\Class_k : 2 \leq k< M$, provided they exist in a stable network. 
 
Since there is a finite number of players, there exists the weakest player in a stable network (not necessarily just one player). From Lemma \ref{lem:ProfitableDeviations} we know that a player who wins at least one contest must be connected to the weakest players in the network.  The set of the weakest players in the network constitutes the class of victims.

So far we have argued that in a non-empty stable network we can partition players into $M<n$ classes with respect to their strength. There is only one class of attackers and only one class of victims. The remaining $M-2$ classes, provided that they exist, are classes of hybrids of different strength. A player from  $\Class_\ell$ is in a contest with all players outside $\Class_\ell$. This means that a non-empty stable network must have a complete M-partite structure.  Finally we argue that stronger classes in  a stable network are larger (as measured by the number of nodes).  To see this, compare two classes $\Class_i$ and $\Class_{i+1}$ in a stable network. We recall that strong players spend more per contest relative to weak players  when facing the same opponents.\footnote{Follows directly from Proposition \ref{prop:RelativeSpendingContest}.} On the other hand, by definition, stronger players have a lower total equilibrium spending ($\ew$). These two claims can hold simultaneously in a stable M-partite network only if $m_i > m_j$.  

 We are now ready to state the main result about LFPS networks, which follows directly from the intermediate results discussed above. 

\begin{proposition} \label{prop:FormationMain}
	A non-empty stable network $g(\eqstrategy)$  has a complete  $M$-partite network structure with   $|\Class_{k}|>$ $|\Class_{k+1}|$ $\forall k\in \{1,...,M-1\}$. The empty network is stable. 
\end{proposition}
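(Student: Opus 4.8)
The plan is to assemble the intermediate results above into the complete $M$-partite structure, then prove the size ordering, and finally check stability of the empty network; I expect the size ordering to be the crux. Throughout I use that, by Definition \ref{def:Strength} and Proposition \ref{prop:StrenghtMotivation}, strength orders the classes $\Class_1,\dots,\Class_M$ from strongest (lowest $\ew$) to weakest (highest $\ew$), and that the stronger party wins every contest. First I would record connectedness: by Lemma \ref{lem:ProfitableDeviations} any engaged player is linked to every weaker player, hence to a globally weakest one, so all engaged players lie in a single component; an isolated node, being strongest, would by condition (U) of Definition \ref{def:StableNetworks} profitably initiate a contest against a weaker engaged opponent, so no isolated nodes exist. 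Thus a non-empty stable network is connected.

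Next I would upgrade the uniform-neighbourhood statements (Lemmas 2--4), together with the single class of attackers $\Class_1$ and single class of victims $\Class_M$, into a complete $M$-partite graph. Two players in the same class have equal $\ew$, and a contest between equally strong players is unprofitable for both (the observation behind Definition \ref{def:3TypesPlayers}), so condition (B) excludes intra-class links. For inter-class links, $\Class_1$ wins all its contests and, by Lemma \ref{lem:ProfitableDeviations}, as soon as it touches any member of $\Class_2$ it is linked to all of $\Class_2,\dots,\Class_M$; connectedness together with the hybrid definition forces consecutive classes to be adjacent, so a downward induction on $k$ using Lemma \ref{lem:ProfitableDeviations} gives $N_{\Class_k}=N\setminus\Class_k$ for every $k$. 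Hence every cross-class pair is linked and no intra-class pair is, which is precisely the complete $M$-partite structure, with $\Class_1$ the attackers, $\Class_M$ the victims, and the intermediate $M-2$ classes hybrids.

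The substantive step, and where I expect the main obstacle, is the strict ordering $|\Class_k|>|\Class_{k+1}|$. Fix representatives $a\in\Class_k$ and $b\in\Class_{k+1}$, so $\ew_a<\ew_b$. In the complete $M$-partite graph $a$ and $b$ face identical opponents except that $a$ fights all of $\Class_{k+1}$ whereas $b$ fights all of $\Class_k$. By Proposition \ref{prop:RelativeSpendingContest}, on every shared opponent the stronger player $a$ spends at least as much per contest as $b$, so $a$'s aggregate spending on the shared opponents is no smaller than $b$'s; since nonetheless $\ew_a<\ew_b$, the spending on the non-shared parts must satisfy $|\Class_{k+1}|\,\sigma<|\Class_k|\,\tau$, where $\sigma$ and $\tau$ denote the per-contest outlays of the winner and of the loser in a $\Class_k$--$\Class_{k+1}$ contest. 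Combining this with $\sigma\ge\tau$ (the winner, having the lower marginal cost, spends at least as much as the loser) yields $|\Class_{k+1}|<|\Class_k|$. The delicate points here are fixing the direction of the per-contest comparison for shared opponents that may be stronger \emph{or} weaker than the pair, and justifying $\sigma\ge\tau$; these must be argued from Proposition \ref{prop:RelativeSpendingContest} and the strength mechanism rather than asserted loosely.

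Finally, the empty network is stable: condition (B) is vacuous since there are no links to delete, and under condition (U) any deviation in which a player initiates contests with a set $L_i$ creates, starting from the symmetric all-zero profile, contests against equally strong opponents who best-respond by fighting back; such contests are unprofitable, so no profitable unilateral deviation exists.
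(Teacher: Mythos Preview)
Your proposal is correct and takes essentially the same route as the paper: the paper's proof of Proposition \ref{prop:FormationMain} simply cites the chain of intermediate results (connectedness, one attacker class with common neighbourhood, hybrids linked to all other classes, one victim class, and the class-size lemma), which is exactly what you recapitulate. The two delicate points you flag in the size-ordering step are resolved just as you suspect---Proposition \ref{prop:RelativeSpendingContest} gives $\es_{ac}>\es_{bc}$ for every shared opponent $c$ regardless of whether $c$ is stronger or weaker than the pair (apply it with the appropriate relabelling of the triple), and $\sigma>\tau$ is precisely Proposition \ref{prop:StrenghtMotivation}---so that step is not the obstacle you fear; it is exactly the paper's Lemma \ref{lem:ClassSizes}.
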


 Proposition \ref{prop:FormationMain} provides necessary conditions for LFPS. 
Clearly,  not all complete $M$-partite networks with property $m_k > m_{k+1}$  are stable. The difference in strengths, and consequently in the class sizes, must be at least large enough to ensure that every bilateral contest in the network is profitable for the stronger opponent.  

A feature of LFPS networks worth highlighting is that even though players are ex-ante identical, a non-empty network structure must be \textit{asymmetric enough} to be stable.\footnote{This is not a unique feature of our model. For instance asymmetric networks in network formation models among ex-ante  identical players appear in  \cite{jackson1996,bloch2009communication,goyal2007}.} The reason is that the asymmetry in strengths is necessary for a bilateral contest to be profitable. This asymmetry  arises through the division of the population in different, mutually exclusive,  partitions. Quite remarkably, the division is achieved in a completely non-cooperative fashion and without any \textit{direct} benefits for players from  belonging to a given partition. One way to understand how a player ends up in one and not in the other partition is by looking at the dynamics of the network formation process. In Appendix B, we propose a stylized dynamical process of network formation which allows pairs of players to revise the network in sequence and has a property that settles only in LFPS networks.

Providing both the sufficient and the necessary conditions for LFPS stability is a quite complicated issue, due to the highly nonlinear and multidimensional nature of interactions we consider. In Proposition \ref{prop:SufficientBipartite} we make a step forward in this direction by focusing on a class of complete bipartite networks $M=2$. A stable complete bipartite graph is presented in Figure \ref{fig:CompleteBip}. 
	
	\begin{figure}[H]
	\begin{center}
		\includegraphics[scale=0.6]{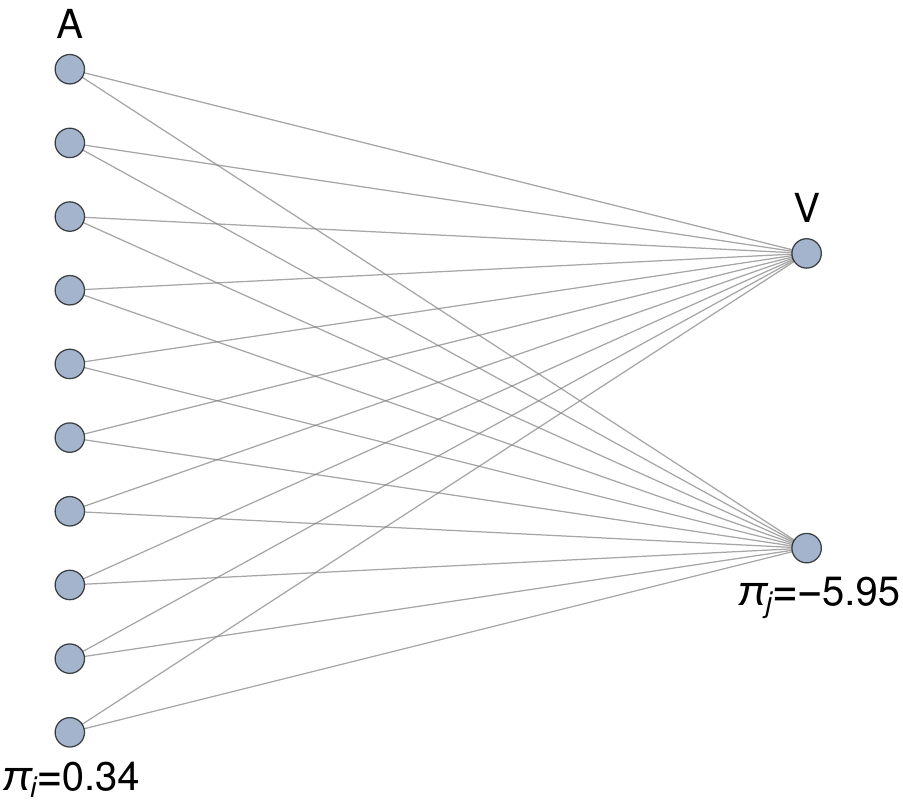}
		\caption{A LFPS complete bipartite graph, with class of attackers $A$ and a class of victims $V$.  For each $i \in A$ and $j \in V$ $\esij = 0.327$ and $\esji =0.146$ . The equilibrium payoffs (same for each member of a given partition) are indicated below the partition.   } \label{fig:CompleteBip}
	\end{center}
\end{figure}

Let  $\Bip{\sbp,\ssp}$  denote a complete bipartite graph with $n =\sbp + \ssp$ nodes and partitions of size $\sbp$ and $\ssp$.  We denote the two partitions by $\bp$ and $\sp$  respectively.  The following proposition holds.

\begin{proposition}\label{prop:SufficientBipartite}
	Suppose that $\phi(x)=x$ and $c(x)=x^2$. There exists $v^* \in [1, \frac{n}{2}-\frac{n}{2 \sqrt{5}}]$ such that $\Bip{a,v} =\Bip{n-v,v} $ is LFPS  if and only if $v \leq v^*$ and $n \geq 4$.  When $n<4$ the empty network is only LFPS network. 
\end{proposition}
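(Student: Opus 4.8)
The plan is to reduce the infinitely many deviations allowed by Definition \ref{def:StableNetworks} to a single scalar inequality in $v$, and then to locate its threshold. First I would pin down the equilibrium investments on $\Bip{n-v,v}$. Writing $\bp$ for the larger partition (attackers) and $\sp$ for the smaller one (victims), Proposition \ref{prop:FormationMain} already forces $a=n-v>v$. By the uniqueness in Proposition \ref{prop:OneStrategyPerLFPS} and the symmetry of the specification $\phi(x)=x$, $c(x)=x^{2}$, every attacker plays a common value $s_A$ against each victim and every victim a common $s_V$ against each attacker. With $r\to 0$ the two first-order conditions read
\begin{align*}
\frac{s_V}{(s_A+s_V)^2}=v\,s_A,\qquad \frac{s_A}{(s_A+s_V)^2}=(n-v)\,s_V,
\end{align*}
whose ratio gives $\rho:=s_V/s_A=\sqrt{v/(n-v)}<1$ and $s_A^2=\rho/\bigl(v(1+\rho)^2\bigr)$. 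Substituting into \eqref{eq:Payoff} yields the closed form $\pi_A=v\,(1-\rho-\rho^2)/(1+\rho)^2$ for an attacker (and $\pi_V<0$ for a victim), which I carry through the rest of the argument.

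Next I would argue that the only binding stability requirement is that an attacker not gain from severing a contest. In $\Bip{n-v,v}$ every non-edge joins two members of the same partition, i.e.\ two equally strong players; since a contest between equally strong players is never profitable for the initiator (Definition \ref{def:3TypesPlayers} and Lemma \ref{lem:ProfitableDeviations}), no player can profit by forming a new link, so part (U) and the link-creating content of part (B) are automatically met. A victim loses every contest, so deleting one and re-optimizing always raises her payoff; hence, after an edge $ij$ is deleted the two deviators decouple and part (B) reduces to the requirement that the attacker strictly lose from the deletion. Letting $\Pi(k)$ denote an attacker's payoff when she fights $k$ victims (each still playing $s_V$) with optimally chosen effort, this is the single inequality $\Pi(v)=\pi_A>\Pi(v-1)$.

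Finally I would analyze this inequality at the two endpoints. For the upper bound, observe that even the cruder deletion of $ij$ \emph{without} re-optimizing is profitable once $\pi_A\le0$: a short computation gives $\pi_A-\tilde\Pi=\bigl(1-2\rho-\rho^2+\rho/v\bigr)/(1+\rho)^2$, where $\tilde\Pi$ is the attacker's payoff keeping the remaining $v-1$ efforts at $s_A$, and $\pi_A\le0$ (that is, $1-\rho-\rho^2\le0$) forces this to be $\le0$ for every $v\ge1$. Since $\pi_A>0\iff\rho<\tfrac{\sqrt5-1}{2}\iff v<\tfrac n2-\tfrac{n}{2\sqrt5}$, stability fails whenever $v\ge\tfrac n2-\tfrac{n}{2\sqrt5}$, giving $v^{*}\le\tfrac n2-\tfrac{n}{2\sqrt5}$. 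For the lower bound, at $v=1$ the attacker has a single contest, so the deletion condition is exactly $\pi_A>0$, which holds iff $\rho=1/\sqrt{n-1}<\tfrac{\sqrt5-1}{2}$, i.e.\ iff $n\ge4$; thus $v^{*}\ge1$ for $n\ge4$, while for $n<4$ the only admissible value $v=1$ violates it, leaving the empty network as the unique LFPS network, as claimed.

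The main obstacle is upgrading these endpoint bounds to the clean ``$v\le v^{*}$'' statement, i.e.\ proving that the set of stable $v$ is a down-set rather than merely nonempty and bounded. This is a monotonicity claim about the deletion gain $D(v):=\Pi(v)-\Pi(v-1)$ evaluated at the $\Bip{n-v,v}$ equilibrium, and the difficulty is that both the opponents' effort $s_V$ and the re-optimized effort $\sigma$ entering $\Pi(v-1)$ shift with $v$ through coupled implicit equations. I would handle it by expressing $\Pi(v-1)$ through the auxiliary ratio $\mu:=s_V/\sigma$, which solves $\mu^3/(1+\mu)^2=\tfrac{v-1}{v}\,\rho^3/(1+\rho)^2$, thereby reducing $D(v)>0$ to a single-variable inequality in $\rho$ and establishing that it crosses zero exactly once (equivalently, that $D$ is decreasing in $v$ on the admissible range). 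Proving this single-crossing property, rather than the endpoint bounds, is where the real work lies.
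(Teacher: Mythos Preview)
Your reduction of the problem to the single inequality $\Pi(v)>\Pi(v-1)$, and your verification of both endpoint bounds, are correct and coincide with the paper's argument. In particular, the paper also observes that condition (U) is automatic because every non-edge joins two equally strong players, and that condition (B) collapses to the attacker's side since the victim always gains from deletion. Your closed-form computations for $\rho$, $s_A^2$, $\pi_A$, and the upper-bound check via $\pi_A-\tilde\Pi$ are accurate.

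Where you and the paper diverge is precisely at the step you flag as the obstacle: the monotonicity of the deletion gain $D(v)$. You propose to parametrize $\Pi(v-1)$ through an auxiliary ratio $\mu$ defined implicitly by a cubic-type relation and then establish single-crossing in $\rho$; this is feasible but computational. The paper avoids this entirely by working with the value function
\[
h(v,s)=\max_{x}\Bigl\{\tfrac{x-s}{x+s}\,v-(vx)^{2}\Bigr\},
\]
so that $\Pi(v)=h(v,s_V)$ and $\Pi(v-1)=h(v-1,s_V)$ with the \emph{same} second argument $s_V=\bar s_{v,n-v}$. The envelope theorem gives $\partial h/\partial s<0$ directly, and a short computation shows $\partial^{2}h/(\partial v\,\partial s)<0$ on the relevant region (where $x\ge s$ and the maximizer decreases in $v$). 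Together these yield $\partial\bigl[h(v-1,s)-h(v,s)\bigr]/\partial s>0$. Combining this with the concavity of $h$ in its first argument (a standard result on maximized concave objectives) and the monotonicity of $s_V$ in $v$ gives $f(n,v):=h(v-1,s_V)-h(v,s_V)$ strictly increasing in $v$, hence a unique crossing $v^{*}$. This value-function route sidesteps the implicit-function manipulations you anticipate and delivers the down-set structure without ever solving for the re-optimized effort $\sigma$ or your auxiliary $\mu$.
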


Figure \ref{fig:SufficientBip} depicts values of $v^*$  as a function of the population size ($n$), and corresponding ranges of $v$ for which $\Bip{n-v, v}$ is LFPS.  When $n<4$ the population cannot achieve a "level of asymmetry" enough for attackers to earn a positive payoff.  
	
	\begin{figure}[H]
		\begin{center}
		\includegraphics[scale=0.7]{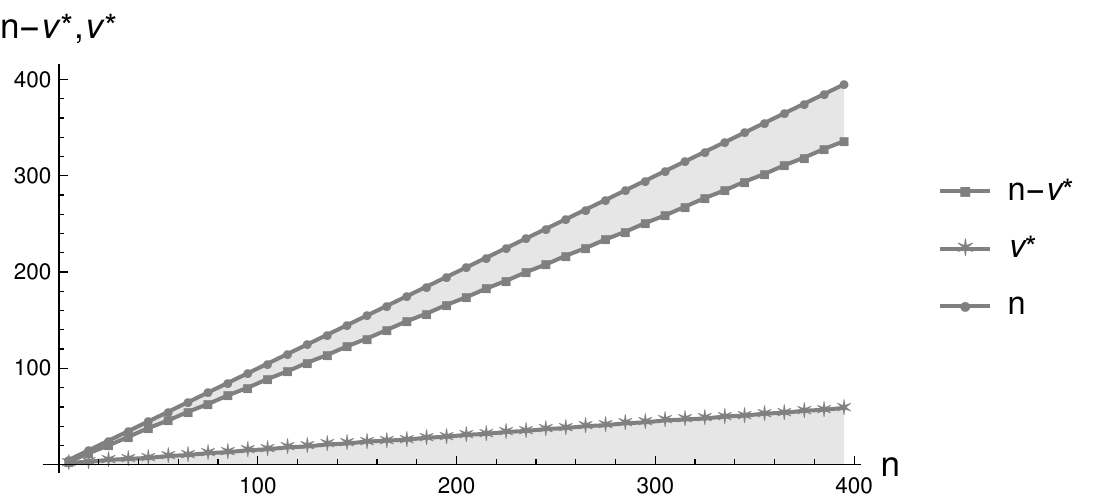}
		\end{center}
	 \caption{ $\Bip{n-v,v}$ is stable when $v$ is in the lower shaded region and corresponding $a=n-v$ is in the upper shaded region.  }\label{fig:SufficientBip}
	\end{figure}
%We first show is a unique strategy profile $\strategy$ such that condition (U) from Definition \ref{def:StableNetworks} holds, and that $g(\strategy)$ has complete bipartite network structure  $\Bip{n-v,v}$. Then we show that there exists $v^*$  such that $g(\strategy)$  also satisfies condition (B) from Definition \ref{def:StableNetworks} only  when $v<v^{*}$.

 To understand the intuition behind Proposition \ref{prop:SufficientBipartite}, it is illustrative to think about how the payoff of an attacker $i \in \bp$ changes after a deviation which involves the destruction of a link with $j \in \sp$ when the strategy profile played satisfies condition (U) from Definition \ref{def:StableNetworks}.\footnote{It is straightforward to check that if $g(\strategy)$ satisfies (U) from Definition \ref{def:StableNetworks} for $L_i = \emptyset$ and $g(\strategy)$ has a complete M-partite structure,  then $g(\strategy)$ satisfies (U) for any set  $L_i \subseteq F_i$ (i.e. no player has an incentive to form a link).} The destruction of a link  $ij$ implies that the amount of resources that $i$ can appropriate from her opponents decreases. At the same time, $i$ can reallocate the resources from $ij$ to her other contests, and therefore increase her expected revenue in each of the remaining contests.  The trade-off between these two effects is illustrated in Figure \ref{fig:BipartitePayoff}. In the figure we plot the payoff of player $i \in A$ obtained at the strategy profile which satisfies (U) and the network is $\Bip{n-v, v}$ (denoted with $\pi_i$) and her maximal payoff after the bilateral deviation which involves the destruction of link $ij,\; j \in V$ (denoted with $\pi_i^{dev}$) when $v$ changes. When $v$ is low, the former effect dominates, while when $v$ is large enough the latter effect dominates.  We find that the payoff from the destruction of a link $\pi_i^{dev} - \pi_{i}$ is monotonically increasing with $v$.

% In Lemma \ref{lem:DeviatingBip} in Appendix B we provide the  expression for values of $\ssp$  for which  $\Bip{\sbp, \ssp}$ is stable. 

\begin{figure}[H]
	\begin{center}
		\includegraphics[scale=0.7]{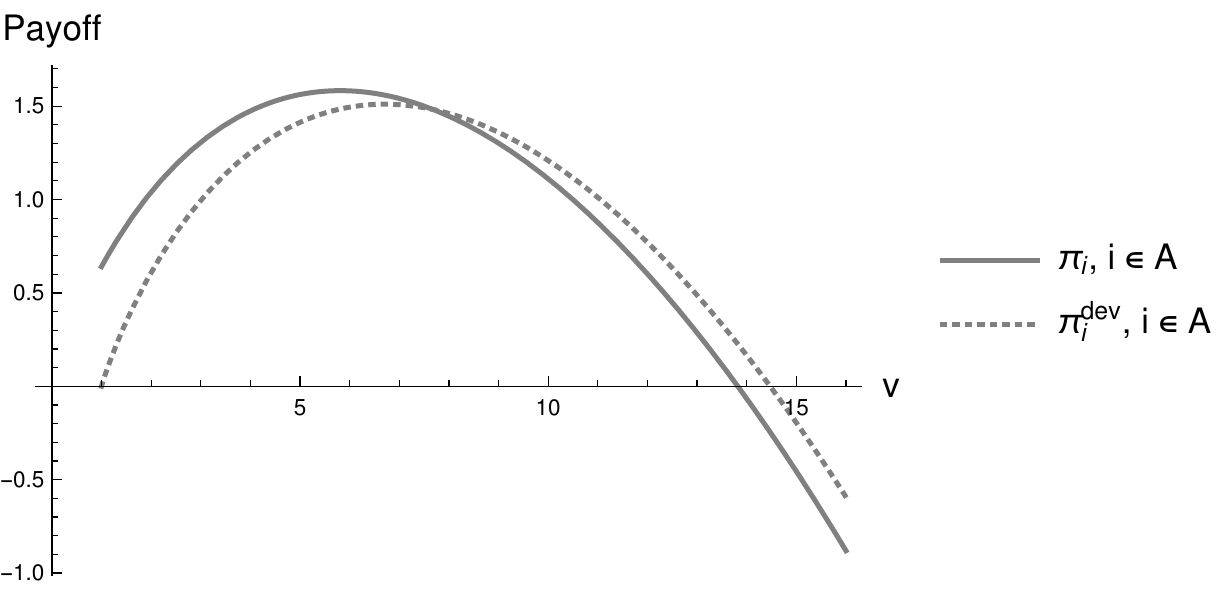}  
		\caption{The payoff of $i \in A$ ($\strategy$  played is consistent with (U) from Definition \ref{def:StableNetworks} and  $g(\strategy)$ is $\Bip{50-v,v}$)  and the payoff of $i$ after destroying a link with $j \in \sp$,   as functions of  $\ssp$. Calculations done for $c(x)= x^{2}$,  $\phi(x)=x$, and  $r=0$.}\label{fig:BipartitePayoff}
	\end{center}
\end{figure}

In propositon \ref{prop:FormationMain} we provide a necessary condition for a network to be LFPS. It is worthwhile noting that this is done without explicitly solving for the strategy profile $\eqstrategy$. Solving for $\strategy$ which satisfies condition (U) from Definition \ref{def:StableNetworks} is in general infeasible, even for $L_i = \emptyset$ for all $i \in N$.  We devoted special attention to the case when $M=2$ in Proposition \ref{prop:SufficientBipartite}, since this case allows  some tractability.  Providing stronger results for cases $M \geq 3$ proved to be intractable.\footnote{Even solving for $\strategy$ which satisfies (U) with $L_i =\emptyset$ requires solving a system of 6 nonlinear equations with  six unknowns and 3 additional parameters (sizes of partitions).  For a fixed values of the parameters, this system admits up to 32 solutions, with only one of them being from $\Realo^6$. It is interesting that our numerical exploration points to a conclusion that, in our benchmark case $\phi(x) =x$ and $c(x) =x^2$ and $r=0$, a stable tripartite network does not exits. An example of complete tripartite network is presented in Figure 3a.

}   

\section{Comparative statics}\label{sec:CompStatic}
\vspace*{-10pt}
  In this section we are primarily interested in the inefficiencies associated with stable networks.  We focus on the total wasteful spending $\avgi = \sum_{i} \ew_i$. We analyze the effects of small changes in the parameters of the model on $\avgi$ and $\eqstrategy$  while keeping the network structure fixed, and the role of the network structure in mediating the propagation of small shocks hitting a player in the network. We focus on stable bipartite networks.  Unless stated otherwise, in this section we maintain Assumptions 1-3.
We start by analyzing how changes in the likelihood of a draw, the marginal cost, and transfer size affect $\avgi$. 
%An increase in $r$ can be seen as the increase of the uncertainty in a contest, in which case we define the uncertainty as the likelihood the winner is not decided by the contest.\footnote{The likelihood that property rights \cite{mackenzie2013restricted} are not allocated to either of the contesting parties.} In case of armed conflict an increase in $r$ can be thought of as a third party intervention which increases the likelihood of a peace-treaty.An increase in the cost of fighting effort can be thought of as sanctions that make it harder to acquire weapons, or in case of lobbying a policy/rule which makes it more costly to lobby.   
 Not surprisingly, we find that when the effort becomes less expensive at the margin for all players, or when the transfer $\T$ increases in all contests, $\avgi$ increases.  Interestingly, when the likelihood of a draw $r$ increases, the total spending in the equilibrium may both increase and decrease. The direction of the effect crucially depends on how \textit{asymmetric} the stable network is, and  on the value of $r$. The following proposition summarizes these comparative static findings:

\begin{proposition}\label{prop: ComparativeStaticEfficiency}
  	Consider stable graph $\Bip{\sbp,\ssp}: \; \ssp <\sbp$	 then:
	\singlespacing
	\begin{enumerate}
		\item If the cost function for each player changes from $c$ to $\tilde{c}$ such that $ \tilde{c}'(x) < c'(x)$ for all $x$,    $\avgi$ increases.
	   \item If transfer size  $\T$ changes from $\T =1$ to $\tilde{T} >1$,  $\avgi$ increases. 
		\item  $\avgi$ may both increase and decrease with $r$.  In special case when $\phi(x) = \lambda x,\; \lambda>0$, $c(x) = \frac{2}{\alpha}x^{\alpha},\; \alpha \geq 2$, and $r \rightarrow 0$, $\avgi$ will increase in $r$ when $\sbp > 34\ssp$. 
	\end{enumerate}
	
\end{proposition}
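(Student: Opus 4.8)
The plan is to use the symmetry of the complete bipartite structure to reduce the equilibrium to two scalars and then run implicit-function comparative statics on the two first-order conditions. In $\Bip{a,v}$ every attacker is interchangeable and so is every victim, so any profile satisfying condition (U) of Definition \ref{def:StableNetworks} has every attacker investing a common $s_A$ in each of its $v$ contests and every victim a common $s_V$ in each of its $a$ contests; interiority is guaranteed by Assumption \ref{ass:PhiAndr}. Writing $D=\phi(s_A)+\phi(s_V)+r$, the conditions (U) with $L_i=\emptyset$ become
\[
G_A:=\T\,\phi'(s_A)\tfrac{2\phi(s_V)+r}{D^2}-c'(v s_A)=0,\qquad G_V:=\T\,\phi'(s_V)\tfrac{2\phi(s_A)+r}{D^2}-c'(a s_V)=0,
\]
and $\avgi=av(s_A+s_V)$, so every claim is a statement about the sign of $d(s_A+s_V)$ as a parameter moves with $\Bip{a,v}$ held fixed. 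Because the attacker wins, $\phi(s_A)>\phi(s_V)$. Differentiating, the diagonal entries of the Jacobian $J$ are negative (own-concavity: $\phi''\le0$, $c''>0$), while the off-diagonals satisfy $\partial_{s_V}G_A=2\T\phi'(s_A)\phi'(s_V)(\phi(s_A)-\phi(s_V))/D^3=:\beta>0$ and $\partial_{s_A}G_V=-\beta<0$; hence $\det J=\partial_{s_A}G_A\,\partial_{s_V}G_V+\beta^2>0$, which is also the stability/second-order condition.

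For parts 1 and 2 the change enters both first-order conditions as a strictly positive additive shock: a uniform fall in marginal cost ($\tilde c'<c'$) raises each $-c'$ term, and an increase in $\T$ scales up each marginal-benefit term. Embedding the change in a smooth path (for instance $c_t=(1-t)c+t\tilde c$ or $\T_t=1+t(\tilde\T-1)$) gives $J(\dot s_A,\dot s_V)^\top=-(\rho_A,\rho_B)^\top$ with $\rho_A,\rho_B>0$, and Cramer's rule yields
\[
\dot s_A+\dot s_V=\tfrac{1}{\det J}\big[(-\partial_{s_V}G_V-\beta)\,\rho_A+(\beta-\partial_{s_A}G_A)\,\rho_B\big].
\]
The coefficient $\beta-\partial_{s_A}G_A$ is positive since $\beta>0>\partial_{s_A}G_A$, so the whole content is the inequality $\partial_{s_V}G_A+\partial_{s_V}G_V\le0$, i.e. $-\partial_{s_V}G_V\ge\beta$. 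I would prove this by splitting $-\partial_{s_V}G_V-\beta$ into three pieces: a curvature piece $-\T\frac{2\phi(s_A)+r}{D^2}\phi''(s_V)\ge0$ (weak concavity of $\phi$), a cost piece $a\,c''(a s_V)>0$ (strict convexity of $c$), and a remaining piece proportional to $\phi'(s_V)(2\phi(s_A)+r)-\phi'(s_A)(\phi(s_A)-\phi(s_V))$, which is non-negative because $\phi'(s_V)\ge\phi'(s_A)$ (concavity with $s_A>s_V$) and $2\phi(s_A)+r\ge\phi(s_A)-\phi(s_V)$. Thus $\dot s_A+\dot s_V>0$ along the path, and integrating gives parts 1 and 2.

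For part 3 the shock is $(\partial_r G_A,\partial_r G_V)$, and here $\partial_r G_A\propto\phi(s_A)-3\phi(s_V)-r$ changes sign (it is positive only when the attacker is much stronger) whereas $\partial_r G_V\propto\phi(s_V)-3\phi(s_A)-r<0$ always; these opposing signs are exactly why $\avgi$ can rise or fall, so the general ambiguity follows by exhibiting both regimes. To extract the sharp threshold I specialize to $\phi(x)=\lambda x$, $c(x)=\frac2\alpha x^\alpha$ and let $r\to0$. Since $\lambda$ enters the contest function only through $r/\lambda$, I may set $\lambda=1$ without changing the sign of $d\avgi/dr$. At $r=0$ the ratio of the two first-order conditions gives $s_V/s_A=(a/v)^{-(\alpha-1)/\alpha}$, and after substituting the first-order conditions the sign computation collapses to
\[
\frac{d\avgi}{dr}\Big|_{r=0}>0\iff \frac{s_A}{s_V}+\frac{s_V}{s_A}>4+\frac{2}{\alpha-1}.
\]
Writing the left side as $(a/v)^{(\alpha-1)/\alpha}+(a/v)^{-(\alpha-1)/\alpha}$, it increases in $\alpha$ while the right side decreases, so $\alpha=2$ is binding; there the condition reads $\sqrt{a/v}+\sqrt{v/a}>6$, i.e. $a/v>17+12\sqrt2\approx33.97$. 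Hence $a>34v$ suffices for every $\alpha\ge2$, while taking $a/v$ near $1$ (among the less asymmetric stable configurations allowed by Proposition \ref{prop:SufficientBipartite}) makes the inequality fail and $\avgi$ decrease.

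The routine parts are the Jacobian sign pattern and $\det J>0$; the two places that carry the real difficulty are the aggregate monotonicity inequality $\partial_{s_V}G_A+\partial_{s_V}G_V\le0$ in parts 1--2 (which is what lets the total move in a definite direction despite non-monotone best replies), and, in part 3, the absence of any closed form for $r>0$, which forces the $r\to0$ expansion together with the worst-case-in-$\alpha$ argument needed to turn the $\lambda$-free inequality into the clean constant $34$.
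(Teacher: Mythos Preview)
Your argument is correct and in places tidier than the paper's. The main divergence is methodological.

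For parts 1 and 2 the paper does \emph{not} run the implicit-function computation you sketch. Instead it first shows (Lemma \ref{lem:NiceAggregative}) that the bipartite equilibrium is the equilibrium of a two-player \emph{nice aggregative game} in the sense of \citet{acemoglu2013aggregate}, and then simply verifies that a fall in marginal cost and a rise in $T$ are ``positive shocks'' (increasing differences in the own action and the parameter). The monotone comparative statics for the aggregate then comes for free from Acemoglu--Jensen. Your route is more elementary and self-contained: you compute the $2\times2$ Jacobian directly and reduce everything to the single inequality $-\partial_{s_V}G_V\ge\beta$, which you then decompose and sign. What the paper's approach buys is modularity (no inequality juggling, and the argument would extend verbatim to other positive shocks); what your approach buys is that it never leaves the model and makes explicit exactly which piece of the best-reply structure drives the result.

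For part 3 both proofs use the implicit function theorem on the two FOCs, but your simplification is sharper. The paper carries the full expressions for $\esij'(r)$ and $\esji'(r)$, obtains a rather heavy inequality \eqref{ineq:PositiveTotalSpenidngr_final} in $a,v,\alpha$, checks the $\alpha=2$ case numerically, and then needs a separate technical lemma (Lemma \ref{lem:InequalitiesBipr}) to push the conclusion to all $\alpha\ge2$. You instead substitute the FOCs back in to replace $c''$ terms by $s_A,s_V$, reduce the sign condition to $\rho+\rho^{-1}>4+2/(\alpha-1)$ with $\rho=s_A/s_V=(a/v)^{(\alpha-1)/\alpha}$, and then the monotonicity in $\alpha$ is immediate (left side increasing, right side decreasing). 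That yields the same threshold $a/v>17+12\sqrt{2}\approx33.97$, hence $a>34v$, with considerably less effort.
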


The non-monotonic effect of a change in $r$ on $\avgi$ is a consequence of the non-monotonicity of the best reply function in $r$.  When  $r$ and $s_{ji}$ are small enough, the best reply function of $i \in \bp$, increases with $r$, otherwise it deceases with $r$. Therefore, a priori it is not clear if an increase in $r$  will result in an increase or a decrease in the equilibrium spending per contest for $i \in A$. To illustrate this point,  Figure \ref{fig:BRWithR} depicts the best response curves and the equilibrium point for a contest $ij \in \Bip{\sbp, \ssp}$ when $r$ takes values $0$ and $0.05$. The left panel is the plot for $\Bip{4,1}$. In this case the change  $r$ from $0$ to $0.05$ will lead to the new equilibrium (intersection of dotted lines) in which both $i \in \bp$ and $j \in \sp$ spend less, and therefore the intensity of contest $ij$ decreases. The situation is different on the right panel, where we consider the effect of the same change but for $\Bip{40, 1}$. In this case, in the new equilibrium $i$ invests more, and the intensity of each contest $ij$ is larger when $r=0.05$ than when $r=0$.

\begin{center}
	\begin{figure}[H]
		\hspace{40pt}
		\includegraphics[scale=0.4]{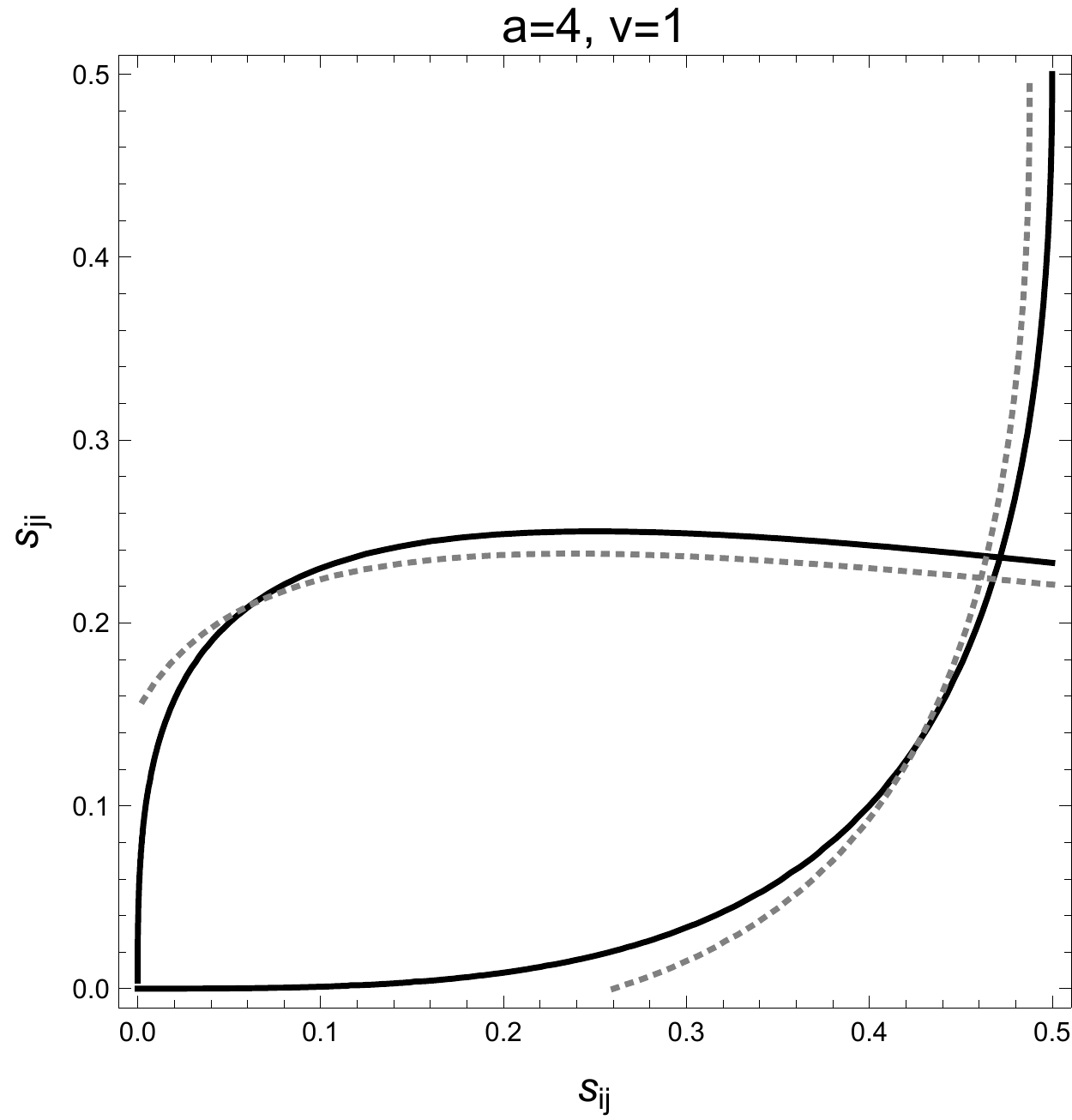} \hspace{20pt}
		\includegraphics[scale=0.55]{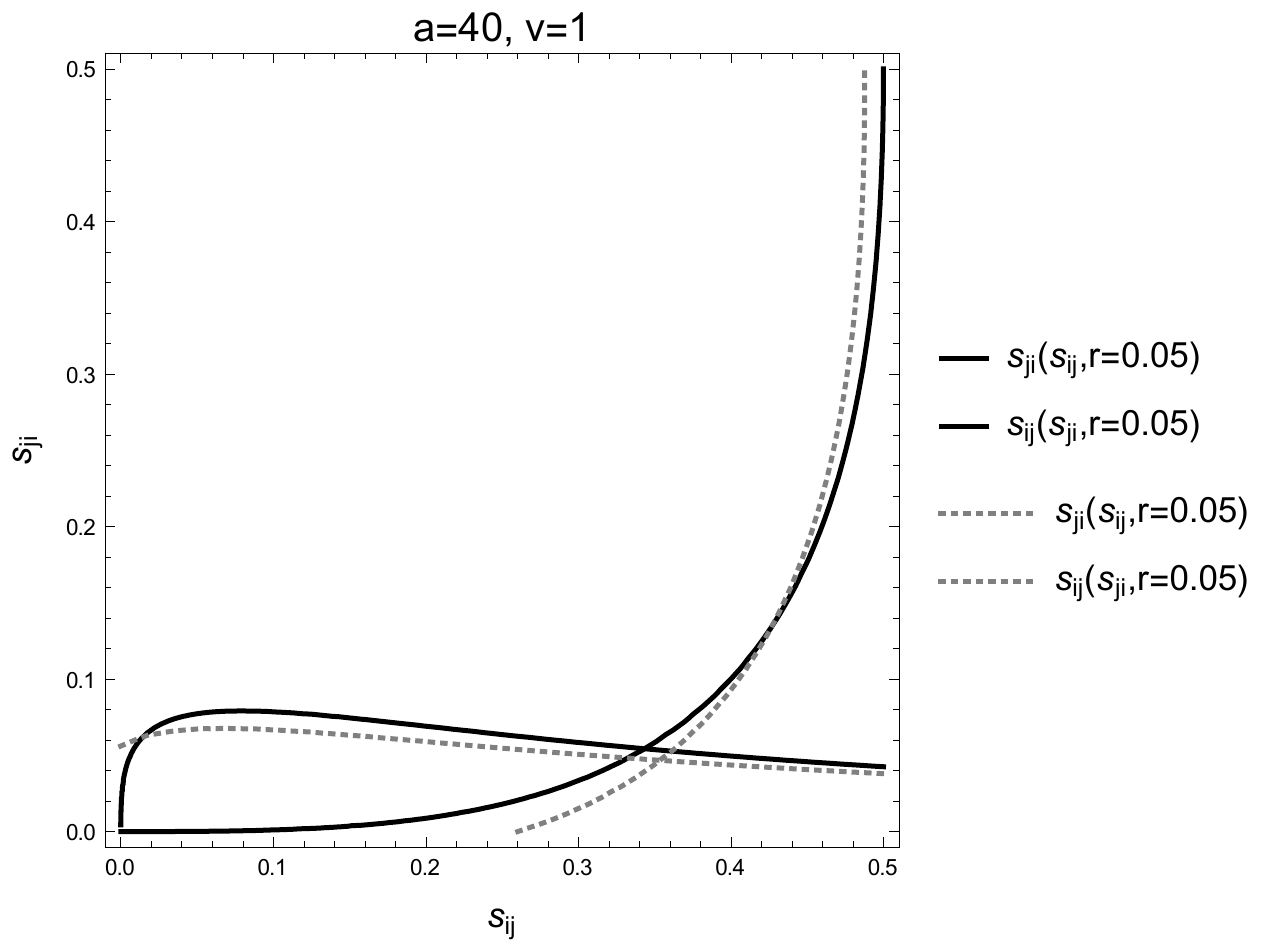} 
		\caption{The equilibrium for $r=0$ and $r=0.05$. $i \in \bp$ and $j \in \sp$. Networks $\Bip{4,1}$ and $\Bip{40,1}$ are stable with $\es_{ji} = \es_{jk}>0$ and $\es_{ij} = \es_{kj}>0$,  $ k \in \bp$ for both $r=0$ and $r=0.05$. }\label{fig:BRWithR}
	\end{figure}
\end{center}
\vspace{-20pt}
 When $r$ increases, the probability of losing for weak players (members of $\sp$), cateris paribus,   decreases. Since weak players already have a high marginal cost of spending at their current total investment level, they will have an incentive to decrease their spending. On the other hand, an increase in $r$ will lead to a decrease in the probability of winning for stronger players (members of $\bp$).  When strong players' total effort is not high, this will lead to an increase in their per contest effort. An increase in the  investment of strong players will further increase the incentive of weak players to spend less.  What will be the final effect on $\avgi$  depends on the relative magnitudes of the two effects discussed above. 
 %The effect on the rent dissipation $\rentdiss$ has an additional dimension. Given the convexity of the cost function, a decrease in spending of weak players  will have a larger effect on the total societal cost than the same decrease in spending of members of $\bp$. Therefore, in order for rent dissipation to increase with $r$, an increase in the equilibrium effort of attackers must substantially exceed the decrease in the effort  of victims. For instance, we show that this cannot happen when $r \rightarrow 0$  in any star network. However, when $r>0$ and the difference in sizes $\sbp - \ssp$ is large enough this may occur.
%
In Figure \ref{fig:RentDissAvgiAsFuncOfR} we  consider network $\Bip{200,1}$ in which an increase in $r$ can lead to an increase in $\avgi$.
%\footnote{While $r \in [0,0.1]$  and $\phi(x) =x$ do not strictly satisfy Assumption \ref{ass:PhiAndr}, the equilibrium profile $\eqstrategy$ is interior for all values of $r \in [0,0.1]$.} \label{fig:RentDissAvgiAsFuncOfR}
	\begin{figure}[H]
	\begin{center}
		\includegraphics[scale=0.6]{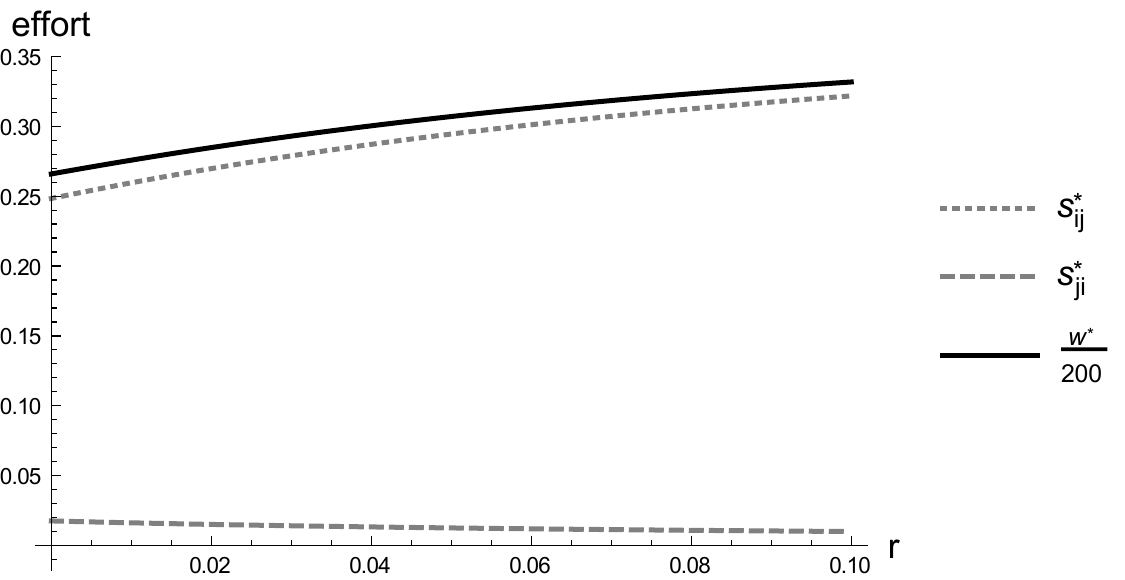} %\hspace{30pt}
		\caption{Star network ($\sbp =200$, $\ssp = 1$): Graph depicts the equilibrium efforts of the center node $i$ and the periphery node $j$ in a single contest, and  $\frac{\avgi}{200} = \esij + \esji$ as functions of $r$. We scale $\avgi$ with the inverse of the number of links in $\Bip{200,1}$ for a more clear graphical representation. $\phi(x)=x$ and $c(x) = x^2$. 
		 }. \label{fig:RentDissAvgiAsFuncOfR}
	\end{center}
	\vspace{-10pt}
\end{figure}

The effects of changes in the likelihood of a draw on the equilibrium outcomes in contest games have been already studied in \citep{nti1997comparative} and \citep{acemoglu2013aggregate}. Both of these papers find that a decrease in the likelihood of a draw unambiguously leads to an increase in the total equilibrium effort. The reason why we find qualitatively different results is that we take into account asymmetries implied by the network structure. In \citep{nti1997comparative} the author studies symmetric $n$-lateral contests.  In \citep{acemoglu2013aggregate} the authors consider changes in $r$ which are a \textit{positive shock} to a player. When the network is asymmetric enough, a decrease in $r$ is a negative shock for weak players, and positive shock for strong players.  Hence, the results from \citep{acemoglu2013aggregate} cannot be applied. 

In Proposition \ref{prop: ComparativeStaticEfficiency} we have considered changes that simultaneously affect all players in the network. Now we discuss the effects of a change that affects only one player. We contemplate a scenario in which the cost function of player $k$ for an exogenous reason  changes to  $c_k(x) = (1 +\epsilon_k)c(x)$. We  refer to this change as the  \textit{cost shock}  hitting player $k$.\footnote{Other types of small shocks can be studied using the same approach.} In case of conflict, for instance, the shock can be a third party intervention which makes it more costly for a party to acquire weapons.  We are interested to see how  $\eqstrategy$ and $\avgi$ change in response to the shock,  and how this depends on the structure of the network. We focus on small shocks,  $\epsilon_k \rightarrow 0$. 

To answer this question we note that, in a special case, when $\phi(x) = \lambda x$ for $\lambda >0$,  the total equilibrium spending is implicitly defined  with a system of equations \eqref{eq:TotalEffortEqSystem}, where $d_i$ denotes the degree of node $i$ (see Lemma \ref{lem:TotalEffortEqSystem} in Appendix A).
	\begin{align}\label{eq:TotalEffortEqSystem}
	\begin{split}
	&\ew_k = \sum_{j \in N_k} \frac{2c'(\ew_j)}{( c'(\ew_j)+ (1+ \epsilon_k)c'(\ew_k))^2} - d_k \frac{r}{2 \lambda},\\
	&\ew_i= \sum_{j \in N_i, j \neq k}  \frac{2 c'(\ew_j)}{( c'(\ew_j) +  c'(\ew_i))^2} + \frac{2(1+\epsilon_k) c'(\ew_k)}{(c'(\ew_i) + (1 + \epsilon_k)c'(\ew_k))^2} \indicator{ik \in g} - d_i\frac{r}{2 \lambda}, \; i \neq k.
	\end{split}
	\end{align}
System \eqref{eq:TotalEffortEqSystem} provides the expression for the strength of player $i$ as a function of the strengths of her neighbors. 	
Taking derivatives of \eqref{eq:TotalEffortEqSystem} with respect to $\epsilon_k$ and solving for $\frac{\partial \ew_i}{\partial \epsilon_k}, \; i \in N$ we get the following result:

\begin{proposition}\label{prop:IndividualCostShockBip}
	Suppose $\phi(x)=\lambda x$, $\lambda>0$, and suppose that player $k$ experiences a cost shock in LFPS graph  $\Bip{\sbp, \ssp}$.
	\begin{itemize}
		\item[(i)] If $k \in \bp$ then  $\frac{\partial \ewk}{\partial \epsilon_k} <0$,  $\frac{\partial \ew_i}{\partial \epsilon_k} <0 \; i\in \bp, i \neq k$,  and $\frac{\partial \ew_j}{\partial \epsilon_k} >0,\; j \in \sp$. If $k \in \sp$ then $\frac{\partial \ew_k}{\partial \epsilon_k} <0$,  $\frac{\partial \ew_j}{\partial \epsilon_k} <0 \; j\in \sp, j \neq k$,  and $\frac{\partial \ew_i}{\partial \epsilon_k} <0,\; i \in \bp$.
		\item[(ii)]  \begin{align*}
		&\frac{\partial \avgi}{\partial \epsilon_k} <0, \; k \in N.
		\end{align*}
	\end{itemize}
\end{proposition}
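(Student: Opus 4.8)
The plan is to read Proposition \ref{prop:IndividualCostShockBip} off the implicit system \eqref{eq:TotalEffortEqSystem}, viewing the strength profile $(\ew_i)_{i\in N}$ as a function of $\epsilon_k$ near $\epsilon_k=0$ and differentiating. The first step is a symmetry reduction. In $\Bip{\sbp,\ssp}$ all unshocked players in $k$'s own part play identical roles, as do all players in the opposite part, so for small $\epsilon_k$ the solution of \eqref{eq:TotalEffortEqSystem} collapses to at most three distinct strengths: that of $k$ (call it $w_1$), the common strength $w_2$ of the remaining members of $k$'s part, and the common strength $w_3$ of the opposite part. I would therefore rewrite \eqref{eq:TotalEffortEqSystem} as a $3\times3$ system $F(w_1,w_2,w_3;\epsilon_k)=0$, built from the per-contest spending $\psi(x,y)=\tfrac{2c'(y)}{(c'(y)+c'(x))^2}$ and its shocked analogue. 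At $\epsilon_k=0$ the two nonzero strengths are the attacker value $\alpha$ and the victim value $\beta$ with $\alpha<\beta$ (attackers are stronger), hence $c'(\alpha)<c'(\beta)$; for $k\in\bp$ we have $w_1=w_2=\alpha,\ w_3=\beta$, and for $k\in\sp$ the roles swap. Applying the implicit function theorem gives the linear system $J\dot w=b$ with $b=-\partial F/\partial\epsilon_k$.

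I would then compute every entry of $J$ and $b$ from the elementary derivatives of $\psi$, namely $\partial\psi/\partial x<0$ and $\operatorname{sgn}(\partial\psi/\partial y)=\operatorname{sgn}(c'(x)-c'(y))$ (this sign flip is exactly the non-monotonicity of the best reply). For $k\in\bp$ this yields $J$ with positive diagonal, positive $(1,3),(2,3)$ entries, negative $(3,1),(3,2)$ entries, and $b=(-,0,+)$; for $k\in\sp$ the off-diagonal signs flip and $b=(-,0,-)$. In both cases $\det J>0$ follows directly from the sign pattern. The engine of part (i) is the algebraic identity
\[
J_{11}b_3-b_1J_{31}=\pm\frac{2\,c'(\cdot)\,\bigl(c'(\beta)-c'(\alpha)\bigr)}{\bigl(c'(\alpha)+c'(\beta)\bigr)^{3}},
\]
in which the two bulky terms carrying $c''$ cancel exactly, leaving an expression that is positive for $k\in\bp$ and negative for $k\in\sp$. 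By Cramer's rule this sign determines $\frac{\partial\ew_{3}}{\partial\epsilon_k}$ (victims for $k\in\bp$, attackers for $k\in\sp$); the middle equation forces $\frac{\partial\ew_{2}}{\partial\epsilon_k}=-\tfrac{J_{23}}{J_{22}}\frac{\partial\ew_{3}}{\partial\epsilon_k}$ to have the opposite sign when $k\in\bp$ and the same sign when $k\in\sp$; and the remaining cofactor gives $\frac{\partial\ewk}{\partial\epsilon_k}<0$ unconditionally. This reproduces every sign claimed in (i).

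For part (ii) I would write $\frac{\partial\avgi}{\partial\epsilon_k}$ as the size-weighted sum $\frac{\partial\ewk}{\partial\epsilon_k}+(\sbp-1)\frac{\partial\ew_{2}}{\partial\epsilon_k}+\ssp\,\frac{\partial\ew_{3}}{\partial\epsilon_k}$ (with weights $(\ssp-1)$ and $\sbp$ when $k\in\sp$). When $k\in\sp$ every term is negative by (i), so the claim is immediate. The substantive case is $k\in\bp$, where victims move the ``wrong'' way, $\frac{\partial\ew_{3}}{\partial\epsilon_k}>0$, and one must show the shocked attacker's contraction dominates. Collapsing the weighted sum onto the common denominator $\det J>0$, I expect the numerator to split into a clearly negative piece $b_1J_{11}J_{33}$, a second nonpositive piece $(\sbp-1)J_{23}(b_1J_{31}-b_3)$, and one possibly positive piece bounded above by $J_{11}b_3\,\ssp$. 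The proof then closes with the estimate $J_{11}b_3\,\ssp<\tfrac12\,|b_1|\,J_{11}J_{33}$, which after cancellation reduces to $c'(\beta)-c'(\alpha)<c'(\beta)\,J_{33}$ and holds because $J_{33}>1$; hence the numerator is negative and $\frac{\partial\avgi}{\partial\epsilon_k}<0$.

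The main obstacle is precisely this aggregation for $k\in\bp$: signs alone do not settle (ii), so one needs quantitative control of relative magnitudes, and the cleanest route is the bound above, which isolates the dominant direct effect $b_1$ (proportional to $k$'s degree and to the large marginal cost $c'(\beta)$ of the victims it fights) and shows it outweighs the induced rise in victim spending. A secondary, purely bookkeeping, issue is the degenerate part sizes ($\ssp=1$, or $k\in\sp$ with a singleton victim part), where the reduced system drops to two equations; these are handled by the same computation with one class removed.
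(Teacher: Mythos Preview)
Your setup is the same as the paper's: reduce \eqref{eq:TotalEffortEqSystem} by symmetry to three unknowns and differentiate implicitly at $\epsilon_k=0$. The execution differs. The paper substitutes the $r\to 0$ closed forms $\ewi=\tfrac{2v\,c'(\ewj)}{(c'(\ewi)+c'(\ewj))^2}$, $\ewj=\tfrac{2a\,c'(\ewi)}{(c'(\ewi)+c'(\ewj))^2}$ into the linearized system \emph{before} solving, which collapses everything to explicit rational expressions in $c',c'',\ewi,\ewj,a,v$; the signs in (i) and (ii) then follow from $a\ewi>v\ewj$ and convexity, with a single closed formula for $\partial\avgi/\partial\epsilon_k$. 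Your Cramer-rule route avoids that substitution and works directly with the sign pattern of $J$ and $b$; for part~(i) it is a clean alternative, and the cancellation you flag, $J_{11}b_3-b_1J_{31}=\tfrac{2c'(\cdot)(c'(\beta)-c'(\alpha))}{(c'(\alpha)+c'(\beta))^3}$, is correct---in fact this quantity equals $b_3$ itself, which makes the rest of (i) immediate.

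For part~(ii) with $k\in\bp$ your three-piece decomposition is not quite right. Using the coincidences at $\epsilon_k=0$ ($J_{22}=J_{11}$, $J_{23}=J_{13}$, $J_{32}=(a-1)J_{31}$, and $Q:=J_{11}b_3-b_1J_{31}=b_3$), the numerator of $(\det J)\,\partial\avgi/\partial\epsilon_k$ is
\[
b_1J_{11}J_{33}\;-\;(a-1)b_1J_{13}J_{31}\;-\;b_3J_{13}J_{11}\;-\;(a-1)J_{13}b_3\;+\;vJ_{11}b_3,
\]
five terms of which only the last is positive. The piece you label ``$(\sbp-1)J_{23}(b_1J_{31}-b_3)$'' is \emph{not} nonpositive in general: since $b_1J_{31}=(J_{11}-1)b_3$, its sign depends on whether $J_{11}\gtrless 2$. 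The fix is simply to drop that grouping: your final inequality $c'(\beta)-c'(\alpha)<c'(\beta)J_{33}$ (true because $J_{33}>1$) already shows $vJ_{11}b_3<|b_1|J_{11}J_{33}$, and the remaining four terms are each nonpositive, so the conclusion stands. With this repaired bookkeeping your argument is complete and, unlike the paper's, does not require plugging in the $r\to 0$ formulas for $\ewi,\ewj$.
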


To understand (i) from Proposition \ref{prop:IndividualCostShockBip}, notice that, when $k \in \bp$, the direct effect of the shock hitting $k$ will be that $k$ will decrease her contest investment $\ew_k$. Because members of $\sp$ are weaker than $k$, their effort in contests with $k$ will increase. At the same time, they will decrease their investment in contests with other players from $\bp$.  When $k \in \sp$, the direct effect of the shock will again cause a decrease in $\ew_k$. Since all opponents of $k$ are stronger than $k$, they will also decrease their investment in contests with $k$, but will increase their investment in contests with other members of $\sp$.  This will, in turn, lead to a decrease in the total equilibrium effort of other members of $\sp$. This result is a consequence of the network structure of interactions, and  the property of the best reply function, which increases with the effort of a weaker opponent and decreases with the effort of a stronger opponent. Even though some players may spend more in contests after the shock,  $\avgi$ still decreases after the shock.

\section{Discussion} \label{sec:Discussion}
\vspace*{-10pt}

In this section we discuss the relation between LFPS and other concepts of stability used in the analysis of the formation of weighted networks. We point out some issues when these equilibrium concepts are applied to the formation of contest networks, and argue that LFPS addresses some of these issues. Two stability concepts employed in the  literature on weighted network formation are: the Nash stability \citep{rogers2006strategic,bloch2009communication, baumann2017}, and the strong pairwise stability  \citep{bloch2009communication, baumann2017}. In this section we maintain Assumptions 1-2, while Assumption 3 is not needed for the results.  

We first discuss Nash stable networks in our model (Definition \ref{def:NashStable}).  In case when, at zero investment level,  the marginal benefit of investing in a contest against player who does not defend herself is greater than the marginal cost, the complete network will be the only Nash stable network structure.  Otherwise, the empty network is the only Nash stable network structure. The following proposition holds:

\begin{proposition}\label{prop:NashStableNetworks}	
	The Nash stable network is the empty network, when $\frac{\phi'(0)}{r} \leq c'(0)$. Otherwise the unique Nash stable network $\network(\strategy)$ is the complete network, with  $s_{ij} =s_{ji} >0, \; \forall i,j \in N$.
\end{proposition}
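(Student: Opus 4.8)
The plan is to work directly with the first-order structure of each player's (strictly concave) best-response problem. Writing $\pi_i(\network(\strategy)) = \sum_{j \ne i}\frac{\phi(s_{ij})-\phi(s_{ji})}{\phi(s_{ij})+\phi(s_{ji})+r} - c(w_i)$ with $w_i = \sum_{j\ne i} s_{ij}$, I would first record
\[
\frac{\partial \pi_i}{\partial s_{ij}} = MB(s_{ij},s_{ji}) - c'(w_i), \qquad MB(s,y):=\frac{(2\phi(y)+r)\,\phi'(s)}{(\phi(s)+\phi(y)+r)^2}.
\]
Two monotonicity facts drive everything: $MB(s,y)$ is strictly decreasing in $s$ (numerator nonincreasing by concavity of $\phi$, denominator increasing), and $MB(0,y)=\phi'(0)\,\frac{2\phi(y)+r}{(\phi(y)+r)^2}$ is decreasing in $y$ with maximum $\phi'(0)/r$ at $y=0$; hence $MB(s,y)\le \phi'(0)/r$ for all $s,y\ge0$, strictly whenever $s>0$. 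I would also note, from the stated concavity of $\pi_i$ in $\strategy_i$ (indeed $\partial^2\pi_{ij}/\partial s_{ij}^2<0$), that best responses are unique and characterized by $\partial\pi_i/\partial s_{ij}=0$ when $s_{ij}>0$ and $\le 0$ when $s_{ij}=0$.

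For the first regime $\phi'(0)/r\le c'(0)$: since $c'(w_i)\ge c'(0)$ by convexity, the bound gives $\partial\pi_i/\partial s_{ij}\le \phi'(0)/r-c'(0)\le 0$ at every profile, strictly when $s_{ij}>0$. Integrating the gradient along the ray from $\mathbf{0}$ shows $\strategy_i=\mathbf{0}$ is the unique best reply to $\strategy_{-i}=\mathbf{0}$, so the empty network is Nash stable; and in any Nash profile a coordinate with $s_{ij}>0$ would have $\partial\pi_i/\partial s_{ij}<0$, contradicting optimality. Hence the empty network is the unique Nash stable network.

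For the second regime $\phi'(0)/r> c'(0)$ I first rule out the empty network (the marginal value of the first unit of effort against an idle opponent is $\phi'(0)/r-c'(0)>0$) and then prove completeness: in any Nash profile $c'(w_i)<\phi'(0)/r$ for every $i$ --- if $w_i=0$ this is the regime hypothesis, and if $w_i>0$ it follows from the equality FOC of a contest with $s_{ik}>0$ together with $MB<\phi'(0)/r$. Therefore, if some pair had $ij\notin\network$, the unilateral deviation raising $s_{ij}$ from $0$ has marginal value $\phi'(0)/r-c'(w_i)>0$, a profitable deviation; so every Nash network is complete. Existence of the claimed equilibrium then reduces to the scalar equation $\phi'(s)/(2\phi(s)+r)=c'((n-1)s)$, whose left side falls strictly from $\phi'(0)/r$ and whose right side rises from $c'(0)$, so under the regime hypothesis there is a unique root $s^*>0$; the symmetric profile $s_{ij}=s^*$ satisfies every FOC and, by concavity, is a Nash equilibrium --- a complete network with $s_{ij}=s_{ji}>0$.

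It remains to show this is the \emph{unique} Nash stable network, which is the crux. Since every Nash profile is complete, all equilibria are interior and smooth, and each $\pi_i$ is strictly concave in $\strategy_i$; I would establish uniqueness of the interior equilibrium through Rosen's diagonal strict concavity, i.e.\ strict monotonicity of the pseudo-gradient $(\nabla_{\strategy_i}\pi_i)_i$, after which the invariance of the game under relabeling of players forces the unique equilibrium to be symmetric, hence within-pair symmetric, $s_{ij}=s_{ji}$. The main obstacle is verifying this monotonicity: the diagonal blocks are negative definite by own strict concavity, but one must control the sparse off-diagonal cross terms $\partial^2\pi_i/\partial s_{ij}\partial s_{ji}$ and the within-player coupling through $c'(w_i)$, and show they cannot overturn the diagonal dominance. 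Alternatively, one can lean on Proposition \ref{prop:StrenghtMotivation}, by which $\operatorname{sign}(s_{ij}-s_{ji})=\operatorname{sign}(w_j-w_i)$ in any Nash network, reducing the symmetry claim to showing that all $w_i$ coincide.
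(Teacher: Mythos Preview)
Your treatment of the regime $\phi'(0)/r\le c'(0)$ and your completeness argument in the other regime are correct. For completeness your route is in fact cleaner than the paper's: the paper argues by \emph{reallocation} --- when $w_i>0$ it moves effort from an existing contest $ik$ into the missing contest $ij$, with a further sub-split on whether some $s_{ik}\ge s_{ki}$ --- whereas you extract the uniform bound $c'(w_i)<\phi'(0)/r$ from any interior FOC and then observe $MB(0,0)=\phi'(0)/r$, which dispatches all cases at once.

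The genuine gap is the uniqueness step, which you correctly flag as the crux but leave open. The paper does not attempt the Jacobian diagonal-dominance estimate you worry about; it simply invokes Proposition~\ref{prop:Uniqueness} (Rosen uniqueness for the game $C(\bar g)$ on the complete graph) and then uses the relabeling argument you also mention. The specific calculation that makes Rosen's diagonal strict concavity go through --- and resolves your ``main obstacle'' --- is that with unit weights the aggregate payoff collapses,
\[
\sum_{i\in N}\pi_i(\strategy)\;=\;-\sum_{i\in N} c(w_i),
\]
because each contest term $\tfrac{\phi(s_{ij})-\phi(s_{ji})}{\phi(s_{ij})+\phi(s_{ji})+r}$ appears once with each sign and cancels. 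Strict concavity of $-\sum_i c(w_i)$ in $\strategy$ then suffices (together with own-strict-concavity and opponent-convexity of each $\pi_i$, which you already have) to verify Rosen's condition via Goodman's criterion. With uniqueness on the complete graph in hand, permutation invariance forces the unique equilibrium to be symmetric, giving $s_{ij}=s_{ji}=s^*>0$ exactly as you sketch; your fallback via Proposition~\ref{prop:StrenghtMotivation} is unnecessary once you have this cancellation identity.
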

We note that the condition  $\frac{\phi'(0)}{r} > c'(0) $ will be satisfied in the special case when $\phi$ is the identity mapping and  $c$  is a quadratic function defined with $c(x) = \alpha x^2$, for any finite $r>0$ and $\alpha>0$.

Proposition \ref{prop:NashStableNetworks} states that a non-empty Nash stable network is the complete network. This is true even though no contest in the complete network is profitable for any player, and any two players $i$ and $j$ would benefit from ending contest $ij$. However, the destruction of a link is never a profitable unilateral deviation. This is a consequence of a coordination problem which often arises in non-cooperative models of network formation  in which the link formation is a bilateral decision \citep{bloch2009communication}. In our model, the link destruction is essentially a bilateral decision, which creates similar coordination problem. To address this issue \citep[Definition 3]{bloch2009communication} introduces the concept of strong pairwise stability, which considers both unilateral and bilateral deviations. We show that a non-empty strongly pairwise stable contest network does not exist. To see why, recall that the strong pairwise stability is a refinement of the Nash stability.  According to Proposition $\ref{prop:NashStableNetworks}$ the unique non-empty Nash stable network is the complete network. In the complete network, each pair of players has an incentive to bilaterally deviate by destroying the link  between them, since they have the same strength. Therefore, the complete network is not immune to bilateral deviations.

\begin{proposition}\label{prop:StongPStableNetwork}
The strong pairwise stable network is the empty network if $\frac{\phi'(0)}{r} \leq c'(0)$. Otherwise, it does not exist. 
\end{proposition}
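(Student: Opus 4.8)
The plan is to leverage Proposition~\ref{prop:NashStableNetworks} directly, since the strong pairwise stability concept of \citep[Definition 3]{bloch2009communication} is by construction a refinement of Nash stability: any strongly pairwise stable network must first be Nash stable. I would therefore split the argument into the two regimes delineated by the threshold condition $\frac{\phi'(0)}{r} \leq c'(0)$, and in each regime identify the Nash stable candidates and then test them against the additional requirement of immunity to bilateral (pairwise) deviations.

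First I would handle the case $\frac{\phi'(0)}{r} \leq c'(0)$. Here Proposition~\ref{prop:NashStableNetworks} tells us the empty network is the unique Nash stable network. Because strong pairwise stability refines Nash stability, the only possible strongly pairwise stable network is the empty one, so it suffices to verify that the empty network actually \emph{is} strongly pairwise stable. Immunity to unilateral deviations is inherited from Nash stability. For bilateral deviations, the only relevant joint move from the empty network is for some pair $(i,j)$ to form the link $ij$; I would argue this cannot be a Pareto-improving deviation for both players under the threshold condition, since starting a contest against an opponent who will optimally respond is not jointly profitable when the marginal cost of the first unit of effort dominates the marginal contest benefit (this is precisely the content of the threshold $\frac{\phi'(0)}{r} \leq c'(0)$ that already killed unilateral link formation). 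Hence the empty network survives, establishing the first claim.

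Second I would treat the case $\frac{\phi'(0)}{r} > c'(0)$. By Proposition~\ref{prop:NashStableNetworks}, the unique Nash stable network is now the complete network with $s_{ij}=s_{ji}>0$ for all $i,j$. Since strong pairwise stability refines Nash stability, the \emph{only} candidate for a strongly pairwise stable network is this complete network. It then remains to show the complete network fails the bilateral-deviation test, i.e. that it is not immune to a pairwise deviation. I would exhibit the deviation in which an arbitrary pair $(i,j)$ jointly deletes the link $ij$ by setting $s_{ij}=s_{ji}=0$. In the complete network all players are ex-ante symmetric and hence have equal strength ($w_i=w_j$), so by the earlier discussion (following Definition~\ref{def:Strength} and the observation that a contest between two players of equal strength is unprofitable to both, echoing the motivation for Definition~\ref{def:3TypesPlayers}) each of $i$ and $j$ earns a strictly negative expected payoff from contest $ij$ — the expected revenue from a symmetric contest is zero while the cost is strictly positive by Assumption~\ref{ass:Cost}. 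Deleting the link therefore strictly increases both of their payoffs, so it is a profitable bilateral deviation. This shows the complete network is not strongly pairwise stable, and since it was the only Nash stable candidate, no non-empty strongly pairwise stable network exists; nor is the empty network strongly pairwise stable in this regime, because $\frac{\phi'(0)}{r} > c'(0)$ makes unilateral link formation profitable, violating even Nash stability.

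The main obstacle I anticipate is the bilateral-deviation verification in the first regime: unilateral unprofitability of link formation (inherited from Nash stability) does not automatically rule out a \emph{jointly} profitable bilateral link formation, so I must confirm that the threshold condition $\frac{\phi'(0)}{r}\le c'(0)$ indeed makes starting a contest unattractive even when both new opponents optimize simultaneously. The delicate point is that in a bilateral deviation the two players can coordinate their investments, so I would bound each player's payoff from below by the best-response-anticipating payoff already controlled in the proof of Proposition~\ref{prop:NashStableNetworks}, and show that at least one of the two deviators cannot strictly gain. The remaining steps — the symmetry argument and the cost/convexity bookkeeping for the complete network — are routine given Assumption~\ref{ass:Cost} and the CSF form in \eqref{eq:CSF}.
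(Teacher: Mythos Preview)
Your proposal is correct and follows essentially the same route as the paper: split on the threshold, invoke Proposition~\ref{prop:NashStableNetworks} to identify the unique Nash stable candidate in each regime, and then in the second regime exhibit the link-deletion deviation between any symmetric pair in the complete network (zero revenue, strictly positive cost saving).

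One minor remark: the ``main obstacle'' you flag in the first regime is simpler than you suggest, and the paper dispatches it in one line. You do not need best-response bounds or the threshold condition at all to rule out a profitable bilateral deviation from the empty network. Any joint deviation by $i$ and $j$ either (a) creates the link $ij$, in which case $\pi_{ij}+\pi_{ji}=0$ while both incur strictly positive cost, so the sum of payoff changes is strictly negative and at least one deviator weakly loses; or (b) creates links from $i$ or $j$ to third parties who do not move, which is a purely unilateral move already ruled out by Nash stability of the empty network. Hence no pair can both \emph{strictly} gain, which is what strong pairwise stability requires.
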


One would expect that, when initiating a contest, a player takes into account that the rival will fight back. For instance, this is the case in litigation, lobbying, and conflict. These considerations about the response of a  new opponent are absent when one contemplates the Nash equilibrium  which, by definition, does not involve the anticipation of future play\footnote{Nash equilibrium of course  implies  logical "anticipation" that the opponents are rational according to \textit{common knowledge rationality} but no anticipation about future actions of opponents.}   Therefore, in the definition of LFPS networks, we allow that a player takes into account the expected effort of an opponent when forming new links. In particular, we assume that, when  calculating the expected payoff of starting contest $ij$ with action $s_{ij}$, player $i$ assumes that $j$ will fight back by choosing the best response  $s_{ji} = BR(s_{ij})$, given  $j's$ current total spending $w_{j}$. Thus, $i$ is limited farsighted, since she does not take into account further adjustments in investments that will take place in the network once $ij$ is formed. Since calculating all the adjustments  in equilibrium strategies when forming a link is equivalent to solving a highly nonlinear system of equations, which is even numerically a very difficult problem, we believe that this is a reasonable assumption. Experimental results suggest that in network formation games players are limited farsighted \citep{kirchsteiger2016limited}, even in models that are much simpler than the model considered in this paper. Furthermore, experimental evidence indicates that the difficulty in forming correct beliefs about the opponent's best response  may be one of the main reasons behind the fact that in experiments subjects rarely play Nash strategies in Tullock contest games \citep{masiliunas2014behavioral}.

While the analysis of farsighted stable networks is outside of the scope of this paper and, as argued in the previous paragraph, assuming full farsightedness may be too strong of an assumption about the players' behavior,  in Appendix B we define farsighted stable networks by adapting the notion of farsighted stability \citep{jackson2008,herings2009farsightedly, vannetelbosch2015network} to our model of contest network formation.\footnote{To the best of our knowledge, no concept of  farsighted stability has been applied to the formation of weighted networks so far.} The anticipation of the new rival's action after creating a link is of course present when players are fully farsighted. As a consequence, starting a contest will not always be a profitable deviation as was the case with myopic players. Therefore, we expect that farsighted stable networks look differently than Nash stable networks or strong pairwise stable networks. We demonstrate this by means of an example in Appendix B, which shows that the set of farsightedly stable networks and the set of Nash stable networks are different and non-nested. The relation between farsightedly stable contest networks and LFPS networks is not clear, and while it may be an interesting issue to study, the complexity of the model may prove to be too big of a hurdle to overcome.

\section{Conclusion}\label{sec:Conclusion}
\vspace*{-10pt}
 To the best of our knowledge this is the first model of weighted network formation in which the interaction between neighbors is an antagonistic one. Moreover, in the model, actions of neighbors are neither strategic substitutes nor strategic complements. This type of strategic interaction has not been considered in the literature on weighted network formation so far. In the paper, we describe stable networks using different notions of stability. We also derive several comparative statics results illustrating the fact that taking into account the structure of the contest network may lead to very  different results compared to cases when the network structure is ignored. We believe that the qualitative insights of the model are applicable to many situations, including competitions between divisions in companies, lobbying, and allocation of property rights. 

There are several promising directions for further research. First, our model  considers only enmity links. It would be interesting to extend the model by allowing the formation of weighted friendship links that imply positive spillovers (i.e. reduction of cost of fighting), and see if this leads to different stable network configurations. Introducing heterogeneity is a step which is necessary to make the model's predictions empirically testable. Heterogeneity in the effectiveness of the contest technology (function $\phi$), cost of fighting, and transfers can be directly included in the model. Furthermore, one could consider a position in the network as a source of heterogeneity. For instance, we can imagine that the amount of resources each enemy of a country expects to extract decreases with the number of opponents of that country.  Finally, we focus on bilateral contests. It would be interesting to study contest network formation allowing also for multilateral contests. A starting point for this may be the model presented in this paper and \citep{matros2018contests}.  

%\newpage
%\begingroup
%\parindent 0pt
%\parskip 2ex
%\def\enotesize{\normalsize}
%\theendnotes
%\endgroup

\newpage
\bibliographystyle{abbrvnat}
\bibliographystyle{apalike}
\newpage
\bibliography{Literature/BibliographyContestNetwork}

%\newpage
%\input{Figures_CN_May2020}	

\newpage
{\small 
\section*{Appendix A: Proofs}\label{sec:AppendixB}

\subsection*{Contest Game on a Given Network}
 To understand the proofs in Appendix A, it is useful to revisit the case when the set contests is exogeneously given and fixed.  This is the case studied in \cite{Franke2015}.  So, let the set of possible contest in the society be defined with graph $\bnet$. The contest game on  $\bar{g}$ is defined by:
\begin{align}\label{def:GameFixedNet}
	C(\bnet) = \lbrace{N, \{S_i(\bar{g})\}_{i=1}^n, \{\pi_i\}_{i=1}^{n} \rbrace}.
\end{align}
In \eqref{def:GameFixedNet}, $N$ is the set of players, payoff functions $\pi_i$ are defined in \eqref{eq:Payoff}, and the strategy space of player $i$ is given by:
\begin{align*}
	S_i(\bnet) \equiv \lbrace{\vec{s}_i \in \Realo^{n-1}:s_{ij}=0 \text{ whenever } ij \notin \bnet \rbrace}.
\end{align*}
%Therefore, in game $C(\bnet)$, players can choose positive effort $s_{ij}$ only for $ij \in \bnet$. For the remaining part of the paper we use $\nestrategy$ to denote a pure strategy Nash equilibrium of game $C(\bar{g})$ induced by network $\bnet$. For any network $\bnet$, game $C(\bnet)$ is a concave game as defined in  \citep{rosen1965existence}. 
%\begin{definition}
%	\label{def:ConcaveGame} A n-player game is a concave game if (i) the strategy space ($S$) is given with  $S=\{S_{1}\times S_{2}\times ...\times S_{n}|S_{i}\subset E^{m_{i}} \wedge m_i \in \Natural\}$  and $E^{m_i}$ is a closed, convex and bounded subset of Euclidian space\footnote{
%		\citep{rosen1965existence} actually proves a more general result when the strategy space is \textit{coupled}, that is when $S\subset E^{m}=E^{m_{1}}\times E^{m_{2}}\times ...\times E^{m_{n}}$ is a closed, convex and bounded set. Here we consider a special case when the strategy space is \textit{uncoupled}} and (ii) the payoff function $\pi_i(\vec{s})$ of every player $i$ is continuous in $\vec{s}=(\vec{s}_{1},\vec{s}
%	_{2},...,\vec{s}_{n})$ and concave in $\vec{s}_{i}\in S_{i}$ for a fixed  $\vec{s}_{-i}\in S_{-i}.$
%\end{definition}
%

The following proposition, which is a version of the existence and  uniqueness result for the  contest game  on a given network \citep[Proposition 1 and Lemma 1]{Franke2015}, holds as well when the payoff function are given with \eqref{eq:Payoff}.
\begin{proposition}\label{prop:Uniqueness} 
	There exists a unique pure strategy Nash equilibrium of game $C(\bar{g})$, $\nestrategy$.  The equilibrium  $\nestrategy$ is interior ($\bar{s}_{ij} >0 \; \forall ij \in \bar{g}$) if $\phi'(0) =\infty$. When $\phi(x)=x$ and $c(x) =  x^2$ the equlibrium will be interior for $r$ small enough. 
\end{proposition}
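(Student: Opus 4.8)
The plan is to establish existence by a fixed-point argument on a suitably compactified strategy space, and uniqueness by exploiting the strict concavity of each player's problem together with a monotonicity argument on the first-order conditions; interiority will then follow from the boundary behaviour of those conditions.

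For existence I would first compactify. Since each contest contributes at most $\T=1$ to revenue, player $i$'s total revenue never exceeds $d_i$, while her cost $c(w_i)$ is strictly increasing with $c(0)=0$; hence any $w_i$ large enough that $c(w_i)>2d_i$ gives a payoff strictly below that of the null strategy $\strategy_i=\mathbf 0$, so every best reply keeps $w_i$ inside a compact cube $[0,\bar M]^{d_i}\subseteq S_i(\bnet)$. On this set $\pi_i$ is continuous in $\strategy$ and concave in $\strategy_i$, as recorded in the preliminary discussion: the cost term $-c(w_i)$ is concave because $c$ is convex and $w_i$ is linear, and each revenue term is concave in $s_{ij}$ because $\pi_{ij}$ is an increasing concave function of $\phi(s_{ij})$ (one checks $\partial^2/\partial\phi(s_{ij})^2<0$) composed with the concave map $\phi$, so the composition is concave. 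Adding the strictly concave own-contest curvature makes $\pi_i$ strictly concave in $\strategy_i$, so the best-reply correspondence is nonempty, single-valued (hence convex-valued) and upper hemicontinuous; Kakutani's theorem (equivalently the Debreu--Glicksberg--Fan theorem) then delivers a pure-strategy equilibrium $\nestrategy$.

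For uniqueness, which is the heart of the matter, I would pass to the first-order conditions. Writing $\lambda_i=c'(w_i)$ for the shadow price of player $i$'s total investment, the interior condition for contest $ij$ is
\[
\phi'(s_{ij})\,\frac{2\phi(s_{ji})+r}{(\phi(s_{ij})+\phi(s_{ji})+r)^2}=\lambda_i,
\]
together with the symmetric equation for $s_{ji}$ involving $\lambda_j$. For fixed $(\lambda_i,\lambda_j)$ this pair pins down a single isolated contest and, by strict concavity, has a unique solution $(s_{ij},s_{ji})$; an implicit-function computation on this $2\times 2$ system should show that $s_{ij}$ is strictly decreasing in its own price $\lambda_i$. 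Substituting into $w_i=\sum_{j\in N_i}s_{ij}(\lambda_i,\lambda_j)$ and imposing $\lambda_i=c'(w_i)$ reduces the equilibrium to a fixed point in the vector $(\lambda_i)_{i\in N}$ — precisely the system that reappears as \eqref{eq:TotalEffortEqSystem}. The monotonicities (each contest effort decreasing in its own price, $c'$ increasing) make the induced self-map monotone, forcing a unique fixed point; equivalently one may verify Rosen's diagonal-strict-concavity condition, checking that the symmetric part of the Jacobian of the pseudo-gradient — which decomposes into negative semidefinite cost blocks $-c''(w_i)\mathbf 1\mathbf 1^{\top}$ and per-contest $2\times 2$ blocks — is negative definite. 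I expect this monotonicity/definiteness step to be the main obstacle, since it is exactly where the general CSF and the parameter $r$ complicate the quadratic computation available in the benchmark case of \citep{Franke2015}.

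Finally, for interiority: if $\phi'(0)=\infty$ then at any profile with $s_{ij}=0$ (and $\phi(s_{ji})+r>0$) the left-hand side of the first-order condition diverges while $\lambda_i=c'(w_i)$ is finite, so a marginal increase in $s_{ij}$ is strictly profitable and $s_{ij}=0$ cannot be a best reply; hence $\bar s_{ij}>0$ for every $ij\in\bnet$. For the benchmark $\phi(x)=x$, $c(x)=x^2$ the quantity $\phi'(0)$ is finite, so I would instead argue by continuity: uniqueness together with the maximum theorem makes the equilibrium $\nestrategy(r)$ continuous in $r$, and as $r\to 0^{+}$ it converges to the interior solution of the $r=0$ system, so the efforts remain bounded away from zero for all sufficiently small $r$.
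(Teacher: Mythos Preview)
Your existence argument and your invocation of Rosen's diagonal strict concavity are exactly what the paper does (following \citep{Franke2015}); that part is correct. The monotonicity alternative you sketch for uniqueness, however, does not go through: the per-contest effort $s_{ij}(\lambda_i,\lambda_j)$ is indeed strictly decreasing in its own price $\lambda_i$, but its dependence on $\lambda_j$ changes sign according to whether $s_{ij}\gtrless s_{ji}$ --- this is precisely the non-monotone best-reply feature of the Tullock game noted in Section~\ref{ss:PrelimConsiderations} and computed explicitly in the proof of Proposition~\ref{prop:RelativeSpendingContest}. Hence the induced self-map on $(\lambda_i)_{i\in N}$ is not monotone, and Tarski-type reasoning does not apply. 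Uniqueness really does rest on the Rosen negative-definiteness verification you mention as a fallback.

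The more substantive gap is in your interiority argument for $\phi(x)=x$, $c(x)=x^2$. The continuity-in-$r$ route is incomplete: when $r=0$ the payoff is undefined at $s_{ij}=s_{ji}=0$, so Berge's theorem does not give continuity of $\nestrategy(r)$ at $r=0$, and you cannot conclude that an accumulation point of $\nestrategy(r)$ as $r\downarrow 0$ satisfies the $r=0$ best-reply conditions without first ruling out that it lies on the boundary --- which is exactly what you are trying to prove. The paper avoids this circularity by arguing directly for each fixed $r$: it first shows that $\bar s_{ij}=\bar s_{ji}=0$ is impossible for \emph{every} finite $r>0$ (a short marginal-benefit computation, whether $\bar w_i=0$ or one reallocates from another contest), and then treats the asymmetric corner $\bar s_{ij}=0$, $\bar s_{ji}>0$ by using $j$'s first-order condition $r=2\bar w_j(r+\bar s_{ji})^2$ to force $\bar s_{ji}(r)\to 0$ as $r\to 0$, which in turn makes $i$'s marginal benefit $(r+2\bar s_{ji})/(r+\bar s_{ji})^2$ diverge while her marginal cost remains bounded.
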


\begin{proof}[\textbf{Proof of Proposition} \ref{prop:Uniqueness}] \textit{ }
	
	\textbf{Existence and Uniqueness.} 
	 It is enough to follow the same steps as in the proof of  \citep[Proposition 1 and Lemma 1]{Franke2015}	 when the payoff function are given with \eqref{eq:Payoff}. The main part of the proof is showing that game $C(\bnet)$ is a concave game, as defined in \cite{rosen1965existence}, and then directly applying Rosen's result. 
	 
	 \textbf{Interiority.}
 Assume that $ij \in \bnet$,  in the Nash equilibrium, $\nestrategy$,  of $C(\bnet)$,  and that  $\bs_{ij} =0 \; \lor \bs_{ji} = 0$.   We show that when this logical disjunction is true, there is a profitable deviation for either player $i$ or player $j$. Hence,  $\bs_{ij} =0 \; \lor \bs_{ji} = 0$ cannot be a part of the Nash equilibrium of game $C(\bnet)$ when $ij \in \bnet$. We consider the case when $\phi'(0) = \infty$,  and the case $\phi(x) = x, \; c(x)= x^2$ separately.
 \vspace{5pt}
 
 \item[\textit{Case 1:}] {$\phi'(0) = \infty$. }
 
 Suppose, without loss of generality, that $\bar{s}_{ij} = 0$. There is a profitable deviation in which $i$  invests $\epsilon >0$ in contest with $j$.  The marginal cost of this deviation $c'( \bw_{i} + \epsilon)$. The marginal benefit of the deviation is $\frac{r+2 \phi(\bs_{ji})}{\left(\phi(0) + \phi(\bs_{ji}) + r\right)^2}\phi'(0)$ ( which becomes $\frac{r}{(r+\phi(\epsilon))^2} \phi'(\epsilon)$ in case when also $\bar{s}_{ji} = 0$). It is clear that the marginal benefit at $\epsilon=0$ is infinite, while the marginal cost remains bounded.
 %		 \item[(b)] When $\bw_i >0 $, then it must be $\bs_{ik} >0$ for some $ik \in \bnet$. In this case, a deviation in which $i$ reallocates $\epsilon>0$ resources from $ik$ to $ij$ is profitable for $i$ (keeping $\bw_i$ fixed). The marginal cost of such a deviation is in the marginal decrease of the probability of winning the contest $ik$ and is equal to $\frac{ r + 2\phi(\bs_{ki})\phi'(\bs_{ik})} {\left(\phi(\bs_{ki})  \phi(\bs_{ik}) +r\right)^2}$, which is finite. The marginal benefit, as before, is unbounded. 
 
 \item[\textit{Case 2:}] $\phi(x) = x$ and $c(x) = x^2$.
 
 \begin{itemize}
 	\item[(i)] Suppose first that $\bs_{ij} = \bs_{ji} = 0$. We show that this cannot happen for any finite $r>0$. 
 	\begin{itemize}
 		\item[(a)] If $\bw_i = 0$ consider a deviation in which player $i$ invests $\epsilon >0$ in contest $ij$. The cost of this deviation is $\epsilon^2$. The benefit is $\frac{\epsilon}{\epsilon +r}$. It is easy to see that the benefit is larger than the cost, for $\epsilon$ small enough, since
 		\begin{align*}
 		\frac{\epsilon}{\epsilon+r} - \epsilon^2 = \epsilon\left(\frac{1-r \epsilon -\epsilon^2}{r+\epsilon}\right).
 		\end{align*}	
 		
 		\item[(b)] If $\bw_i >0$ then there exists contest $ik$ such that $\bs_{ik} >0$. Consider a deviation in which player $i$ reallocates $\epsilon >0$ from $ik$ to  $ij$ (keeping $\bw_i$ fixed). The marginal benefit of this deviation for player $i$, calculated at $\epsilon=0$ is:
 		\begin{align*}
 		\frac{\partial }{\partial \epsilon}\frac{\epsilon}{\epsilon + r}\bigg|_{\epsilon=0} =\frac{1}{r}.
 		\end{align*}	
 		The marginal cost of the deviation is:
 		\begin{align*}
 		\frac{\partial }{\partial \epsilon} \frac{(\bs_{ik} -\epsilon) - \bs_{ki}}{\bs_{ik} - \epsilon + \bs_{ki} +r}\bigg|_{\epsilon=0}=  -\frac{r+2 \bs_{ki}}{(r+\bs_{ik} + \bs_{ki})^2}.
 		\end{align*}
 		It is easy to check that the marginal benefit outweights the marginal cost. Indeed
 		\begin{align*}
 		\frac{1}{r} - \frac{r+2 \bs_{ki}}{(r+\bs_{ik} + \bs_{ki})^2}=\frac{2 r \bs_{ik} + \left(\bs_{ik}+ \bs_{ki}\right)^2 }{r\left(r + \bs_{ik} +\bs_{ki}\right)^2}>0.
 		\end{align*}	
 	\end{itemize}
 	\item[(ii)] We now show that when $ij \in \bnet$ it cannot be that $\bs_{ij}= 0$ and $ \bs_{ji} >0$ when $r$ becomes infinitesimal. Suppose otherwise, so suppose that this is the case for some two players $i$ and $j$.
 	\begin{itemize} 
 		\item[(a)] If $\bw_i =0$ then a profitable deviation for player $i$ is to exert $\epsilon >0 $ in contest $ij$. The marginal cost of this deviation, $2 \epsilon$,  approaches to  0 when  $\epsilon \rightarrow 0$.  The marginal benefit of the proposed deviation, $\frac{r+2\bs_{ji}}{r+\bs_{ji}+\epsilon}$,  is positive and bounded away from 0. Hence for $\epsilon$ small enough, the proposed deviation is profitable.  
 		
 		\item [(b)] Finally, we consider the case when $\bw_i>0$.  First we show that when $r \rightarrow 0$ then $\bs_{ji} \rightarrow 0$. Using that, we show that the marginal benefit for player $i$ of investing in contest $ij$ calculated at $0$  becomes unbounded when $r$ approaches 0. 
 		
 		Since, by assumption, $\bs_{ji}$ is greater than zero, it must satisfy the first order optimality (sufficient and necessary) conditions. Thus, the following holds:
 		\begin{align*}
 		&\frac{r}{(\bs_{ji} + r )^2} = 2  \sum_{\ell}\bs_{j\ell}  \Rightarrow \\
 		& r = \left(2 \sum_{\ell \neq i}\bs_{j\ell} +  2 \bs_{ji} \right)(r+ \bs_{ji})^2.		\end{align*}	
 	\end{itemize}
 	From the last equation above, it is clear that when $r \rightarrow 0$ then $\bs_{ji} \rightarrow 0$. In this case ($r \rightarrow 0$), the marginal benefit of player $i$ of investing in contest against player $j$ calculated at $0$  ( equal to $\frac{r+2 \bs_{ji}}{(r+\bs_{ji})^2}$) becomes unbounded. Indeed,  it can be verified that:
 	\begin{align}
 	\lim_{r \rightarrow 0}\frac{r+2 \bs_{ji}(r)}{(r+\bs_{ji}(r))^2} =+ \infty,
 	\end{align}	
 	since $\lim_{r \rightarrow 0} \bs_{ji}(r) =0$.
 	The marginal cost of this deviation is  obviously bounded from above. Therefore, it cannot be that $\bs_{ij}= 0$ and $ \bs_{ji} >0$ when $r$ is small enough. 
 \end{itemize}

\end{proof}

\subsection*{ Proofs of Claims from Section \ref{sec:Analysis}}

\begin{proof}[Proof of Proposition \ref{prop:OneStrategyPerLFPS}] \textit{ }
	
	\textbf{Uniqueness.}
	Since both $g(\eqstrategy)$ and $g(\strategy')$ are stable, condition (U) from Definition \ref{def:StableNetworks} must hold.  In particular,  it must hold for any player $i$ and  $L_i = \emptyset$. Then, Proposition \ref{prop:Uniqueness} implies that, if  $g(\eqstrategy)$ and $g(\strategy')$ are two stable networks with the same network structure, $\bnet$, then $ \strategy' = \eqstrategy= \nestrategy$. 
	
	\textbf{Interiority.}
	Follows directly from the proof of the interiority part of Proposition \ref{prop:Uniqueness}.

\end{proof}	
%%%%%%%%%%%%%%%%%%%%%%%%%%%%%%%%%%%%%%%%%%%%%%%%%%%%
%-----Concave game definition-----
%%%%%%%%%%%%%%%%%%%%%%%%%%%%%%%%%%%%%%%%%%%%%%%%%%%%%%%%%%%%%%%%%%%%%%%%%%
%----- Proposition relative spending -----
%%%%%%%%%%%%%%%%%%%%%%%%%%%%%%%%%%%%%%%%%%%%%%%%%%%%%%%%%%%%%%%%%%%%%%%%%%

The following proposition is an extension of Proposition \cite[Proposition 2]{Franke2015} and provides a foundation for definition of strength (Definition \ref{def:Strength}).

\begin{proposition}\label{prop:StrenghtMotivation}
Suppose that conditions for the interiority  from Proposition \ref{prop:OneStrategyPerLFPS} are satisfied, and let $g(\strategy)$ satisfy condition (U) for $L_i = \emptyset$. Then
	$w_i \geq w_j \Rightarrow  s_{ij} \leq s_{ji}$, with equality when $w_i = w_j$.
\end{proposition}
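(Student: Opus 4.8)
The plan is to exploit the interior first-order conditions of the contest game on the fixed structure induced by $\strategy$. Since $g(\strategy)$ satisfies condition (U) for $L_i=\emptyset$, the profile $\strategy$ is a Nash equilibrium of the contest game on the fixed network, and by the interiority hypothesis inherited from Proposition \ref{prop:OneStrategyPerLFPS} every effort $s_{ij}$ with $ij\in g$ is strictly positive. Hence each player maximizes \eqref{eq:Payoff} at an interior point and the stationarity conditions hold with equality.

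First I would differentiate \eqref{eq:Payoff} with respect to $s_{ij}$. Writing $\pi_{ij}=\frac{\phi(s_{ij})-\phi(s_{ji})}{\phi(s_{ij})+\phi(s_{ji})+r}$ and using $\partial w_i/\partial s_{ij}=1$, the stationarity condition for $i$ in contest $ij$ is
\[
\frac{(2\phi(s_{ji})+r)\,\phi'(s_{ij})}{\big(\phi(s_{ij})+\phi(s_{ji})+r\big)^2}=c'(w_i),
\]
and the analogous condition for $j$ in the same contest is
\[
\frac{(2\phi(s_{ij})+r)\,\phi'(s_{ji})}{\big(\phi(s_{ij})+\phi(s_{ji})+r\big)^2}=c'(w_j).
\]
The crucial observation is that the two denominators coincide, because $\phi(s_{ij})+\phi(s_{ji})+r$ is symmetric in the two efforts. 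Dividing one equation by the other therefore eliminates the denominator and yields
\[
\frac{(2\phi(s_{ji})+r)\,\phi'(s_{ij})}{(2\phi(s_{ij})+r)\,\phi'(s_{ji})}=\frac{c'(w_i)}{c'(w_j)}.
\]

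Next I would feed in the hypothesis $w_i\ge w_j$. Since $c$ is strictly convex (Assumption \ref{ass:Cost}), $c'$ is strictly increasing, so the right-hand side is at least $1$, with equality precisely when $w_i=w_j$. Thus $(2\phi(s_{ji})+r)\,\phi'(s_{ij})\ge (2\phi(s_{ij})+r)\,\phi'(s_{ji})$. To conclude $s_{ij}\le s_{ji}$ I argue by contradiction: if $s_{ij}>s_{ji}$, then strict monotonicity of $\phi$ gives $2\phi(s_{ij})+r>2\phi(s_{ji})+r>0$ and $\phi'>0$, while weak concavity gives $\phi'(s_{ij})\le\phi'(s_{ji})$. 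Multiplying the first comparison (by the positive factor $\phi'(s_{ji})$) and the second (by the positive factor $2\phi(s_{ji})+r$) and chaining produces $(2\phi(s_{ji})+r)\,\phi'(s_{ij})<(2\phi(s_{ij})+r)\,\phi'(s_{ji})$, contradicting the displayed inequality. Hence $s_{ij}\le s_{ji}$. For the equality claim, when $w_i=w_j$ the ratio equals exactly $1$; the same sign comparison shows that each of $s_{ij}>s_{ji}$ and $s_{ij}<s_{ji}$ forces a strict inequality in the wrong direction, so $s_{ij}=s_{ji}$.

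I do not expect a serious obstacle here; the argument is a short monotone-comparison computation. The two points demanding care are (i) making sure interiority is genuinely available so that the first-order conditions hold with equality rather than as corner inequalities — this is exactly what the hypothesis borrowed from Proposition \ref{prop:OneStrategyPerLFPS} supplies, and it also guarantees $\phi(s_{ji})>0$ so that $2\phi(s_{ji})+r>0$ — and (ii) correctly combining strict monotonicity of $\phi$ (giving $\phi'>0$) with merely weak concavity (giving $\phi'$ non-increasing), so that the product-of-inequalities step remains strict. Strictness survives because the gap $(2\phi(s_{ij})+r)-(2\phi(s_{ji})+r)$ is strictly positive and multiplies the strictly positive quantity $\phi'(s_{ji})$.
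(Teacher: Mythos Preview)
Your argument is correct and follows essentially the same route as the paper: write the two interior first-order conditions for contest $ij$, divide to obtain $\frac{(r+2\phi(s_{ji}))\phi'(s_{ij})}{(r+2\phi(s_{ij}))\phi'(s_{ji})}=\frac{c'(w_i)}{c'(w_j)}$, and then use strict convexity of $c$ together with monotonicity of $\phi$ and weak concavity ($\phi'$ non-increasing) to conclude $s_{ij}\le s_{ji}$. Your contradiction write-up and the explicit check that strictness survives (because $2\phi(s_{ij})+r>2\phi(s_{ji})+r$ is strict while $\phi'(s_{ij})\le\phi'(s_{ji})$ is only weak) is in fact a bit more careful than the paper's one-line ``the last implication follows from the facts that $\phi$ is increasing and $\phi'$ is decreasing,'' and your treatment of the equality case is likewise more explicit.
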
	

\begin{proof}[Proof of  Proposition \ref{prop:StrenghtMotivation}]
	The following first-order conditions for contest $ij \in g(\strategy)$  must hold:
	\begin{align}\label{eq:FOC}
	\left(\frac{(r+2 \phi (s_{ji})) \phi '(s_{ij})}{(r+\phi (s_{ij})+\phi (s_{ji}))^2}-c'(w_i)=0\right) \wedge 
	\left(\frac{(r+2 \phi (s_{ij})) \phi '(s_{ji})}{(r+\phi (s_{ij})+\phi (s_{ji}))^2}-c'(w_j)=0 \right).
	\end{align}
	From \eqref{eq:FOC} we get:
	\begin{equation*}
	\frac{(r+2 \phi (s_{ji})) \phi '(s_{ij})}{(r+2 \phi (s_{ij})) \phi '(s_{ji})}=\frac{c'(w_i)}{c'(w_j)}.
	\end{equation*}%a
	Since $\phi'(x)> 0$,  $\phi''(x) \leq 0$ and $c''(x)>0$:
	\begin{align}\label{eq:TotalVsSpecificEffort}
	w_i \geq  w_j \Rightarrow \frac{c'(w_i)}{c'(w_j)}\geq1 \Rightarrow \frac{(r+2 \phi (s_{ji})) \phi '(s_{ij})}{(r+2 \phi (s_{ij})) \phi '(s_{ji})}\geq1 \Rightarrow s_{ji} \geq s_{ij},
	\end{align}
	where the last implication in \eqref{eq:TotalVsSpecificEffort} follows from the facts that  $\phi$ is an increasing function and $\phi' $ is a decreasing function. The equality holds when $w_i = w_j$.
\end{proof}	 

%The following corollary states that  Proposition \ref{prop:StrenghtMotivation} must hold in any LFPS network. 
%
%%%%%%%%%%%%%%%%%%%%%%%%%%%%%%%%%%%%%%%%%%%%%%%%%%%%%%%%%%%
%
%\begin{corollary}[of Proposition \ref{prop:StrenghtMotivation}]\label{cor:StrengthMotivation}
%	Let $g(\eqstrategy)$ be LFPS network, and let coniditions for interiority in Proposition \ref{prop:Uniqueness} be satisfied. Then:
%	$\ew_i \geq \ew_j \Rightarrow  \es_{ij} \leq \es_{ji}$, with equality  when $\ew_i = \ew_j$.
%\end{corollary}	

%\begin{proof}
%Consider game $C(\bnet)$ where $ij \in \bnet \Leftrightarrow \es_{ij} + \es_{ji} >0$. It is clear that the unique interior pure strategy Nash equilibrium of the game $C(\bnet)$ must coincide with $\eqstrategy$. 
%\end{proof}

\begin{proof}[\textbf{Proof of Proposition \ref{prop:RelativeSpendingContest}}]
	To prove the claim, we compare the solutions of the FOC system associated to links $ab$ and $ac$.  To do this,  it is helpful to  first consider  the following parameterized system of equations on $\Realo^2$ with unknowns $x$ and $y$, and positive parameters $\beta_1$ and $\beta_2$:
	\begin{align} \label{eq:ParametrizedPlot}
	\begin{split}
	\frac{(r+2 \phi (y)) \phi '(x)}{(r+\phi (x)+\phi (y))^2}-c'(\beta_1)=0, \;
	\frac{(r+2 \phi (x)) \phi '(y)}{(r+\phi (x)+\phi (y))^2}-c'(\beta_2)=0. 
	\end{split}
	\end{align}
	It is easy to verify that \eqref{eq:ParametrizedPlot} satisfies the conditions of the implicit function theorem. Note that when $\beta_1 =  \ew_{a}$ and $\beta_2 =  \ew_{b}$, then $x=\es_{ab}$ and $y=\es_{ba}$ is  the unique solution of system \eqref{eq:ParametrizedPlot}.  Taking the derivative of $x$ and $y$ defined by \eqref{eq:ParametrizedPlot} with respect to $\beta_1$ we get:
	\begin{align*}
	&\frac{\partial x}{\partial \beta_1} = \frac{ c''(\beta_1) \left(r+2 \phi (x)\right) (r+\phi(x)+\phi (y))^2 \left[\phi''(y) (r+\phi (x)+\phi(y))-2 \phi'(y)^2\right]}{Den},\\
	&\frac{\partial y}{\partial \beta_1} = \frac{2 c''(\beta_1) (\phi (x)-\phi (y)) \phi '(x) \phi '(y) (r+\phi (x)+\phi (y))^2}{Den},
	\end{align*}
	where 
	\begin{align*}
	Den= &2 \phi '(x)^2 \left(2 \phi '(y)^2 (r+\phi (x)+\phi (y))-(r+2 \phi (x)) (r+2 \phi (y)) \phi ''(y)\right) \\
	&+(r+2 \phi (x)) (r+2 \phi (y)) \phi ''(x) \left(\phi ''(y) (r+\phi (x)+\phi (y))-2 \phi '(y)^2\right).
	\end{align*}
	
	For positive $x$ and $y$, expression $Den$ will be positive, given the properties of functions $\phi$ and $c$ stated in Assumptions 1-2. Furthermore, the numerator of $\frac{\partial x}{\partial \beta_1}$  is always negative, while the numerator of $\frac{\partial y}{\partial \beta_1}$ is negative when $\phi(x)< \phi(y)$ (and therefore when $x<y$), and otherwise positive. Therefore, for the unique solution $(x,y)$ of system \ref{eq:ParametrizedPlot} the following holds comparative statics result holds:
	\begin{align}\label{eq:DerivativeEffortAlpha}
	\begin{split}
	&\frac{\partial x}{\partial \beta_1} <0,\\
	&\frac{\partial y}{\partial \beta_1} \leq  0 \text{ when } x \leq y,\\
	&\frac{\partial y}{\partial \beta_1}  > 0 \text{ when } x > y.
	\end{split}
	\end{align}
	
We now prove that $\es_{ab} > \es_{ac}$. The other inequalities stated in the claim of the Proposition are proven analogously. Consider \eqref{eq:FOC} associated to $ab$ and  \eqref{eq:FOC} associated to  $ac$, which must hold in an interior equilibrium $\eqstrategy$.
	\begin{align} \label{eq:sabFOC}\tag{\ref{eq:FOC}  ab}
		\frac{(r+2 \phi (\es_{ba})) \phi '(\es_{ab})}{(r+\phi (\es_{ab})+\phi (\es_{ba}))^2}-c'(\ew_{a})=0,\;
		\frac{(r+2 \phi (\es_{ab})) \phi '(\es_{ba})}{(r+\phi (\es_{ab})+\phi (\es_{ba}))^2}-c'(\ew_b)=0.
	\end{align}
\begin{align} \label{eq:sacFOC}\tag{\ref{eq:FOC}  ac}
		\frac{(r+2 \phi (\es_{ca})) \phi '(\es_{ac})}{(r+\phi (\es_{ac})+\phi (\es_{ca}))^2}-c'(\ew_{a})=0,\;
		\frac{(r+2 \phi (\es_{ac})) \phi '(\es_{ca})}{(r+\phi (\es_{ac})+\phi (\es_{ca}))^2}-c'(\ew_c)=0. 
\end{align}

We can think of $\eqref{eq:sabFOC}$  as a system of equations \eqref{eq:ParametrizedPlot} with unknowns  $\es_{ab}$, $\es_{ba}$, where $\ew_a$ and $\ew_b$ are playing a role of $\beta_1$ and $\beta_2$, and analogously for  $\eqref{eq:sacFOC}$. By assumption $\ew_{a} < \ew_{b}$ and $\ew_a < \ew_c$. Then,  Proposition \ref{prop:StrenghtMotivation} implies that  $\es_{ab}>\es_{ba}$,  and $\es_{ac}> \es_{ca}$ respectively. Taking this into account and comparing systems \eqref{eq:sabFOC} and \eqref{eq:sacFOC},  the second inequality in  \eqref{eq:DerivativeEffortAlpha} implies that $\es_{ab} > \es_{ac}$.

To prove that $\pi_{ac}^* >\pi_{ab}^*$ we show that the equilibrium expected revenue of player $i$  in contest $ij$ increases with total spending of her opponent $j$. To do this, we first express $\phi(s_{ab}^*)$ from system \eqref{eq:FOC} for contest $ij$, and get:
\begin{align} \label{eq:EquilibriumInvestment}
\phi(s_{ij}^*) = \frac{2 \left[\phi'(s_{ij}^*)\right]^{2}c'(w_i^*)\phi'(s_{ji}^*)}{\left(\phi'(s_{ij}^*)c'(w_j^*) + \phi'(s_{ji}^*)c'(w_i^*)\right)^{2}}-\frac{r}{2}.
\end{align}

Plugging in \eqref{eq:EquilibriumInvestment} we get that the expected revenue in equilibrium  from contest $ij$ for player $i$ becomes (after some algebra):

\begin{align}
\frac{\phi(\esij) - \phi(\esji)}{\phi(\esij) + \phi(\esji) + r} = 1- \frac{2 \phi'(\es_{ji})c'(\bw_i)}{\phi'(\es_{ij})c'(\ew_j) + \phi'(\es_{ji})c'(\ew_i) }.
\end{align}

Since c is convex, it is straightforward to check now that $\frac{\phi(\esij) - \phi(\esji)}{\phi(\esij) + \phi(\esji) + r}$ increases with $\ew_j$.
\end{proof}	

%Corrolary nested neighborhoods

We now state and prove an important corollary of Proposition \ref{prop:RelativeSpendingContest} which states that the total equilibrium investment $\ew$ is increasing with the neighborhood of a player, with respect to the relation of set inclusion. 

\begin{corollary} [of Proposition \ref{prop:RelativeSpendingContest}]\label{cor:NestedNeighborsTotalSpening}
	Let $N_i \subsetneq N_j$ in stable network $g$, then $\ew_{i} < \ew_{j}$.
\end{corollary}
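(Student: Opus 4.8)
The plan is to argue by contradiction, leveraging the comparative statics of the parametrized first-order system derived in the proof of Proposition~\ref{prop:RelativeSpendingContest}. Because $g = g(\eqstrategy)$ is a stable network it satisfies (U) for $L_i = \emptyset$, so under Assumption~\ref{ass:PhiAndr} the equilibrium is interior (Proposition~\ref{prop:OneStrategyPerLFPS}) and every contest effort is strictly positive. I would assume, toward a contradiction, that $\ew_i \geq \ew_j$, and then derive $\ew_i < \ew_j$.

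The key observation is that $i$ and $j$ share every neighbor of $i$: since $N_i \subsetneq N_j$, each $k \in N_i$ is a common opponent, engaged in contests $ik$ and $jk$. For a fixed such $k$, both contests are instances of the parametrized system~\eqref{eq:ParametrizedPlot}: contest $ik$ is the unique solution $(x,y) = (\es_{ik}, \es_{ki})$ at $(\beta_1, \beta_2) = (\ew_i, \ew_k)$, and contest $jk$ is the solution $(\es_{jk}, \es_{kj})$ at $(\beta_1, \beta_2) = (\ew_j, \ew_k)$. The two systems share the same value $\beta_2 = \ew_k$ and differ only in $\beta_1$. By the first line of~\eqref{eq:DerivativeEffortAlpha}, $\partial x / \partial \beta_1 < 0$ holds uniformly, so the $x$-coordinate is strictly decreasing in $\beta_1$; integrating this along the segment from $\ew_j$ to $\ew_i$ and using $\ew_i \geq \ew_j$ gives $\es_{ik} \leq \es_{jk}$ for every $k \in N_i$.

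Summing these per-contest comparisons then yields the contradiction:
\begin{align*}
\ew_i = \sum_{k \in N_i} \es_{ik} \leq \sum_{k \in N_i} \es_{jk} < \sum_{k \in N_j} \es_{jk} = \ew_j,
\end{align*}
where the strict inequality uses that $N_j \setminus N_i$ is nonempty and that the omitted efforts $\es_{jk}$, $k \in N_j \setminus N_i$, are strictly positive by interiority. Since this contradicts the assumption $\ew_i \geq \ew_j$, we conclude $\ew_i < \ew_j$.

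I expect the only delicate point to be the passage from the infinitesimal comparative statics~\eqref{eq:DerivativeEffortAlpha} to the finite comparison $\es_{ik} \leq \es_{jk}$: this requires that the unique solution of~\eqref{eq:ParametrizedPlot} varies smoothly in $\beta_1$ (guaranteed by the implicit function theorem, already invoked in Proposition~\ref{prop:RelativeSpendingContest}) and that the sign $\partial x / \partial \beta_1 < 0$ is valid along the whole segment $[\ew_j, \ew_i]$, which holds because the numerator of $\partial x / \partial \beta_1$ is always negative and $Den > 0$ throughout. Everything else is bookkeeping, and no appeal to Proposition~\ref{prop:StrenghtMotivation} beyond the interiority already secured is needed.
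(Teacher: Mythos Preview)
Your proof is correct and essentially identical to the paper's: both argue by contradiction from $\ew_i \geq \ew_j$, derive $\es_{ik} \leq \es_{jk}$ for every common opponent $k \in N_i$, and then sum to obtain $\ew_i < \ew_j$. The only cosmetic difference is that the paper cites Proposition~\ref{prop:RelativeSpendingContest} directly for the per-contest comparison, whereas you unpack that citation by invoking the parametrized system~\eqref{eq:ParametrizedPlot} and the sign $\partial x/\partial \beta_1 < 0$ explicitly; this is precisely the mechanism behind the relevant inequality in Proposition~\ref{prop:RelativeSpendingContest}, so the two arguments coincide.
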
	

\begin{proof} [Proof of Corollary \ref{cor:NestedNeighborsTotalSpening}]
	Suppose the claim does not hold. So, suppose that $N_i \subsetneq N_j$ and  $\ew_{i} \geq \ew_{j}$.
	Then, from Proposition \ref{prop:RelativeSpendingContest} it follows that for every $ k \in N_i \cap N_j $  $\es_{ik} \leq \es_{jk}$. But then $\ew_{i}=\sum_{k \in N_i} \es_{ik} \leq \sum_{k \in N_i} \es_{jk} < \sum_{k \in N_j} \es_{jk} = \ew_j$,  which is in contradiction with  $\ew_{i} \geq \ew_{j}$.
\end{proof}	

%%%%%%%%%%%%%%%%%%%%%%%%%%%%%%%%%%%%%%%%%%%%%%%%%%%%%%%%%%%%%%%%%%%%
%----- Proof of Lemma about positive deviations
%%%%%%%%%%%%%%%%%%%%%%%%%%%%%%%%%%%%%%%%%%%%%%%%%%%%%%%%%%%%%%%%%%%%
We now state and prove Lemmas \ref{lem:ProfitableDeviations} to  \ref{lem:OneClassofA} which are concerned with attackers in a stable network. Our main goal is to show that there can be only one class of attackers in LFPS network. For clarity, we do this in several steps, each step being a separate lemma.  We first show that attackers always have links with weakest players in the network (Lemma \ref{lem:ProfitableDeviations}). We use Lemma \ref{lem:ProfitableDeviations} extensively in proofs of subsequent claims in the paper.   An useful corollary of this lemma is that a stable network must be connected. We continue by showing that members of the same class of attackers must have the same neighborhood (Lemma \ref{lem:SameNeighborsAttackers}), and that two different class of attackers cannot have nested neighborhoods (Lemma \ref{lem:AttackersNoNestedNeighbourhood}). Finally, using Lemmas \ref{lem:ProfitableDeviations} - \ref{lem:AttackersNoNestedNeighbourhood} we show that there can be only one class of attackers (Lemma \ref{lem:OneClassofA}).

%\begin{lemma}\label{lem:ProfitableDeviations}
%	 If $ij \in \network(\eqstrategy)$, and $\wnetwork$ is LFPS, then $ik \in \network \;  \forall(k \in N): \ew_{k} \geq \ew_{j}$.
%\end{lemma}
\begin{proof}[\textbf{Proof of Lemma \ref{lem:ProfitableDeviations}}]
 Assume that $g(\eqstrategy)$ is stable, and such that for some player  $i$ and two other players $j,k$ with $\ew_j <\ew_k$ we have $ij \in g(\eqstrategy)$ and $ik \notin g(\eqstrategy)$. We show that in this case there exists a profitable deviation for players $i$ and $j$, hence  $g(\eqstrategy)$ cannot be stable. 
 
 First note that if contest $ij$ is not profitable for $i$, then it cannot be part of the stable network ((B) does not hold).
 
 When $ij$ is profitable for $i$, it must be $\ew_i < \ew_j$.  We show  that there is a profitable bilateral deviation for $i$ and $j$. Consider a deviation in which $j$ deviates from $\eqstrategy_j$ to $\strategy_j'$ such that $s_{ji}' =0$ and $s_{j \ell}' = \es_{j \ell}$ for all $\ell \neq i$. At the same time, $i$ deviates to $\strategy_i'$ such that $s_{ik}' = \es_{ij}$, $s_{ij}' =0$ and $s_{i \ell}' = \es_{i\ell}$ for all $\ell \notin \{j,k\} $. It is clear that this deviation is profitable for $j$. We prove that it is also profitable for $i$.  It is enough to prove that the expected reaction of $k$ to the proposed deviation, denoted by $\hat{s}_{ki}$, is such that $\hat{s}_{ki} < \es_{ji}$. To do this, we note that $\es_{ji}$ must satisfy the following optimality condition:

\begin{align} \label{eq:FOCjStable}
	\frac{(r+2 \phi (\es_{ij})) \phi '(\es_{ji})}{(r+\phi (\es_{ij})+\phi (\es_{ji}))^2} = c'(\ew_j).
\end{align}

The expected reaction of player $k$ to the proposed deviation is determined with the following condition: 
\begin{align}\label{eq:FOCkReaction}
\frac{(r+2 \phi (\es_{ij})) \phi '(\hat{s}_{ki})}{(r+\phi (\es_{ij})+\phi (\hat{s}_{ki}))^2} = c'(\ew_k + \hat{s}_{ki}).
\end{align}	

Since $\ew_k + \hat{s}_{ki} > \ew_k \geq \ew_j$  it must be that $c'(\ew_k + \hat{s}_{ki}) > c'(\ew_j)$. This is due to strict convexity of $c$. Thus the right hand side of \eqref{eq:FOCkReaction} is strictly larger than the right hand side of \eqref{eq:FOCjStable}. The same relation must hold for the left hand sides of  \eqref{eq:FOCjStable} and \eqref{eq:FOCkReaction}. Since $\phi$ is an increasing function and  $\phi'$ is a decreasing function,  this holds only when $\hat{s}_{ki} < \es_{ji}$.  
\end{proof}

%%%%%%%%%%%%%%%%%%%%%%%%%%%%%%%%%%%%%%%%%%%%%%%%%%%%%%%%%%%%%%%%%%%%%
%----- Proof of the Lemma that the stable network must be connected
%%%%%%%%%%%%%%%%%%%%%%%%%%%%%%%%%%%%%%%%%%%%%%%%%%%%%%%%%%%%%%%%%%%%%

\begin{corollary}[of Lemma \ref{lem:ProfitableDeviations}]\label{cor:connected}
	A non-empty stable network $g(\eqstrategy)$ is connected.
\end{corollary}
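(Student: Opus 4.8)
The plan is to argue by contradiction, using Lemma~\ref{lem:ProfitableDeviations} to force every player who is engaged in a contest to be linked to a single ``weakest'' player, which then acts as a hub tying the whole network together. Concretely, let $p \in N$ be a globally weakest player, i.e. $\ew_p = \max_{m \in N} \ew_m$. Since $g(\eqstrategy)$ is non-empty, at least one player invests a positive amount, so $\ew_p > 0$ and $p$ is itself involved in at least one contest.

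The first step is to show that every non-isolated player is a neighbour of $p$. Take any $i \neq p$ with $d_i \geq 1$; then $ij \in g(\eqstrategy)$ for some $j$, and because $p$ is globally weakest we have $\ew_p \geq \ew_j$. Lemma~\ref{lem:ProfitableDeviations} (applied with $k = p$) immediately yields $ip \in g(\eqstrategy)$. Hence every player carrying at least one link is adjacent to $p$, so all such players lie in the single component containing $p$. In particular two distinct non-trivial components cannot coexist: if a second component contained any link, its endpoints would be forced to be adjacent to $p$, which sits in the first component, a contradiction. This is precisely the informal statement that ``two components would force some attacker not to be linked to the weakest player,'' now made rigorous through Lemma~\ref{lem:ProfitableDeviations}.

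The remaining, and I expect only genuine, obstacle is to rule out isolated vertices, which would otherwise constitute their own (trivial) components and break connectivity in the graph-theoretic sense. Suppose $m$ were isolated; then $w_m = 0$, so $m$ is (weakly) the strongest player and in particular strictly stronger than the hub $p$. I would contradict condition (U) by exhibiting a profitable unilateral deviation in which $m$ forms the single link $mp$ (take $L_m = \{p\}$) and invests a small $\epsilon > 0$, with $p$ anticipated to fight back via its best response $BR(\epsilon)$ given its current total spending. Since $m$ currently earns $\pi_m(g(\eqstrategy)) = 0$, it suffices to exhibit a strictly positive payoff for small $\epsilon$; under Assumption~\ref{ass:PhiAndr} the marginal benefit of $m$'s first unit of investment against $p$ is unbounded while its marginal cost stays finite. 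This is exactly the computation carried out in the interiority part of the proof of Proposition~\ref{prop:Uniqueness}, and the fact that $p$'s anticipated response $BR(\epsilon)$ remains bounded (convex, increasing costs) is what lets that computation go through here. Such a profitable $\epsilon$ therefore exists, contradicting (U), so no isolated vertex can occur. Combined with the second paragraph, this shows that the non-empty stable network $g(\eqstrategy)$ is connected. The delicate point to get right is this last profitability argument in the presence of a best-responding opponent, rather than the hub argument, which is essentially immediate from Lemma~\ref{lem:ProfitableDeviations}.
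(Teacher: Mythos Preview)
Your hub argument in the second paragraph is correct and is essentially the paper's argument for the case $|C_2|>1$, stated more directly: any non-isolated player $i$ has some neighbour $j$, and since the globally weakest player $p$ satisfies $\ew_p\ge\ew_j$, Lemma~\ref{lem:ProfitableDeviations} forces $ip\in g(\eqstrategy)$.

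The gap is in your treatment of isolated vertices. You claim that an isolated $m$ can profitably form $mp$ by investing a small $\epsilon>0$, invoking the interiority computation from Proposition~\ref{prop:Uniqueness}. That computation, however, is carried out against a \emph{fixed} opponent action, whereas here $p$ is anticipated to best-respond. Under Assumption~\ref{ass:PhiAndr} (say $\phi'(0)=\infty$), $p$'s best response to \emph{any} $s_{mp}\ge 0$ is strictly positive: already at $s_{mp}=0$ the first-order condition $\frac{r\,\phi'(y)}{(r+\phi(y))^{2}}=c'(\ew_p+y)$ has a solution $y_0>0$, and by continuity $BR(\epsilon)$ stays bounded away from $0$ for small $\epsilon$. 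Hence for small $\epsilon$ one has $\epsilon<BR(\epsilon)$, so $m$ \emph{loses} the contest and her payoff is strictly negative, not positive. Boundedness of $BR(\epsilon)$ from above is not the issue; the problem is that it is bounded from below.

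The paper handles the isolated-vertex case differently: it lets the isolated player $a_2$ mimic the action $s^*_{a_1,v_1}$ of an existing neighbour $a_1$ of the weakest player $v_1$. Since the link $a_1v_1$ must be profitable for $a_1$ in a stable network (otherwise $a_1$ and $v_1$ would jointly delete it, violating (B)), and since $v_1$'s anticipated response to the new challenger is weaker than her equilibrium response to $a_1$ (her marginal cost is now $c'(\ew_{v_1}+\hat s)>c'(\ew_{v_1})$), the isolated player obtains at least $a_1$'s net benefit from that single contest. Replacing your small-$\epsilon$ step with this mimicking argument would close the gap.
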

\begin{proof}[\textbf{Proof of Corollary \ref{cor:connected}}:]
We use  a proof by contradiction. Assume that the claim does not hold, so there are at least two components in stable network $\network$. Choose two components ($C_1$ and $C_2$) from $\network$ such that the weakest player in the network ($v_1$) belongs to $C_{1}$. All opponents of $v_1$ must find the contest with $v_1$ profitable, otherwise the network would not be stable ((B) would not hold). Then, the strongest player in $C_2$ (denote her with $a_2$) by Lemma \ref{lem:ProfitableDeviations} has an incentive to form a link with $v_1$ instead of a link with one of her current opponents,  who by definition is not weaker than $v_1$. If $|C_2|=1$,  $a_2$ does not have any opponents. Then, she has an incentive to form a link with $v_1$ with action $s_{a_1, v_1}^*$,  since $a_1v_1 \in \network$ is a profitable contest for $a_1$.
\end{proof}

%%%%%%%%%%%%%%%%%%%%%%%%%%%%%%%%%%%%%%%%%%%%%%%%%%%%%%%%%%%
%%---- Proof the Lemma about same neighbors of the members of the same class of attackers
%%%%%%%%%%%%%%%%%%%%%%%%%%%%%%%%%%%%%%%%%%%%%%%%%%%%%%%%%%%

\begin{lemma}\label{lem:SameNeighborsAttackers}
Two  players  that  belong  to  the  same  class  of  attackers $\Class_a$ have  the  same  neighborhood in stable network $\wnetwork$.
\end{lemma}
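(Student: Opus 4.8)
The plan is to fix two attackers $i,i'\in\Class_a$, so that $\ew_i=\ew_{i'}=\ew_a$, and to show that Lemma~\ref{lem:ProfitableDeviations} forces each of their neighborhoods to be a threshold set in the strength ordering, after which Corollary~\ref{cor:NestedNeighborsTotalSpening} excludes any mismatch. First I would record the consequence of being an attacker: by Definition~\ref{def:3TypesPlayers} every neighbor of $i$ lies in a strictly weaker class, i.e. $\ew_k>\ew_a$ for all $k\in N_i$, and likewise for $i'$. I may assume $N_i\neq\emptyset$, since an isolated node gives the conclusion trivially and, in a connected non-empty stable network (Corollary~\ref{cor:connected}), does not occur. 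Let $j\in N_i$ be a \emph{strongest} neighbor of $i$, i.e. one with $\ew_j=\min_{k\in N_i}\ew_k$; by the attacker property $\ew_j>\ew_a$.

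The key step is to characterize $N_i$ exactly. Since $ij\in g(\eqstrategy)$, Lemma~\ref{lem:ProfitableDeviations} yields $ik\in g(\eqstrategy)$ for every $k$ with $\ew_k\geq\ew_j$, so $N_i\supseteq\{k:\ew_k\geq\ew_j\}$; and since $\ew_j$ is the minimal strength value attained in $N_i$, every neighbor satisfies $\ew_k\geq\ew_j$, giving the reverse inclusion. Hence $N_i=\{k\in N:\ew_k\geq\ew_j\}$. I would stress here that this threshold set excludes $i$, $i'$, and all of $\Class_a$, because these players have $\ew=\ew_a<\ew_j$; \emph{this is precisely where the attacker hypothesis is indispensable}, since for a hybrid the analogous threshold would reach up to players of strength $\ew_a$ and would force a link between two equal-strength players, which a stable network forbids.

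Carrying out the same argument for $i'$ produces $N_{i'}=\{k\in N:\ew_k\geq\ew_{j'}\}$ for its strongest neighbor $j'$. Two such threshold sets are automatically nested, so without loss of generality assume $\ew_j\le\ew_{j'}$, whence $N_{i'}\subseteq N_i$. If the inequality were strict, then $j$ would satisfy $\ew_j<\ew_{j'}$ and thus lie in $N_i\setminus N_{i'}$, making the inclusion proper; Corollary~\ref{cor:NestedNeighborsTotalSpening} would then give $\ew_{i'}<\ew_i$, contradicting $\ew_i=\ew_{i'}$. Therefore $\ew_j=\ew_{j'}$ and $N_i=N_{i'}$, as claimed. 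The only delicate point, and the one I would be most careful about, is the exactness of the threshold characterization in the previous paragraph; granting that, the monotonicity of strength under neighborhood inclusion (Corollary~\ref{cor:NestedNeighborsTotalSpening}) closes the argument immediately.
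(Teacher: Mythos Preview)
Your proof is correct and relies on the same two ingredients as the paper's argument, Lemma~\ref{lem:ProfitableDeviations} and Corollary~\ref{cor:NestedNeighborsTotalSpening}, but the organization is a bit different. The paper argues by contradiction and case-splits: if $N_i\neq N_{i'}$, then either the neighborhoods are nested, which Corollary~\ref{cor:NestedNeighborsTotalSpening} rules out since $\ew_i=\ew_{i'}$, or they are incomparable, in which case one picks $h\in N_i\setminus N_{i'}$ and $k\in N_{i'}\setminus N_i$ and applies Lemma~\ref{lem:ProfitableDeviations} to whichever of $h,k$ is weaker. You instead use Lemma~\ref{lem:ProfitableDeviations} once to show that each attacker's neighborhood is an upper level set $\{k:\ew_k\ge\ew_j\}$, which makes the incomparable case impossible a priori and reduces everything to the nested case handled by Corollary~\ref{cor:NestedNeighborsTotalSpening}. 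Your route is marginally cleaner because the threshold characterization is reusable (it is essentially what drives Lemma~\ref{lem:Mixed} later), while the paper's version is a line shorter; substantively they are the same argument.
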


\begin{proof}[\textbf{Proof of Lemma \ref{lem:SameNeighborsAttackers}}:]
Let $g$ be a stable network. Consider any two attackers $i,j \in \Class_a$, and suppose, contrary to  what is asserted, that $N_i \neq N_j$.  It cannot be that $N_{i}\subset N_{j}$ because then the total spending of $i$ and $j$ would not be equal (by Corollary \ref{cor:NestedNeighborsTotalSpening}). Since $N_i \neq N_j$, there  exist nodes $h\in N_{i}\backslash N_{j}$  and $k\in N_{j}\backslash N_{i}.$ Suppose that, without loss of generality, $\ew_k \geq \ew_{h}$. Then it is profitable for player $i$ to replace $ih$ with link $ik$ according to Lemma \ref{lem:ProfitableDeviations}. This is in contradiction with the assumption that $\wnetwork$ is stable. 
\end{proof}

%%%%%%%%%%%%%%%%%%%%%%%%%%%%%%%%%%%%%%%%%%%%%%%%%%%%%%%%
%%%% Lemma and proof that two attackers  from different
%%%% classes cannot have nested neighborhood
%%%%%%%%%%%%%%%%%%%%%%%%%%%%%%%%%%%%%%%%%%%%%%%%%%%%%%%%%%%%%%%%%%%%

\begin{lemma}\label{lem:AttackersNoNestedNeighbourhood}
	Let $i$  and $j$  be two attackers in stable network $g(\eqstrategy)$. It cannot be that $N_{i} \subset N_{j}$.
\end{lemma}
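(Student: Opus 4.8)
The plan is to argue by contradiction: assume $i$ and $j$ are attackers in the LFPS network $g(\eqstrategy)$ with $N_i\subsetneq N_j$, and exhibit a unilateral deviation of $i$ that violates condition (U) of Definition \ref{def:StableNetworks}. First I would record the basic orderings. Since $N_i\subsetneq N_j$, Corollary \ref{cor:NestedNeighborsTotalSpening} gives $\ew_i<\ew_j$, so $i$ is strictly stronger than $j$ and the two lie in distinct classes. Pick any $k\in N_j\setminus N_i$; then $jk\in g(\eqstrategy)$ while $ik\notin g(\eqstrategy)$. Because $j$ is an attacker she wins contest $jk$, so Proposition \ref{prop:StrenghtMotivation} forces $\ew_j<\ew_k$, and in particular $\ew_i<\ew_k$, i.e.\ $i$ too is stronger than $k$ and would win against her. (The contrapositive of Lemma \ref{lem:ProfitableDeviations} also yields $\ew_k<\ew_h$ for every $h\in N_i$, placing $k$ strictly between $j$ and $i$'s current opponents; I do not expect to need this beyond locating $k$.)

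The candidate deviation is for $i$ to form the single link $ik$, leaving all her other investments unchanged and investing $s_{ik}=\es_{jk}$, i.e.\ mimicking $j$'s effort against $k$. Under (U) the new opponent $k$ replies with her limited-farsighted best response $\hat s_{ki}$, computed from her current total $\ew_k$ with her remaining investments held fixed. Two inputs drive the argument. The first is that contest $jk$ is strictly valuable to $j$: deleting $jk$ (both players zeroing that link, nothing else changing) strictly benefits the loser $k$, so condition (B) forces it to strictly hurt $j$, which rearranges to
\[
\pi_{jk}(\es_{jk},\es_{kj}) \;>\; c(\ew_j)-c(\ew_j-\es_{jk}).
\]
The second is a first-order-condition comparison, in the spirit of the proofs of Proposition \ref{prop:RelativeSpendingContest} and of the interiority part of Proposition \ref{prop:Uniqueness}. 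When $i$ invests $\es_{jk}$ against $k$, the optimality condition for $\hat s_{ki}$ has the same numerator term in $\phi(\es_{jk})$ as $k$'s equilibrium condition against $j$, but its right-hand marginal cost is $c'(\ew_k+\hat s_{ki})>c'(\ew_k)$ by strict convexity of $c$; since the relevant expression is decreasing in the responder's own effort, this forces $\hat s_{ki}<\es_{kj}$. As $k$ fights $i$ with strictly less intensity than she fights $j$, and the contest revenue $\tfrac{\phi(\es_{jk})-\phi(\cdot)}{\phi(\es_{jk})+\phi(\cdot)+r}$ is decreasing in the opponent's effort, $i$'s revenue from $ik$ strictly exceeds $j$'s revenue from $jk$.

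Combining the two inputs, $i$'s net gain from the deviation is $\pi_{ik}(\es_{jk},\hat s_{ki})-\big[c(\ew_i+\es_{jk})-c(\ew_i)\big]$, whose revenue term already dominates the left side of the displayed inequality; if the incremental cost to $i$ is no larger than the cost $j$ saves on the right side, the gain is strictly positive and (U) is violated, completing the contradiction and hence ruling out $N_i\subset N_j$. The hard part will be precisely this cost comparison. One wants $c(\ew_i+\es_{jk})-c(\ew_i)\le c(\ew_j)-c(\ew_j-\es_{jk})$, which by convexity would follow from $\ew_i\le\ew_j-\es_{jk}$; but this can fail (e.g.\ when $N_j\setminus N_i$ is a singleton, using $\es_{ih}>\es_{jh}$ on common opponents one checks $\ew_i+\es_{jk}>\ew_j$). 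There are two competing forces: $i$ meets a weaker reply from $k$ (because $i$ piles her contest on top of $k$'s full load, raising $k$'s marginal cost), which helps $i$, while $i$'s baseline spending may exceed $j$'s baseline net of $jk$, which hurts $i$. I would resolve this not by mimicking $j$'s exact effort but by letting $i$ choose $s_{ik}$ optimally and invoking monotonicity of the optimal value of an added contest in the initiator's own strength, so that $i$'s strict strength advantage $\ew_i<\ew_j$ delivers a value at least that of $j$; carrying this out carefully, balancing the two forces through the same implicit-function/convexity machinery used for Proposition \ref{prop:RelativeSpendingContest}, is where the real work lies.
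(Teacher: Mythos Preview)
Your single-link deviation approach has a genuine gap that you correctly locate but do not close. The comparison you need is between $i$'s incremental cost $c(\ew_i+s_{ik})-c(\ew_i)$ and $j$'s cost saving $c(\ew_j)-c(\ew_j-\es_{jk})$, and the monotonicity idea you sketch cannot rescue it: the relevant baseline for $j$ is $\ew_j-\es_{jk}$, not $\ew_j$, and as you note this can lie \emph{below} $\ew_i$ (since $\es_{ih}>\es_{jh}$ on common opponents by Proposition~\ref{prop:RelativeSpendingContest}). So in the sense that matters for the cost increment of an \emph{added} contest, $i$ may be the weaker initiator, and the favourable effect on $k$'s side (her marginal cost is computed from $\ew_k$ rather than $\ew_k-\es_{kj}$) does not obviously dominate. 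Optimising over $s_{ik}$ does not help, because the value-of-an-added-contest function is not being compared at ordered arguments.

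The paper avoids this difficulty by a different deviation: instead of adding one link, $i$ adopts $j$'s \emph{entire} strategy, setting $\tilde{\strategy}_i=\eqstrategy_j$ and forming links to all of $L_i=N_j\setminus N_i$. This makes the cost comparison trivial---$i$'s total spending becomes exactly $\ew_j$, so her cost equals $j$'s---and the argument reduces to revenues only. On the common neighbours $h\in N_i$, $i$ still faces $\es_{hi}<\es_{hj}$ (Proposition~\ref{prop:RelativeSpendingContest}); on the new neighbours $h\in L_i$, the limited-farsighted reply satisfies $\hat s_{hi}<\es_{hj}$ by the FOC comparison you already carried out. Hence $i$'s revenue strictly exceeds $j$'s in every contest, so $\pi_i$ after deviation strictly exceeds $\pi_j(g(\eqstrategy))$. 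The paper then splits into the cases $\pi_j(g(\eqstrategy))\geq\pi_i(g(\eqstrategy))$ (immediate) and $\pi_j(g(\eqstrategy))<\pi_i(g(\eqstrategy))$ (a short chain of inequalities using the same revenue dominance). The key idea you are missing is that mimicking $j$ wholesale eliminates the cost side of the ledger.
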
	

\begin{proof}[\textbf{Proof of Lemma \ref{lem:AttackersNoNestedNeighbourhood}}]
	If $i$ and $j$ belong to the same class, then Lemma \ref{lem:SameNeighborsAttackers} implies $N_i = N_j$. Consider now the case when $i$ and $j$ belong to different classes of attackers. We assume that $N_i \subset N_j$ and show that there will always exist a profitable deviation. We will use $N_i$ to denote the neighborhood of $i$ in network $g(\eqstrategy)$.
	
	Since $N_i \subset N_j$, by Corollary \ref{cor:NestedNeighborsTotalSpening} it must be  $\ew_{i} <\ew_{j}$. 
	
	Suppose first that  $\pi_{j}(\network(\eqstrategy)) \geq \pi_i(\network(\eqstrategy))$. We show that in this case $i$ can form links to all players in $L_i = N_j \setminus N_i$, and obtain a payoff greater than $\pi_j(\network(\eqstrategy))$.  To show this, consider the deviation in which player $i$ deviates to  $\tilde{\vec{s}}_i = \vec{s}_{j}^*$. Let us denote the payoff of player $i$ after this deviation with $\pi_i(g(\tilde{\vec{s}}_i, \hat{\vec{s}}_{L_i}, \eqstrategy_{-i -L_i}))$ where  $\hat{\vec{s}}_{L_i}$ is defined in \eqref{eq:AddingALinkS}. 
	We proceed by showing that $\pi_i(g(\tilde{\vec{s}}_i, \hat{\vec{s}}_{L_i}, \eqstrategy_{-i -L_i})) > \pi_j(g(\vec{s}^*))$.
	
	Because $\ew_{i} < \ew_{j}$,  Proposition $\ref{prop:RelativeSpendingContest}$ implies that $s_{ki}^{*} < s_{kj}^{*} \; k \in N_i \cap N_j$. 
	The convexity of the cost function implies that  $\hat{s}_{ki}< s_{kj}^{*}$ for all $k \in L_i$ under the contemplated deviation. This means that after the deviation the expected cost of $i$ will be equal to the cost of  $j$, $i$ and $j$ will have the same set of opponents, and 
	$\frac{\phi(\tilde{s}_{ik})- \phi(\hat{s}_{ki})}{\phi(\tilde{s}_{ik})+\phi(\hat{s}_{ki})+r} >\frac{\phi(s_{jk}^*)- \phi(s_{kj}^*)}{\phi(s_{jk}^*)+\phi(s_{kj}^*)+r} \;  \forall k \in N_j $. Therefore  $\pi_i(g(\tilde{\vec{s}}_i, \hat{\vec{s}}_{L_i}, \eqstrategy_{-i -L_i})) > \pi_j(g(\vec{s}^*))\geq \pi_j(g(\vec{s}^*))$. 
	
	Suppose now that $\pi_i(g(\eqstrategy))>\pi_j((\eqstrategy))$, and suppose that $j$ does not have an incentive to update her strategy (otherwise the network would not be stable).\footnote{ Recall that since $j$ is an attacker, any of her opponents would be better off by destroying a link with $j$.}
	From $\pi_i(g(\eqstrategy))>\pi_j((\eqstrategy))$ it follows that:
	\begin{align}\label{eq:InequalityPibLessPia}
	\sum_{k \in N_i} \pi_{ik}(s_{ik}^{*}, s_{ki}^{*};r) > -c(\ew_j)+c(\ew_i) + \sum_{k \in N_j} \pi_{jk}(s_{jk}^{*}, s_{kj}^{*};r).
	\end{align}
	Consider now the same deviation of player $i$, as contemplated in the first part of the proof.   We get (using $N_i$ to denote neighborhood of $i$ in  network $g(\eqstrategy)$):
	\begin{align*}
	&\pi_i(g(\tilde{\vec{s}}_i, \hat{\vec{s}}_{L_i}, \eqstrategy_{-i -L_i})) - \pi_i(g(\eqstrategy)) = \\
	& \sum_{k \in N_i} \pi_{ik}(s_{jk}^{*}, s_{ki}^*; r) + \sum_{k \in L_i}\pi_{ik}(s_{jk}^{*}, \hat{s}_{ki}; r)- \sum_{k \in N_i} \pi_{ik}(s_{ik}^{*}, s_{ki}^{*}; r) -c(\ew_j)+c(\ew_i) > \\
	& \sum_{k \in N_i} \pi_{ik}(s_{jk}^{*}, s_{ki}^*; r) + \sum_{k \in L_i}\pi_{ik}(s_{jk}^{*}, \hat{s}_{ki}; r) - \left(-c(\ew_j)+c(\ew_i) + \sum_{k \in N_j} \pi_{jk}(s_{jk}^{*}, s_{kj}^{*};r) \right) -c(\ew_j)+c(\ew_i) =\\
	& \sum_{k \in N_i} \pi_{ik}(s_{jk}^{*}, s_{ki}^*; r) + \sum_{k \in L_i}\pi_{ik}(s_{jk}^{*}, \hat{s}_{ki}; r) -  \sum_{k \in N_j} \pi_{jk}(s_{jk}^{*}, s_{kj}^{*};r)  >0,
	\end{align*}
	where the first inequality comes directly from  \eqref{eq:InequalityPibLessPia}  and the last inequality comes from the fact that $\hat{s}_{ki} < s_{kj}^{*}$ for all $k \in L_i$.  This completes the proof. 
\end{proof}

%%%%%%%%%%%%%%%%%%%%%%%%%%%%%%%%%%%%%%%%%%%%%%%%%%%%%%%%%%%%%%%%%%%
%%%%---Proof lemma one class of attackers
%%%%%%%%%%%%%%%%%%%%%%%%%%%%%%%%%%%%%%%%%%%%%%%%%%%%%%%%%%%%%%%%%%%

\begin{lemma}\label{lem:OneClassofA}
	There is only one class of attackers ($\Class_1$) in stable network $g(\eqstrategy)$. Members of $\Class_1$ are connected to all players outside $\Class_1$. 
\end{lemma}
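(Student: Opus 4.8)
The plan is to establish the two assertions in turn: first that the only attackers in $g(\eqstrategy)$ are the members of $\Class_1$, and then that each such member is linked to every player outside $\Class_1$. The workhorse throughout will be the observation that, by Lemma \ref{lem:ProfitableDeviations}, the neighborhood of any attacker is a \emph{suffix} of the strength ordering: if $p$ is an attacker and $\Class_\ell$ is the strongest class she fights, then $p$ has a link to some player with total investment $\ew_\ell$, so Lemma \ref{lem:ProfitableDeviations} forces $p$ to be linked to every player with $\ew \geq \ew_\ell$, whence $N_p = \Class_\ell \cup \Class_{\ell+1} \cup \dots \cup \Class_M$. (Every player has at least one link, since a non-empty stable network is connected by Corollary \ref{cor:connected}.)

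For the first assertion, observe that every member of $\Class_1$ is trivially an attacker, as all other players are weaker. Suppose, for contradiction, that some attacker $j$ lies outside $\Class_1$, say $j \in \Class_k$ with $k \geq 2$, and fix any $i \in \Class_1$. Both $i$ and $j$ are attackers, so by the suffix observation their neighborhoods $N_i$ and $N_j$ are suffixes of the strength ordering and hence nested. Lemma \ref{lem:AttackersNoNestedNeighbourhood} rules out strict inclusion in either direction, so $N_i = N_j =: S$. Moreover $ij \notin g$: were the link present, the weaker player $j$ would be fighting the stronger player $i$, contradicting that $j$ is an attacker. Thus $i$ and $j$ are two non-adjacent players with identical neighborhoods.

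The crucial step is to deduce $\ew_i = \ew_j$ from $N_i = N_j$ and $ij \notin g$, which contradicts $i$ and $j$ belonging to distinct strength classes. I expect this to be the main obstacle, because the earlier comparative result (Corollary \ref{cor:NestedNeighborsTotalSpening}) only compares strengths under \emph{strict} neighborhood inclusion, so the boundary case of equal neighborhoods must be treated separately. The cleanest route is a symmetry argument: let $\sigma$ be the permutation of $N$ that swaps $i$ and $j$ and fixes all other players. Because $N_i = N_j$ and $ij \notin g$, the map $\sigma$ sends every edge of $\bnet$ to an edge and every non-edge to a non-edge, so it is an automorphism of $\bnet$. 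Since the contest game $C(\bnet)$ is symmetric under relabeling of players and, by Proposition \ref{prop:Uniqueness}, has a unique equilibrium, that equilibrium profile must be invariant under $\sigma$; hence $\es_{im} = \es_{jm}$ for every $m \in S$, and summing over $S$ gives $\ew_i = \ew_j$, the desired contradiction. Therefore no attacker exists outside $\Class_1$, and $\Class_1$ is the unique class of attackers.

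For the second assertion, note that a non-empty stable network exhibits at least two distinct strength levels (a contest between equally strong players is never profitable, by the remark following Definition \ref{def:3TypesPlayers}), so $M \geq 2$. Since $\Class_1$ is the only class of attackers, no member of $\Class_2$ is an attacker, so any player $p \in \Class_2$ has an opponent who is not weaker than her; as equal-strength contests cannot occur, this opponent is strictly stronger, hence a member $q \in \Class_1$. Fix such a link $pq \in g$. As $p \in \Class_2$ is the strongest possible opponent of the attacker $q$, the suffix observation with $\ell = 2$ gives $N_q = \Class_2 \cup \dots \cup \Class_M = N \setminus \Class_1$. Finally, Lemma \ref{lem:SameNeighborsAttackers} implies that every member of $\Class_1$ shares the neighborhood of $q$, so each is linked to all players outside $\Class_1$, which completes the proof.
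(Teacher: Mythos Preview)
Your proof is correct, and in several respects it is cleaner and more complete than the paper's own argument.

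For the first assertion, the paper argues by contradiction: assuming two attacker classes with representatives $i\in\Class_1$ and $j\in\Class_2$, it splits into case (i) $N_i \subset N_j$, dispatched by Lemma~\ref{lem:AttackersNoNestedNeighbourhood}, and case (ii) $N_i \not\subset N_j$, where it selects $k \in N_i \setminus N_j$ and $h \in N_j \setminus N_i$ and invokes Lemma~\ref{lem:ProfitableDeviations}. Read literally, this case split is loose: it never explicitly treats $N_j \subsetneq N_i$ (which would need Lemma~\ref{lem:AttackersNoNestedNeighbourhood} with roles reversed) and, more importantly, it never treats $N_i = N_j$, which Lemma~\ref{lem:AttackersNoNestedNeighbourhood} does not exclude since its proof rests on Corollary~\ref{cor:NestedNeighborsTotalSpening} and hence on \emph{strict} inclusion. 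Your route avoids this: the suffix observation forces the two attacker neighborhoods to be nested, Lemma~\ref{lem:AttackersNoNestedNeighbourhood} in both directions forces equality, and your automorphism argument (the transposition $\sigma=(i\;j)$ is a graph automorphism because $N_i=N_j$ and $ij\notin g$, so uniqueness of the equilibrium in Proposition~\ref{prop:Uniqueness} gives $\ew_i=\ew_j$) supplies exactly the missing step. This symmetry-plus-uniqueness device is the key idea the paper leaves implicit.

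For the second assertion, the paper's proof of this lemma is in fact silent; the claim that members of $\Class_1$ are linked to everyone outside $\Class_1$ is effectively established only later, inside the proof of Lemma~\ref{lem:Mixed}. Your direct argument---some $p\in\Class_2$ is not an attacker and hence (ruling out equal-strength contests) has an opponent $q\in\Class_1$, so the suffix of $q$ starts at $\Class_2$, and Lemma~\ref{lem:SameNeighborsAttackers} then propagates $N_q=N\setminus\Class_1$ to all of $\Class_1$---is self-contained and closes the statement within the lemma itself.
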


\begin{proof}[\textbf{Proof of Lemma \ref{lem:OneClassofA}}:]
Suppose, contrary to what is asserted, that  there are two different classes of attackers $\Class_{1}$ and $\Class_{2}$ in LFPS network $g(\eqstrategy)$. Since Lemma \ref{lem:SameNeighborsAttackers} implies that all members of the same class of attackers have the same neighborhood, we restrict our attention to representative nodes $i \in \Class_{1}$ and $j \in \Class_{2}$.

Since $\ew_j >\ew_i$ there are 2 possible situations that we need to consider:

(i) $N_{i}\subset N_{j}$ is ruled out by Lemma \ref{lem:AttackersNoNestedNeighbourhood}.

(ii) $N_{i}\not\subset N_{j}\implies (\exists k\in N_{i}\backslash N_{j}\wedge \exists h\in N_{j}\backslash N_{i}).$ If $\ew_{k} \geq \ew_h$ Lemma \ref{lem:ProfitableDeviations}  implies that $j$ has a profitable deviation. If $\ew_{k}< \ew_{h}$  the same lemma implies that $i$ has a profitable deviation. 

\end{proof}

%%%%%%%%%%%%%%%%%%%%%%%%%%%%%%%%%%%%%%%%%%%%%%%%%%%%%%%%%%%%%%%
%%% Lemma and proof  Mixed types
%%%%%%%%%%%%%%%%%%%%%%%%%%%%%%%%%%%%%%%%%%%%%%%%%%%%%%%%%%%%%%%

We now prove a lemma which is concerned with hybrids. In the proof we rely on arguments which are analogous to those used in the proof of Lemma \ref{lem:OneClassofA}.

\begin{lemma}\label{lem:Mixed}
	In  a  stable  network $\wnetwork(\eqstrategy)$  all  members  of  a  hybrid class are connected  to  all  other  nodes  in the  network  that  do  not  belong  to  their  class.
\end{lemma}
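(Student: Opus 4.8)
The plan is to prove the statement by strong induction on the class index, peeling classes off from the strongest downwards and treating each hybrid class as an ``attacker'' toward the players weaker than it, so that the machinery already developed for attackers (Lemmas \ref{lem:ProfitableDeviations}, \ref{lem:AttackersNoNestedNeighbourhood}, \ref{lem:OneClassofA} and Corollary \ref{cor:NestedNeighborsTotalSpening}) can be reused almost verbatim. The induction hypothesis, for a hybrid class $\Class_k$ with $2\le k\le M-1$, is that every member of a strictly stronger class $\Class_\ell$ ($\ell<k$) is linked to all players outside $\Class_\ell$; the base case $k=1$ is exactly Lemma \ref{lem:OneClassofA}.

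First I would pin down the strong neighborhood. By the induction hypothesis each $\Class_\ell$ with $\ell<k$ is linked to everybody outside it, hence to every $a\in\Class_k$; since $\Class_1\cup\dots\cup\Class_{k-1}$ is precisely the set of players stronger than $a$, this gives $\overline{N}_a=\Class_1\cup\dots\cup\Class_{k-1}$, the same for all $a\in\Class_k$. For the weak neighborhood, Lemma \ref{lem:ProfitableDeviations} shows at once that it is a ``suffix'' of the strength order, $\underline{N}_a=\Class_c\cup\dots\cup\Class_M$ for some $c\ge k+1$, and since two suffixes are always nested, Corollary \ref{cor:NestedNeighborsTotalSpening} forces all members of $\Class_k$ to share the same $c$ (a strict inclusion would contradict equality of $\ew$ inside the class). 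It then remains to show $c=k+1$, i.e. that there is no gap.

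Suppose instead $c\ge k+2$, so that $\Class_k$ skips the adjacent weaker class $\Class_{k+1}$. Because all of $\Class_k$ has the common neighborhood just described, no member of $\Class_{k+1}$ is linked to $\Class_k$; combined with the induction hypothesis this gives every member of $\Class_{k+1}$ the strong neighborhood $\Class_1\cup\dots\cup\Class_{k-1}$ as well, and (by the same application of Lemma \ref{lem:ProfitableDeviations}) a suffix weak neighborhood $\Class_{c'}\cup\dots\cup\Class_M$. Thus, within the weaker population $\Class_k\cup\dots\cup\Class_M$, both $\Class_k$ and $\Class_{k+1}$ have all their opponents weaker than themselves while carrying the identical block of contests with $\Class_1\cup\dots\cup\Class_{k-1}$: they are two coexisting attacker-like classes, the configuration already excluded for genuine attackers. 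Since the two suffix weak neighborhoods are comparable, I would split into three cases. If $\underline{N}_{\Class_{k+1}}\subsetneq\underline{N}_{\Class_k}$ then the full neighborhoods satisfy the same strict inclusion and Corollary \ref{cor:NestedNeighborsTotalSpening} yields $\ew_{k+1}<\ew_k$, contradicting that $\Class_{k+1}$ is weaker. If $\underline{N}_{\Class_k}\subsetneq\underline{N}_{\Class_{k+1}}$ I would run the mimicking deviation of Lemma \ref{lem:AttackersNoNestedNeighbourhood}: the stronger representative of $\Class_k$ copies the strategy of $\Class_{k+1}$ on the extra weak opponents and attains a strictly higher payoff, violating (U). If the two weak neighborhoods coincide, the two classes have identical full neighborhoods, so the relabeling swapping them is an automorphism of the contest graph and uniqueness of the equilibrium (Proposition \ref{prop:Uniqueness}) forces $\ew_k=\ew_{k+1}$, contradicting that they are distinct classes. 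Hence $c=k+1$, and the suffix property then gives that every $a\in\Class_k$ is linked to all of $\Class_{k+1}\cup\dots\cup\Class_M$; together with the strong neighborhood this is the claim, closing the induction.

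The main obstacle is the middle (nested) case: I must verify that the mimicking deviation of Lemma \ref{lem:AttackersNoNestedNeighbourhood}, originally written for players whose opponents are all weaker, still goes through once each of $\Class_k$ and $\Class_{k+1}$ additionally maintains the \emph{identical} block of contests against all stronger classes $\Class_1,\dots,\Class_{k-1}$. Concretely this means re-checking that holding those common contests fixed alters neither the cost comparison (the two representatives incur the same total spending after the deviation) nor the convexity-driven reaction estimates $\hat{s}_{ki}<\es_{ki}$ that make the copied contests more favorable to the stronger player, so that the strictly-higher-payoff conclusion survives. The remaining ingredients—the suffix structure of weak neighborhoods, the within-class coincidence, and the equal-neighborhood symmetry argument—are routine consequences of Lemma \ref{lem:ProfitableDeviations}, Corollary \ref{cor:NestedNeighborsTotalSpening}, and Proposition \ref{prop:Uniqueness}.
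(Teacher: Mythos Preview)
Your proposal follows essentially the same inductive route as the paper: peel off classes from strongest to weakest, show each hybrid class has a common neighborhood, and then that it must be linked to the adjacent weaker class, reusing Lemmas~\ref{lem:ProfitableDeviations}, \ref{lem:AttackersNoNestedNeighbourhood} and Corollary~\ref{cor:NestedNeighborsTotalSpening} exactly as the paper does. Your organization is, if anything, tidier than the paper's: by first establishing that weak neighborhoods are suffixes of the strength order and that members of a class share the same suffix (via Corollary~\ref{cor:NestedNeighborsTotalSpening}), you collapse the paper's nested/non-nested dichotomy on $N_i$ versus $N_j$ into a clean three-way comparison of the two suffix starting points. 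The equal-neighborhood case (your case~(c)) via the automorphism/uniqueness argument is a nice touch that the paper handles only implicitly.

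One remark on the obstacle you flag. In the nested case (your case~(b), the paper's case~(i)) you propose to re-run the mimicking deviation of Lemma~\ref{lem:AttackersNoNestedNeighbourhood} and check that the common strong-opponent block is harmless. The point you would actually have to confront is not the cost side nor the new-opponent reactions $\hat{s}_{ki}<\es_{kj}$ (both of which are fine, as you note), but the \emph{revenue on the common strong-opponent contests}: for a strong opponent $\ell$ with $\ew_\ell<\ew_i<\ew_j$, the comparative statics behind Proposition~\ref{prop:RelativeSpendingContest} give $\es_{\ell i}>\es_{\ell j}$, so when $i$ copies $j$'s investment against $\ell$ while $\ell$ still plays $\es_{\ell i}$, player $i$'s revenue from that contest is strictly below $j$'s, and the ``$\pi_i(\text{after})>\pi_j$'' comparison of Lemma~\ref{lem:AttackersNoNestedNeighbourhood} no longer goes through term-by-term. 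The paper is equally terse at this step (``apply the same argument as in Lemma~\ref{lem:AttackersNoNestedNeighbourhood} to $\wneighbor_i$ and $\wneighbor_j$''), so your proposal matches the paper's level of detail; just be aware that the check you sketch does not yet isolate this term.
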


\begin{proof}[\textbf{Proof of Lemma \ref{lem:Mixed}}:]
If there are only two classes of nodes in a stable network ($\Class_{1}$ and $\Class_{2}$) then there are no hybrids. Suppose there are more than two classes of nodes in a stable network. First, let us consider the strongest hybrid class $(\Class_{2})$. A node $h \in \Class_{2}$ must be connected to all nodes from  $\Class_{1}$. This is because hybrid $h$ must be connected to at least one player that is stronger than her, who must be an attacker since $h \in \Class_2$. Then, Lemma \ref{lem:OneClassofA} implies that $h$ must be connected to all players from $\Class_{1}$, since all nodes in $\Class_1$ have the same neighborhood.  This holds for any $h \in \Class_{2}$.

Let $\sneighbor_{i}  = \{j \in N_i: \ew_j < \ew_i\}$ and $\wneighbor_{i}  = \{j \in N_i: \ew_j \geq \ew_i\}$ denote the strong and the weak neighborhood of player $i$ respectively. 

We now prove that all members of the class $\Class_{2}$ have the same neighborhood. Suppose this is not true. Let $h_{1}$ and  $h_2$ be two players from $\Class_2$ such that $N_{h_{1}} \neq N_{h_{2}}.$  The following implication holds: $(\Class_{1}\subset N_{h_{1}}\wedge 
\Class_{1}\subset N_{h_{2}})\Rightarrow ((N_{h_{1}}/N_{h_{2}})\cup
(N_{h_{2}}/N_{h_{1}}))\cap \Class_{1}=\emptyset $. Thus, $\sneighbor_{h_1} = \sneighbor_{h_2}$ and  $\wneighbor_{h_1} \neq \wneighbor_{h_2}$.  It cannot be $%
\wneighbor_{h_{1}}\subset \wneighbor_{h_{2}}\vee \wneighbor_{h_{2}}\subset \wneighbor_{h_{1}}$ because then it would be $\ew_{h_{1}} \neq \ew_{h_{2}}$ by Corollary \ref{cor:NestedNeighborsTotalSpening}. Consider two nodes, $k\in \wneighbor_{h_{1}}\setminus \wneighbor_{h_{2}}$ and $\ell \in \wneighbor_{h_{2}}\setminus \wneighbor_{h_{1}}.$ If $\ew_{k}\geq \ew_{\ell}$ then $h_{2}$ and $\ell$ have a profitable deviation (link  $h_2 \ell$  is destroyed, link $h_{2}k$ is formed). If $\ew_{k}< \ew_{\ell}$, then $h_{1}$ and $k$ have an analogous profitable deviation.

Let $\Class_{3}$ be the third strongest class in the network. If $M=3$ then, by definition, all players in $\Class_{2}$ must be connected to some players from $\Class_{3} $, because otherwise they would not be hybrid types. Note that if player $i\in \Class_{3}$ is connected to some player from class $\Class_{2} $ then she is connected to all players from class $\Class_{2}$  - because we have shown that all members of class $\Class_{2}$ have the same neighborhood. If there exists player $j \in \Class_3$ who is not connected to all players  from $\Class_2$, then $j$ is only connected to all players from $\Class_1$. But then $i$ and $j$ cannot belong to the same class. So, for $K=3$ the claim of the lemma holds.

Suppose $M>3$.  Lemma \ref{lem:ProfitableDeviations} implies that all members of $\Class_{1}$ must be connected to all members of $\Class_{3}$ since they are connected to all members of $\Class_{2}$. We now show that all players from $\Class_2$ are connected to all players from $\Class_3$.  Again we proceed by using a proof by contradiction. Suppose that there exist players $i \in \Class_{2}$ and  $j \in \Class_{3}$  such that $ij \notin g(\eqstrategy)$. We show that in this case there is a profitable deviation. Player $i$ loses only in contests with players from $\Class_{1}$. Hence, $i$ has control over all of her links except links with players from $\Class_{1}.$ Furthermore, $\ew_i<\ew_{j} \Rightarrow N_{i}\neq N_{j}.$ There are two possibilities for relation between $N_i$ and $N_j$ that we need to consider:

(i) $N_{i}\subset N_{j}$ case can be ruled out by applying the same argument as in Lemma \ref{lem:AttackersNoNestedNeighbourhood} to $\wneighbor_i$ and $\wneighbor_j$.

(ii) $N_{i}\not\subset N_{j}\Rightarrow (\exists k\in N_{i}\backslash N_{j}\wedge \exists h\in N_{j}\backslash N_{i}).$ But then, if $\ew_{k} \geq \ew_{h}$ Lemma \ref{lem:ProfitableDeviations} implies that $j$ has a
profitable deviation, and if $\ew_{k}< \ew_{h}$, the same Lemma implies that $i$ has a profitable deviation.

We have shown that in a stable network it cannot happen that there are no links between  members of $\Class_{2}$ and $\Class_{3}$. If two players from $\Class_2$ and $\Class_3$ are connected, than all players from $\Class_2$ and $\Class_3$ are connected, because all players from $\Class_2$ have the same neighborhood, and because of Lemma \ref{lem:ProfitableDeviations}.

Using the same reasoning as above,  we can show that all players from $\Class_{k}$ must be connected to all players from $\Class_{k+1}$. Since the number of nodes in the network is finite, the number of classes is finite and this procedure reaches $\Class_{M}$ in a finite number of steps.
\end{proof}

%%%%%%%%%%%%%%%%%%%%%%%%%%%%%%%%%%%%%%%%%%%%%%%%%%%%%%%%%%%%%%%%
%%% Corrolary victims and proof omitted
%%%%%%%%%%%%%%%%%%%%%%%%%%%%%%%%%%%%%%%%%%%%%%%%%%%%%%%%%%%%%%%%

\begin{corollary}\label{cor:Victims}
	There  is  only  one  class  of  victims in a stable network $\wnetwork$  and  all  victims  have  the  same  neighborhood
\end{corollary}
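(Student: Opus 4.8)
The plan is to read the statement off the complete $M$-partite structure that has just been assembled. Lemma \ref{lem:OneClassofA} (the unique attacker class $\Class_1$ is linked to all players outside it) together with Lemma \ref{lem:Mixed} (each hybrid class is linked to all players outside it, and consecutive classes are fully connected, which chains through Lemma \ref{lem:ProfitableDeviations} to give all cross-class links) shows that in any non-empty stable network $\wnetwork(\eqstrategy)$ every player is connected to every player outside her own strength class. First I would record this explicitly: each class $\Class_i$ has the well-defined neighborhood $N \setminus \Class_i$. This is the structural backbone of the argument.

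Next I would recall the definition of a victim: a player $a \in \Class_i$ is a victim exactly when all of her opponents lie in the strictly stronger classes $\underline{\Class}_i = \{\Class_j : j < i\}$. The uniqueness of the victim class is then immediate. For any class $\Class_i$ with $i < M$, a representative player is connected (by the $M$-partite structure) to players in the weaker classes $\Class_{i+1}, \ldots, \Class_M$, so she has at least one opponent weaker than herself, contradicting the victim definition. Hence no member of a class $\Class_i$ with $i < M$ is a victim, so none of these is a class of victims.

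It then remains to handle the weakest class $\Class_M$. Every player $a \in \Class_M$ is connected to all players outside $\Class_M$, and all of these belong to strictly stronger classes $\Class_1, \ldots, \Class_{M-1}$; thus all of $a$'s contests are with stronger opponents, so $a$ is a victim (equivalently, by Proposition \ref{prop:StrenghtMotivation}, $\ew_a > \ew_j$ for every neighbor $j$ forces $a$ to lose each contest). Therefore $\Class_M$ is the unique class of victims, and since every member of $\Class_M$ has the common neighborhood $N \setminus \Class_M$, all victims share the same neighborhood.

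I do not anticipate a serious obstacle, since the corollary is essentially a corollary of the $M$-partite structure. The only point demanding care is to confirm that this structure is genuinely in force at this stage for the weakest class: specifically, that the chaining in Lemma \ref{lem:Mixed} through Lemma \ref{lem:ProfitableDeviations} already guarantees that each $\Class_M$ player is linked to \emph{all} of $N \setminus \Class_M$, rather than to some proper subset of the stronger classes, so that the neighborhood equality $N_a = N \setminus \Class_M$ (and not merely $N_a \subseteq \underline{\Class}_M$) can be asserted.
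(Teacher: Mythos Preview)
Your proposal is correct and takes essentially the same approach as the paper, which simply records that the corollary follows from Lemma~\ref{lem:OneClassofA} and Lemma~\ref{lem:Mixed}; you have merely unpacked that one-line citation into the explicit reasoning. The only minor point you might tighten is the assertion that each $\Class_i$ has neighborhood exactly $N\setminus\Class_i$: for $i=M$ this also uses that two players of equal strength are never linked in a stable network (noted after Definition~\ref{def:3TypesPlayers}), which rules out intra-$\Class_M$ links and secures the equality rather than just the inclusion.
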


\begin{proof}[\textbf{Proof of Corollary \ref{cor:Victims}:}]
	Follows from  Lemma \ref{lem:OneClassofA} and Lemma \ref{lem:Mixed}. 
\end{proof}

We show now that classes must be of different sizes, and that stronger players belong to more numerous classes. 

%%%%%%%%%%%%%%%%%%%%%%%%%%%%%%%%%%%%%%%%%%%%%%%%%%%%%%%%%%%%%%%%
%%% Proposition about sizes of partitions in the stable network
%%%%%%%%%%%%%%%%%%%%%%%%%%%%%%%%%%%%%%%%%%%%%%%%%%%%%%%%%%%%%%%%
\begin{lemma}\label{lem:ClassSizes}
	Let $|\Class_k|$ denote  the  number  of  nodes  that  belong  to  class $\Class_k$ in stable network $g(\eqstrategy)$. Then $|\Class_k| > |\Class_{k+1}|$  $\forall k \in \{1,2,..., M-1\}$.
\end{lemma}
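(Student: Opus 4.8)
The plan is to compare two adjacent classes directly, exploiting the complete $M$-partite structure established in Proposition \ref{prop:FormationMain}. Fix $k \in \{1,\dots,M-1\}$ and pick representatives $a \in \Class_k$ and $b \in \Class_{k+1}$; since the classes are sorted by strength we have $\ew_a < \ew_b$. Complete $M$-partiteness gives $N_a = N \setminus \Class_k$ and $N_b = N \setminus \Class_{k+1}$, and by the uniqueness part of Proposition \ref{prop:OneStrategyPerLFPS} together with the invariance of the game $C(\bnet)$ under permutations of players within a class, the per-contest investment of $a$ against an opponent $j$ depends only on the class of $j$ (and likewise for $b$). Writing $C = N \setminus (\Class_k \cup \Class_{k+1})$ for the set of common opponents and $m_\ell := |\Class_\ell|$, I would decompose the total investments as
\begin{align*}
\ew_a = \sum_{j \in C}\es_{aj} + \sum_{j \in \Class_{k+1}}\es_{aj}, \qquad \ew_b = \sum_{j \in C}\es_{bj} + \sum_{j \in \Class_k}\es_{bj}.
\end{align*}

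The crucial input is that for every common opponent $j \in C$ the stronger player spends strictly more, i.e.\ $\es_{aj} > \es_{bj}$. This is exactly what Proposition \ref{prop:RelativeSpendingContest} yields when applied to the triangle $\{a,b,j\}$ (pairwise linked, in three distinct classes): if $j$ is weaker than both $a$ and $b$ it is the inequality $\es_{ac} > \es_{bc}$, and if $j$ is stronger than both it is the inequality $\es_{ba} > \es_{ca}$, after the obvious relabeling of the roles of the three players. Hence $\sum_{j\in C}\es_{aj} \geq \sum_{j\in C}\es_{bj}$, with equality only in the degenerate case $C = \emptyset$ (i.e.\ $M=2$). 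Combining this with $\ew_a < \ew_b$ and the decomposition gives the strict inequality
\begin{align*}
\sum_{j\in \Class_{k+1}}\es_{aj} = \ew_a - \sum_{j\in C}\es_{aj} < \ew_b - \sum_{j\in C}\es_{bj} = \sum_{j \in \Class_k}\es_{bj}.
\end{align*}

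By the symmetry noted above, $\sum_{j \in \Class_{k+1}}\es_{aj} = m_{k+1}\,\es_{ab}$ and $\sum_{j\in \Class_k}\es_{bj} = m_k\,\es_{ba}$, where $\es_{ab}$ and $\es_{ba}$ are the two investments in a single $\Class_k$–$\Class_{k+1}$ contest. Since $a$ is stronger than $b$, Proposition \ref{prop:StrenghtMotivation} gives $\es_{ab} > \es_{ba} > 0$. Substituting yields $m_{k+1}\,\es_{ab} < m_k\,\es_{ba} < m_k\,\es_{ab}$, and dividing through by $\es_{ab} > 0$ produces $m_{k+1} < m_k$, which is the claim. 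The main obstacle — and essentially the only place where genuine work is needed — is establishing the uniform per-contest comparison $\es_{aj} > \es_{bj}$ over common opponents of \emph{either} relative strength, which I resolve by invoking the two appropriate inequalities of Proposition \ref{prop:RelativeSpendingContest}; the rest is bookkeeping with the complete $M$-partite structure. I would finally verify the boundary case $M=2$ (empty $C$) separately, where the decomposition collapses to $\ew_a = m_{k+1}\es_{ab} < m_k\es_{ba} = \ew_b$ and the identical final step delivers $m_{k+1} < m_k$.
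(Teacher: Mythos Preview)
Your proof is correct and follows essentially the same approach as the paper's: the same decomposition of $\ew_a$ and $\ew_b$ into common-opponent terms and cross terms, the same appeal to Proposition~\ref{prop:RelativeSpendingContest} for the common terms and to Proposition~\ref{prop:StrenghtMotivation} for $\es_{ab}>\es_{ba}$, with the cosmetic difference that you argue directly while the paper argues by contradiction. One minor point: you invoke Proposition~\ref{prop:FormationMain} for the complete $M$-partite structure, but that proposition's proof itself cites Lemma~\ref{lem:ClassSizes}; to avoid apparent circularity you should instead cite Lemmas~\ref{lem:OneClassofA}, \ref{lem:Mixed} and Corollary~\ref{cor:Victims}, which establish the $M$-partite structure independently of the present lemma.
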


\begin{proof}[\textbf{Proof of Lemma \ref{lem:ClassSizes}}:]
Suppose that the claim does not hold, so  $|\Class_{k}| \leq|\Class_{k+1}|$ for some $k =1,...,M-1$. The system \eqref{eq:FOC} implies that for any player $a$,  $s_{ac}^* = s_{ad}^*$ whenever $d$ and $c$ belong to the same class. Therefore, for any two players $a, b$ such that $a \in \Class_{k}$ and $b \in \Class_{k+1}$,  we have that  $\ew_{a}= \sum_{i\neq k, c \in \Class_i}|\Class_{i}|s_{ac}^{*}$ and $\ew_{b}=\sum_{i\neq k+1, c \in \Class_i}|\Class_{i}|s_{bc}^*$.
 Since $\ew_{a} < \ew_{b}$, Proposition \ref{prop:RelativeSpendingContest} implies  $s_{ac}^* >s_{bc}^*,$  $ c \in \{\Class_1, \Class_2..,\Class_K\}\setminus \{\Class_k,\Class_{k+1}\}$. Furthermore, since $\ew_{a} < \ew_b$ we have that $s_{ab}^* > s_{ba}^*$ according to Proposition \ref{prop:StrenghtMotivation}. But then $|\Class_{k}|< |\Class_{k+1}| \Rightarrow \sum_{i\neq k, c \in \Class_i}|\Class_{i}|\es_{ac} >\sum_{i\neq k+1, c \in \Class_i}|\Class_{i}|\es_{bc} \Rightarrow \ew_a > \ew_b$. This is in contradiction with $a \in \Class_k$ and $b \in \Class_{k+1}$. 
\end{proof}

%%%%%%%%%%%%%%%%%%%%%%%%%%%%%%%%%%%%%%%%%%%%%%%%%%%%%%%%%%%%%%%%
%%% - Proof of main formation result
%%%%%%%%%%%%%%%%%%%%%%%%%%%%%%%%%%%%%%%%%%%%%%%%%%%%%%%%%%%%%%%%

\begin{proof}[\textbf{Proof of Proposition \ref{prop:FormationMain}:}]
	From Lemma \ref{lem:OneClassofA}, Lemma \ref{lem:Mixed} and Corollary \ref{cor:Victims} it directly follows that a nonempty stable network $\wnetwork$ must be a complete $M$-partite network. Lemma \ref{lem:ClassSizes} directly implies the  asymmetry in sizes. 
\end{proof}

\begin{proof}[\textbf{Proof of Proposition \ref{prop:SufficientBipartite}}:]
	
 When $g(\eqstrategy)$ satisfies  (U) for $L_i = \emptyset\; \forall i \in N$ and $g(\eqstrategy)$ has a complete bipartite  structure then $g(\eqstrategy)$ satisfies (U) for any $L_i \subseteq F_i$. To see this, consider game $C(\Bip{n-\ssp, \ssp})$.  Proposition \ref{prop:Uniqueness} states that there is a unique pure strategy Nash equilibrium  $\nestrategy$ of game  $C(\Bip{n-\ssp, \ssp})$. The equilibrium is interior under Assumption 1-3. Since $\nestrategy$ is the NE of $C(\Bip{n-\ssp, \ssp})$, $g(\nestrategy)$ satisfies (U) for $L_i = \emptyset$. The only new links that can be formed in $g(\nestrategy)$ are with members of own partition. It is easy to see that no player will have an incentive to form a link with a member of own partition in $g(\nestrategy)$, since all members of the same partition have the same total spending, as they play the same strategy. Hence, $\nestrategy$ satisfies condition (U) from Definition \ref{def:StableNetworks}. In the remaining part of the proof we show that part (B) of Definition \ref{def:StableNetworks} will be satisfied when $\ssp < \ssp^*$. 

First note that a deviation in which players $i \in \bp$ and $j \in \sp$ destroy link $ij$ is profitable for player $j \in \sp$, simply because she is a victim. We will now show that there exists $\ssp^*>0$ such that $i \in \bp$ prefers to destroy link with $j \in \sp$ in $\Bip{n-\ssp, \ssp}$ whenever $\ssp \geq \ssp^*$. To this end, let us define functions $h:\Realo^3 \rightarrow \Realo$ and  $f:\Realo^3 \rightarrow \Realo$ by: 
\begin{align}\label{eq:ValueFunction}
\begin{split}
h(v, s, r) = \max_{x} \left\{ \frac{x - s}{x + s +r}\ssp- \alpha (\ssp x)^{2}\right\},
\end{split}
\end{align}

\begin{align}
f(n,v,r)= h(v-1,\bar{s}_{v, n-v},r) - h(v,\bar{s}_{v, n-v},r),
\end{align}
where $\bar{s}_{v,n-v}$ denotes the  Nash equilibrium per-contest investments of a member of $\sp$  in $C(\Bip{n-\ssp, \ssp})$. Due to symmetry, all members of the same partition will play the same strategy in $\nestrategy$. Note that  $f(n,v,r)$ is the expected benefit of destroying a link of an attacker in network $g(\nestrategy)$, where $\nestrategy$  is the Nash equilibrium of  $C(\Bip{n-\ssp, \ssp})$.

We now show that function $f$ is monotonically increasing in $v \in [1, a]$ and that it takes a positive value when $\ssp$ is big. We will treat $\ssp$ as a continuous variable in the remaining part of the proof. 

We show now that for $v \in [1, \sbp] = [1, n-\ssp]$, 
\begin{align*}
f\left(n, \ssp-1,r\right) < f\left(n,\ssp,r\right).
\end{align*}

In order to do this, we first show that $h$ decreases with $s$, and that it decreases faster with $s$ for higher values of $v$ ($\frac{\partial h}{\partial s}$ decreases with $v$). Indeed, taking the derivative of $h$ with respect to $s$ we get:
\begin{align}\label{eq:Partialhs}
\frac{\partial h }{\partial s} = \frac{\partial h}{\partial x} \frac{\partial x}{\partial s} + \frac{\partial h}{\partial s} = -\frac{2 x +r}{(x+ s+r)^{2}}\ssp,
\end{align}
where we used the fact that  $\frac{\partial h}{\partial x} = 0$, since $x$ is the maximizer of $h$. Differentiating with respect to $\ssp$ we get:
\begin{align}\label{eq:Partialhsv}
\frac{\partial^2 h }{\partial v \partial s} = 2 \left[ -\frac{ x+ \frac{r}{2}}{(x+s+r)^2} + \ssp \frac{x-s}{(x+s+r)^3} \frac{\partial x }{\partial v} \right].
\end{align}
The above derivative will be negative for all positive values of $s$ and $x$  such that $x\geq s$ and $\frac{\partial x }{\partial v} <0$. This will hold in particular when $\ssp \in \left[1, \sbp\right]$ -  since  in the Nash equilibrium of  $C(\Bip{n-\ssp,\ssp})$, the attackers exert a higher effort than the victims $(x \geq s)$ and the investment of members of $\bp$  decreases with $v$ ($\frac{\partial x }{\partial v} <0$).  

From \eqref{eq:Partialhs} and \eqref{eq:Partialhsv} we have that when $\ssp \in \left[1, \sbp \right)$
\begin{align}\label{eq:fDecreasingV}
\frac{\partial \left[h(v-1,s,r) - h(v,s,r)\right]}{\partial s}  >0.
\end{align}	

Since $\bar{s}_{v-1, n-v+1} < \bar{s}_{v-1, n-v} < \bar{s}_{v, n-v}$ from \eqref{eq:fDecreasingV}  directly follows that:
\begin{align*}
&h(v-1,\bar{s}_{v,n-v},r) - h(v,\bar{s}_{v,n-v},r) > h(v-1,\bs_{v-1,n-v+1},r)- h(v,\bs_{v-1,n-v+1},r) \Rightarrow\\
&f(n,v, r) > h(v-1,\bs_{v-1,n-v+1},r)- h(v,\bs_{v-1,n-v+1},r).
\end{align*}
Finally, using the fact that $h$ is concave in $v$ (directly follows from the concavity of payoff function  $\frac{x - s}{x + s +r}\ssp- (\ssp x)^{2}$ in $\ssp$, see \citep[Theorems 2.12. and 2.13 in Section 7]{de2000mathematical} for the formal argument), the following holds:  
\begin{align*}
h(v-1,\bs_{v-1,n-v+1},r)- h(v,\bs_{v-1,n-v+1},r) >  h(v-2,\bs_{v-1,n-v+1},r)- h(v-1,\bs_{v-1,n-v+1},r),
\end{align*}
and therefore:
\begin{align*}
&f(n,v,r) > f(n,v-1,r),
\end{align*}
which is what we wanted to prove. 

If for $\ssp = 1$, $f$ takes a positive value, than no $\Bip{n-\ssp, \ssp}$ is stable.  If for $\ssp = 1$ $f$ takes a negative value,  this means that star network is stable. 

Suppose now that $r=0$. By solving for the equilibrium of game $C(\Bip{n-\ssp, \ssp})$ (see \cite{Franke2015} for details) we verify that when $n \geq 4$ attackers will have positive payoff in a star network. Since $h(0, s,r) =0$,  we have that $f(3,1,0) <0$ and the star $\Bip{3,1}$ is stable.   
In the same way we verify that when $f(n, \frac{n}{2}-\frac{n}{2 \sqrt{5}},0) >0$ for all $n\geq 4$. 

The fact that $f$ is strictly monotone, and that it changes sign implies that there exists $v^{*} \in \left[1,\frac{n}{2}-\frac{n}{2 \sqrt{5}} \right]$ such that $f(n,v,r) \geq 0$ for $v \geq v^{*}$ and $f(n,v,r) < 0$ for $v < v^{*}$, which completes the proof.

\end{proof}

%%%%%%%%%%%%%%%%%%%%%%%%%%%%%%%%%%%%%%%%%%%%%%%%%%%%%%%%%%%%%%%%%
%%%%%            -----COMPARATIVE STATIC-----                %%%%
%%%%%%%%%%%%%%%%%%%%%%%%%%%%%%%%%%%%%%%%%%%%%%%%%%%%%%%%%%%%%%%%%
\subsection*{Proofs of Claims from Section \ref{sec:CompStatic}}

We first show that the contest game on a complete bipartite network is a nice aggregative game \citep{acemoglu2013aggregate}, so we can use results from that paper for some of our comparative statics exercises. For the cases when results from \citep{acemoglu2013aggregate} cannot be directly applied, we rely on the implicit function derivation of the equilibrium conditions.
We use the fact that when stable network $g(\eqstrategy)$ has a complete bipartite structure $\Bip{\sbp, \ssp}$, then $\eqstrategy$ is the Nash equilibrium of game $C\left(\Bip{\sbp, \ssp}\right)$. This is a direct consequence of the fact that stable  $g(\strategy)$ satisfies (U) for $L_i = \emptyset$.

\begin{lemma}\label{lem:NiceAggregative}
	The contest game on a complete bipartite network $C\left(\Bip{\sbp, \ssp}\right)$ can be represented as a \textit{nice aggregative game} as defined in \citep{acemoglu2013aggregate}.	
\end{lemma}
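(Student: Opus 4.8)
The plan is to verify, one property at a time, the defining features of a nice aggregative game from \citep{acemoglu2013aggregate} after first collapsing the game to a low-dimensional form. On the complete bipartite network every $i\in\bp$ faces exactly the opponent set $\sp$ (and symmetrically), so by the uniqueness and symmetry of the equilibrium of $C(\Bip{\sbp,\ssp})$ (Proposition \ref{prop:Uniqueness}) together with the concavity of \eqref{eq:Payoff} in $\strategy_i$ established in Section \ref{ss:PrelimConsiderations}, every attacker plays the same per-contest effort $x$, every victim the same per-contest effort $y$, and each splits her total effort equally across her contests. Hence the relevant object for equilibrium analysis is the two-player game between a representative attacker and a representative victim, with payoffs
\begin{align*}
\Pi_{\bp}(x,y) = \ssp\cdot \frac{\phi(x)-\phi(y)}{\phi(x)+\phi(y)+r} - c(\ssp x), \qquad
\Pi_{\sp}(x,y) = \sbp\cdot \frac{\phi(y)-\phi(x)}{\phi(x)+\phi(y)+r} - c(\sbp y).
\end{align*}

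Next I would exhibit the aggregator. Setting $Q=\phi(x)+\phi(y)$ and substituting $\phi(y)=Q-\phi(x)$ (resp.\ $\phi(x)=Q-\phi(y)$), the payoffs become functions of own action and the aggregate alone,
\begin{align*}
\Pi_{\bp}(x,Q) = \ssp\cdot \frac{2\phi(x)-Q}{Q+r} - c(\ssp x), \qquad
\Pi_{\sp}(y,Q) = \sbp\cdot \frac{2\phi(y)-Q}{Q+r} - c(\sbp y),
\end{align*}
which is exactly the aggregative form $\Pi_i(s_i,Q)$ with additively separable aggregator $Q=\sum_k h_k(s_k)$ and identical shift functions $h_k=\phi$. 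Assumption \ref{ass:Technology} makes $\phi$ twice differentiable and strictly increasing, so $Q$ satisfies the requirements on the aggregator, and Assumption \ref{ass:Cost} makes each $\Pi_i$ twice continuously differentiable.

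It then remains to check the regularity (``niceness'') conditions. For compactness I would argue that, since $c$ is strictly convex and increasing while the contest revenue per player is bounded above by the number of her contests, any per-contest effort above a threshold is strictly dominated; restricting each action set to a compact interval $[0,\bar{x}]$ therefore leaves the equilibrium set unchanged. Strict concavity of each $\Pi_i$ in its own action is inherited from the concavity of \eqref{eq:Payoff} in $\strategy_i$ through the linear equal-split embedding $x\mapsto(x,\dots,x)$. The remaining and most delicate item is the local solvability condition of \citep{acemoglu2013aggregate}, which for our aggregator amounts to
\begin{align*}
\frac{\partial^2 \Pi_i}{\partial s_i^2} + \frac{\partial^2 \Pi_i}{\partial s_i\,\partial Q}\,\phi'(s_i) \neq 0
\end{align*}
along the first-order locus. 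I expect this sign analysis to be the main obstacle: it is the direct analogue of the positivity argument for the quantity $Den$ in the proof of Proposition \ref{prop:RelativeSpendingContest}, and must again be driven by $\phi'>0$, $\phi''\le 0$, and $c''>0$ from Assumptions \ref{ass:Technology}--\ref{ass:Cost}. Once compactness, smoothness, own-concavity, and local solvability are all in place, the reduced game meets the definition of a nice aggregative game, and the aggregate comparative statics of \citep{acemoglu2013aggregate} can be applied to the equilibrium of $C(\Bip{\sbp,\ssp})$ and hence to the total spending $\avgi$.
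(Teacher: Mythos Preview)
Your approach is essentially the paper's: reduce $C(\Bip{\sbp,\ssp})$ by symmetry to a two-player game between a representative attacker and victim, then exhibit the aggregative structure via $Q=\phi(x)+\phi(y)$ (the paper does this implicitly by rewriting the revenue as $-1+\frac{2\phi(s_{ij})+r}{\phi(s_{ij})+\phi(s_{ji})+r}$). The only cosmetic difference is that the paper normalizes the representative payoffs as $\frac{\phi(s_{ij})-\phi(s_{ji})}{\phi(s_{ij})+\phi(s_{ji})+r}-\frac{1}{\ssp}c(\ssp s_{ij})$ rather than your $\ssp\cdot(\text{revenue})-c(\ssp x)$, a harmless rescaling; and the paper simply asserts the ``niceness'' conditions are straightforward to verify, whereas you spell out compactness, own-concavity, and local solvability explicitly.
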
	

\begin{proof}[\textbf{Proof of Lemma \ref{lem:NiceAggregative}:}]
	The pure strategy Nash equilibrium  of game $C(\Bip{\sbp, \ssp})$ is such that all players from the same class play the same strategy and invest the same amount of effort in each of their contest.  The conditions which determine the equilibrium investments in  $C(\Bip{\sbp, \ssp})$ are equivalent to the system of FOCs that pins down the pure strategy Nash equilibrium of two players contest game in which the strategy space of each player is the set of nonnegative real numbers and the payoffs are defined by:
	
	\begin{align*}
	&\pi_i(s_{ij}, s_{ji};r) = \frac{\phi(s_{ij})-\phi(s_{ji})}{\phi(s_{ij})+ \phi(s_{ji})+r} - \frac{1}{\ssp}c(\ssp s_{ij}),\\
	&\pi_j(s_{ji}, s_{ij};r) = \frac{\phi(s_{ji})-\phi(s_{ij})}{\phi(s_{ij})+ \phi(s_{ji})+r} - \frac{1}{\sbp}c(\sbp s_{ij}).
	\end{align*}
	
	Since  $ \frac{\phi(s_{ij})-\phi(s_{ji})}{\phi(s_{ij})+ \phi(s_{ji})+r}=-1+ \frac{2\phi(s_{ij})+r}{\phi(s_{ij})+\phi(s_{ji})+r}$ and $\phi(\cdot)$ is strictly increasing it is straightforward to verify that this game is a \textit{nice aggregative game} studied in \citep{acemoglu2013aggregate} (see Definition 1 and Definition 6 in \cite{acemoglu2013aggregate}).
\end{proof}

\begin{proof}[\textbf{Proof of Proposition \ref{prop: ComparativeStaticEfficiency}}] \textbf{ }
	   %Hence, claims that hold for $\nestrategy$ hold for $\eqstrategy$. 
	   
	   According to Lemma \ref{lem:NiceAggregative},  the contest game on a complete bipartite network can be represented as a nice aggregative game. For completeness we define the notion of \textit{positive shock} relevant for our model as introduced in \citep{acemoglu2013aggregate}.\footnote{See \citep{acemoglu2013aggregate} for a more general statement.}
	   
	   \begin{definition}[Positive shock]\label{def:PositiveShock}
	 Consider the payoff functions $\pi_i = \pi_i (s_i , s_{-i} , t )$ with $s_i \in S_i \subseteq \Real$ and $t \in \Real$.  Then an increase in $t$ is called a positive shock if each $\pi_i$ exhibits increasing differences in $s_i$ and $t$.
	   \end{definition}
	\begin{enumerate}
		\item   To prove the claim it is enough to show that the change in cost function from $c(\cdot)$ to $\costnew(\cdot)$ is a positive shock for both players.  Denote with $\pinew_i$ the payoff function of player $i \in \bp$ (and symmetrically for $j \in \sp$) after $c$ becomes $\costnew$. It is straightforward to see that $\frac{\partial \pinew_i}{\partial s_{ij}} \leq	\frac{\partial \pi_i}{\partial s_{ij}} $ when $\costnew'(\ssp s_{ij}) \leq c'(\ssp s_{ij}) $, which according to Definition \ref{def:PositiveShock} implies that the contemplated change is a positive shock. 
		\item To prove the claim it is enough to show that an increase in transfer $T$ is a positive shock for both players.   Differentiating we get that for player $i$ (symmetrically for $j$) 
		\begin{align*}
		\frac{\partial^2{\pi_i}}{\partial s_{ij} \partial T} = \frac{(r+2 \phi (s_{ji}) \phi '(s_{ij})}{(r+\phi (s_{ij})+\phi (s_{ji}))^2} >0,
		\end{align*}
		which implies that an increase in $T$ is a positive shock.  
		\item To conduct a comparative statics exercise with respect to $r$ we cannot apply the result for aggregative games, as an increase in $r$ can be a positive shock for one player, and, at the same time, a negative shock for some other player. Indeed, 
		\begin{align*}
			\frac{\partial^2 \pi_{i}}{\partial {s_{ij}} \partial r}=  \frac{\phi(s_{ij})-3 \phi(s_{ji})-r}{(\phi(s_{ij}) +\phi(s_{ji})+r)^{3}}\phi'(s_{ij}),
		\end{align*}
		does not have the same sign for all non-negative arguments.  Therefore, we rely on the implicit function theorem. The strategy profile $\eqstrategy$ satisfies the first order optimality conditions:
		\begin{align}\label{eq:CompStaticFOC}
			\begin{split}
				&\frac{r+2 \phi(\esji)}{(\phi(\esij)+\phi(\esji)+r)^{2}}\phi'(\esij) = c'\left(\sum_{k\in \sp}s_{ik}^*\right), \; \; i \in \bp, \\
				&\frac{r+2 \phi(\esij)}{(\phi(\esij)+\phi(\esji)+r)^{2}}\phi'(\esji) = c'\left(\sum_{k\in \bp}s_{jk}^*\right),\; \; j \in \sp.
			\end{split}
		\end{align}
	
		Taking the derivative of \eqref{eq:CompStaticFOC} with respect to $r$. To make expressions short, in writing we omit the dependence of  $\esij$ and $\esji$ on $r$. We get the following system of equations:
		
			{\small 
	\begin{align*}
		&\frac{\esij' (2 \phi (\esji)+r) \phi ''(\esij)}{(\phi (\esij)+\phi (\esji)+r)^2}+\frac{\phi '(\esij) \left(2 \esji' \phi '(\esji)+1\right)}{(\phi (\esji)+\phi (\esji)+r)^2}\\
		&-\frac{2 (2 \phi (\esji)+r) \phi '(\esij) \left( \esij' \phi '(\esij)+\esji' \phi '(\esji)+1\right)}{(\phi (\esij)+\phi (\esji)+r)^3}=c''\left(\sum_{k\in \sp}s_{ik}^*\right)\sum_{k \in \sp}{s_{ik}^{*}}', \; i \in \bp,\\
	&\frac{ \esji'(2 \phi (\esij)+r) \phi ''(\esji)}{(\phi (\esij)+\phi (\esji)+r)^2} +\frac{ \phi '(\esji)\left(2 \esij' \phi '(\esij)+1\right)}{(\phi (\esij)+\phi (\esji)+r)^2}\\
	&-\frac{2 (2 \phi (\esij)+r) \phi '(\esji) \left(\esij' \phi '(\esij)+\esji' \phi '(\esji)+1\right)}{(\phi (\esij)+\phi (\esji)+r)^3} =c''\left(\sum_{k\in \bp}s_{jk}^*\right)\sum_{k \in \bp}{s_{jk}^{*}}', \; j \in \sp\\
		\end{align*}}
		Using symmetry ($\es_{ik} = \es_{i\ell}, \;  i \in A, k, \ell \in V$ and $\es_{jk} = \es_{j\ell}, \;  j \in V, k, \ell \in A$) , and solving for $\esij'(r)$ and $\esji'(r)$ we get:
		{\thinmuskip=2mu
			\medmuskip=3mu plus 2mu minus 3mu
			\thickmuskip=4mu plus 5mu minus 2mu
		\begin{align} \label{eq:DerivativesEffortWRPr}
		\begin{split}
		&\esij'(r) =- \phi'(\esij) \frac{2 k \phi'(\esji)^2 + \left(-\phi(\esij) + 3 \phi(\esji) +r \right)\left[\sbp k^2 c_2'' - (r+ 2 \phi(\esij) \phi''(\esji)\right]}{\Den},\\
		&\esji'(r) = -\phi'(\esji) \frac{2 k \phi'(\esij)^2 + \left(3\phi(\esij) -  \phi(\esji) +r \right)\left[\ssp k^2 c_1'' - (r+ 2 \phi(\esji) \phi''(\esij)\right]}{\Den},
		\end{split}
		\end{align}
	}

		where
		{\thinmuskip=2mu
			\medmuskip=3mu plus 2mu minus 3mu
			\thickmuskip=4mu plus 5mu minus 2mu
			\begin{align*}
			&\Den= \left(c_1'' k^2 v-(r+2 \phi (\esji)) \phi ''(\esij)\right) \left[a c_2'' k^3+(r+2 \phi (\esij)) \left(2 \phi '(\esji)^2-k \phi ''(\esji)\right)\right]+\\
			&2 \phi '(\esij)^2 \left[a c_2'' k^2 r+2 a c_2'' k^2 \phi (\esji)+2 \phi '(\esji)^2 k-(r+2 \phi (\esij)) (r+2 \phi (\esji)) \phi ''(\esij)\right],
			\end{align*}
		}
		 $c_1'' =c''\left(\sum_{k\in \sp}s_{ik}^*\right)$,  $c_2'' = c''\left(\sum_{k\in \bp}s_{jk}^*\right)$, and $k=(r + \phi(\esij) + \phi(\esji))$.
		
		The expression $\Den$ is positive, since $c$ is convex function, $\phi$ is concave function, and $\phi(x) \geq 0, \; \forall x \geq 0$. Furthermore  $\sbp >\ssp$ and  $\esij > \esji$  together with \eqref{eq:DerivativesEffortWRPr} imply that $\esji'(r)$ is always negative.  On the other hand, the sign of $\esij'(r)$  is ambiguous, and $\esij'(r)$ is positive whenever:
		\begin{align*}
		\phi(\esij)  > \frac{2 (\phi(\esij) +\phi(\esji) + r) \phi'(\esij)}{\sbp(r+\phi(\esij) + \phi(\esji))^2c_2'' - (r+2 \phi(\esij))\phi''(\esji)}+r + 3\phi(\esji),
		\end{align*}  
		which will, since $(r+2 \phi(\esij))\phi''(\esji) <0$,  hold whenever:
			\begin{align*}
	     \phi(\esij)  > \frac{2 \phi'(\esij)}{\sbp(r+\phi(\esij) + \phi(\esji))c_2''}+r + 3\phi(\esji).
	\end{align*}  
		We now discuss the sign of $\frac{\partial{\avgi}}{\partial r}$. From  \eqref{eq:DerivativesEffortWRPr} we get:
		
			{\thinmuskip=2mu
			\medmuskip=3mu plus 2mu minus 3mu
			\thickmuskip=4mu plus 5mu minus 2mu
		\begin{align}\label{eq:TotalEffortIncreasesWith_r}
		\begin{split}
		\frac{\partial \avgi(r)}{\partial r}  &>0 \Leftrightarrow \\
		& - \phi'(\esij) \left( \left(-\phi(\esij) + 3 \phi(\esji) +r \right)\left[\sbp k^2 c_2'' - (r+ 2 \phi(\esij) \phi''(\esji)\right) \right] \\
		&-\phi'(\esji) \left( \left(3\phi(\esij) -  \phi(\esji) +r \right)\left[\ssp k^2 c_1'' - (r+ 2 \phi(\esji) \phi''(\esij)\right) \right]  \\
		&>2k \phi'(\esij)\phi'(\esji)\left[\phi'(\esij) + \phi'(\esji)\right].
		\end{split}
		\end{align}
	}
		When $c(x) =\frac{2}{\alpha} x^{\alpha}$, and $\phi(x) = \lambda x $, with $\alpha \geq 2$ and $\lambda >0$, equation \eqref{eq:TotalEffortIncreasesWith_r} simplifies to: 
		\begin{align*}
		-(\alpha-1)\left[ (3 \esji - \esij + \frac{r}{\lambda}) a ^{\alpha-1}\esji^{\alpha-2} + (3 \esij - \esji + \frac{r}{\lambda})  v ^{\alpha-1}\esij^{\alpha-2} \right]  >\frac{2}{\esij + \esji + \frac{r}{\lambda}}. 
		\end{align*}
%		The above inequality will hold when  $(3 \esji - \esij + \frac{r}{\lambda}) a ^{\alpha-1}\esji^{\alpha-2}$ is sufficiently small and negative.  This will happen when $\sbp$ is large enough relative to $\ssp$, and $r$ is small.

	In a specific case when $r\rightarrow 0$ \eqref{eq:TotalEffortIncreasesWith_r} becomes 
		\begin{align*}
			 -(\alpha-1)\left[ (3 \esji - \esij) a ^{\alpha-1}\esji^{\alpha-2} + (3 \esij - \esji )  v ^{\alpha-1}\esij^{\alpha-2} \right]  >\frac{2}{\esij + \esji}.
		\end{align*}
	In this case  (see Proposition \ref{prop:EqBipAppendixOnline} in  Appendix B) $\esji= \bsji = \left[\frac{v}{a}\right]^{\frac{\alpha-1}{\alpha}}\esij$ and $\esij + \esji = (a v)^{-\frac{(\alpha-1)^2}{\alpha^2}} (a^{\frac{\alpha-1}{\alpha}} + v^{\frac{\alpha-1}{\alpha}})^\frac{\alpha-2}{\alpha}$, so  the above inequality can be written as
		\begin{align}\label{ineq:PositveTotalSpendingr}
		\left[\left(a^{\alpha-1} - 3 v^{\alpha-1}\right) +\left[\frac{v}{a}\right]^\frac{\alpha-1}{\alpha} \left(v^{\alpha-1} -  3 a^{\alpha-1}\right) \right] \esij^{\alpha-1} > \frac{2}{\alpha-1}(a v)^{\frac{(\alpha-1)^2}{\alpha^2}} (a^{\frac{\alpha-1}{\alpha}} + v^{\frac{\alpha-1}{\alpha}})^\frac{2-\alpha}{\alpha}.
		\end{align} 
	
		Since $\esij^{\alpha-1} =\frac{a^{\frac{(\alpha-1)^2}{\alpha}}}{(a^{\frac{\alpha-1}{\alpha}} + v^{\frac{\alpha-1}{\alpha}})^\frac{2(\alpha-1)}{\alpha}}  (a v)^{-\frac{(\alpha-1)^3}{\alpha^2}}$ (see Proposition \ref{prop:EqBipAppendixOnline} in  Appendix B), \eqref{ineq:PositveTotalSpendingr} after some algebra becomes:
		\begin{align}\label{ineq:PositiveTotalSpenidngr_final}
			\left[\left(a^{\alpha-1} - 3 v^{\alpha-1}\right) +\left[\frac{v}{a}\right]^\frac{\alpha-1}{\alpha} \left(v^{\alpha-1} -  3 a^{\alpha-1}\right) \right] > \frac{2}{\alpha-1}\left(a^{\frac{\alpha-1}{\alpha}} + v^{\frac{\alpha-1}{\alpha}}\right) v^{\frac{(\alpha-1)^2}{\alpha}}. 
		\end{align}
		When $\alpha =2$, this inequality holds whenever $a \geq 34 v$. We show in Lemma \ref{lem:InequalitiesBipr} in Appendix B that if this inequality holds for $\alpha=2$ it holds for any $\alpha \geq 2$.

	\end{enumerate}
\end{proof}

%%%%%%%%%%%%%%%%%%%%%%%%%%%%%%%%%%%%%%%%%%%%%%%%%%%%%%%%%%%%%%%%%%%%%%%%%%%%%%%%%%%%%%5
%%%%% ----Shock Analysis -----
%%%%%%%%%%%%%%%%%%%%%%%%%%%%%%%%%%%%%%%%%%%%%%%%%%%%%%%%%%%%%%%%%%%%%%%%%%%%%%%%%%%%

\begin{lemma}\label{lem:TotalEffortEqSystem}
	The total spending of each node in the equilibrium is defined as a solution of system \eqref{eq:TotalEffortEqSystem}.
\end{lemma}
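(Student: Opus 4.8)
The plan is to start from the first-order conditions characterizing the interior Nash equilibrium of the contest game played on the fixed graph $g$ (with player $k$'s cost replaced by $(1+\epsilon_k)c$ and every other player retaining cost $c$) and, exploiting the linearity $\phi(x)=\lambda x$, to solve explicitly for each equilibrium per-contest investment as a function of the marginal costs of the two contestants; summing these over each neighborhood then yields \eqref{eq:TotalEffortEqSystem}. Since $(1+\epsilon_k)c$ is still strictly convex, Proposition \ref{prop:Uniqueness} guarantees a unique equilibrium which is interior under Assumption \ref{ass:PhiAndr}, so the first-order conditions \eqref{eq:FOC} hold with equality in every contest, carrying $c'(\ew_i)$ on the right-hand side for $i\neq k$ and $(1+\epsilon_k)c'(\ew_k)$ for player $k$.

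First I would fix a contest $ij\in g$ and write its two FOCs. With $\phi(x)=\lambda x$ and $\phi'(x)=\lambda$, the condition for $i$ becomes $(r+2\lambda \es_{ji})\lambda = m_i D_{ij}^2$ and that for $j$ becomes $(r+2\lambda \es_{ij})\lambda = m_j D_{ij}^2$, where $D_{ij}=r+\lambda \es_{ij}+\lambda \es_{ji}$ and $m_i,m_j$ are the (possibly shocked) marginal costs. The key observation is that the two numerators telescope: adding the equations gives $2\lambda D_{ij}=(m_i+m_j)D_{ij}^2$, hence
\begin{align*}
D_{ij}=\frac{2\lambda}{m_i+m_j}.
\end{align*}
Substituting this back into $j$'s FOC — the one carrying $\es_{ij}$ in its numerator — and solving isolates $i$'s investment against $j$:
\begin{align*}
\es_{ij}=\frac{2 m_j}{(m_i+m_j)^2}-\frac{r}{2\lambda}.
\end{align*}
Thus every equilibrium per-contest investment depends only on the two contestants' marginal costs, with the opponent's marginal cost appearing in the numerator.

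It then remains to sum $\ew_i=\sum_{j\in N_i}\es_{ij}$ and insert the correct marginal costs. For $i=k$ every opponent is unshocked, so $m_k=(1+\epsilon_k)c'(\ew_k)$ and $m_j=c'(\ew_j)$, and collecting the $d_k$ copies of $-\frac{r}{2\lambda}$ gives the first line of \eqref{eq:TotalEffortEqSystem}. For $i\neq k$, a contest with an unshocked opponent $j\neq k$ contributes $\frac{2c'(\ew_j)}{(c'(\ew_i)+c'(\ew_j))^2}$, while the contest with $k$ (present only when $ik\in g$) contributes $\frac{2(1+\epsilon_k)c'(\ew_k)}{(c'(\ew_i)+(1+\epsilon_k)c'(\ew_k))^2}$; inserting the indicator $\indicator{ik\in g}$ and gathering all $d_i$ copies of $-\frac{r}{2\lambda}$ reproduces the second line. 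The only real care needed is bookkeeping — assigning the shocked marginal cost to the numerator of the correct term and counting the $-\frac{r}{2\lambda}$ contributions so they aggregate to $-d_i\frac{r}{2\lambda}$ — and I do not expect a substantive obstacle beyond this clerical step.
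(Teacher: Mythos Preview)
Your proposal is correct and follows essentially the same route as the paper: derive the closed form $\es_{ij}=\frac{2m_j}{(m_i+m_j)^2}-\frac{r}{2\lambda}$ from the pair of first-order conditions under $\phi(x)=\lambda x$, then sum over $N_i$ with the shocked marginal cost inserted for player $k$. The paper simply asserts the closed form \eqref{eq:LinearSolutionFOC} without showing the intermediate step; your trick of adding the two FOCs to extract $D_{ij}=\frac{2\lambda}{m_i+m_j}$ is exactly the natural way to fill that gap.
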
	

\begin{proof}[\textbf{Proof of Lemma \ref{lem:TotalEffortEqSystem}}]
	Expressing $s_{ij}^*$ from \eqref{eq:FOC}, when $\phi(x) = \lambda x$  we get that in the equilibrium:

\begin{align} \label{eq:LinearSolutionFOC}
\esij =  \frac{2 c'(\ew_j)}{(c'( \ew_i) + c'(\ew_j))^2}- \frac{r}{2 \lambda}. 
\end{align}
Summing over all contests of player $i$, and accounting for the fact that the cost funciton of player $k$ is  $c_k(x) = (1+\epsilon_k)c(x)$ we get \eqref{eq:TotalEffortEqSystem}.

\end{proof}

%%%%%%%%%%%%%%%%%%%%%%%%%%%%%%%%%%%%%%%%%%%%%%%%%
%%%%%%%%%%%%%%%%%%%%%%%%%%%%%%%%%%%%%%%%%%%%%%%%%

\begin{proof}[  \textbf{Proof of Proposition  \ref{prop:IndividualCostShockBip}}]  
	Suppose first that $k \in \bp$. Due to the symmetry, \eqref{eq:TotalEffortEqSystem}  is reduced to the following system of equations:
	{\thinmuskip=2mu
		\medmuskip=3mu plus 2mu minus 3mu
		\thickmuskip=4mu plus 5mu minus 2mu
		\begin{align}
			\begin{split}
				&\ewk=\ssp \frac{2c'(\ewj)}{(c'(\ewj) + (1 + \epsilon_k)c'(\ewk))^2} -\ssp\frac{r}{2 \lambda},\\
				&\ewi=\ssp \frac{2c'(\ewj)}{( c'(\ewj) + c'(\ewi))^2} -\ssp\frac{r}{2 \lambda}, \; i \in \bp \text{ and }  i \neq k, \\
				&\ewj=(\sbp -1) \frac{ 2c'(\ewi)}{(c'(\ewj) + c'(\ewi))^2} +  \frac{2(1+ \epsilon_k) c'(\ewk)}{( c'(\ewj) + (1 + \epsilon_k) c'(\ewk))^2} -\sbp\frac{r}{2\lambda}, \; j \in \sp. 
			\end{split}
		\end{align}
	}
	Differentiating with respect to $\epsilon_k$, letting $\epsilon_k \rightarrow 0$, and using the fact that when $\epsilon_k \rightarrow 0$  then $\ewk =\ewi$ we get the following linear system in first derivatives:
	{\thinmuskip=2mu
		\medmuskip=3mu plus 2mu minus 3mu
		\thickmuskip=4mu plus 5mu minus 2mu
	{\small	\begin{align} \label{eq:SystemOfDerivativesEpsilonk}
			\begin{split}
				\left(1 + 4 \ssp \frac{c'(\ewj)c''(\ewi) }{(c'(\ewi) + c'(\ewj))^3}\right)\ewk' =& 2\ssp\frac{c'(\ewi) - c'(\ewj)}{(c'(\ewi) + c'(\ewj))^3}c''(\ewj)\ewj' -  4 v \frac{ c'(\ewi) c'(\ewj)}{(c'(\ewi) + c'(\ewj))^3},\\
				\left(1 + 4 \ssp \frac{c'(\ewj) c''(\ewi)}{(c'(\ewi) + c'(\ewj))^3}\right) \ewi' =&  2 \ssp \frac{c'(\ewi) - c'(\ewj)}{(c'(\ewi) + c'(\ewj))^3}c''(\ewj)\ewj', \\
				\left(1+ 4 \sbp \frac{c'(\ewi)c''(\ewj)}{(c'(\ewi) + c'(\ewj))^3}  \right)\ewj' = &
				2(\sbp -1)\frac{c'(\ewj) - c'(\ewi)}{(c'(\ewj) + c'(\ewi))^3}c''(\ewi)\ewi' + 2 \frac{c'(\ewj) - c'(\ewk)}{(c'(\ewj) + c'(\ewi))^3}c''(\ewi)\ewk' - \\
				 & 2\frac{ c'(\ewi)^2 - c'(\ewj) c'(\ewi)}{(c'(\ewj) + c'(\ewi))^3}.
			\end{split}
		\end{align}
	}
	}

	When $r \rightarrow 0$, $\ewi$ and $\ewj$ simplify to
	\begin{align*}
		&\ewi = 2 \ssp \frac{c'(\ewj)}{(c'(\ewi) + c'(\ewj))^2},\; \; \ewj = 2 \sbp\frac{c'(\ewi)}{(c'(\ewi) + c'(\ewj))^2}.
	\end{align*}	We plug these expressions into \eqref{eq:SystemOfDerivativesEpsilonk} and after some algebra  \eqref{eq:SystemOfDerivativesEpsilonk} becomes. 
	
%	(ci1 + cj1 + 2 wi ci2 )zk==(v wj - a wi)/a cj2 zj -  2wi ci1
%	(ci1 + cj1 + 2 wi ci2 )zi==(v wj - a wi)/a cj2 zj
%	( ci1 + cj1 + 2 wj cj2)zj==( a wi- v wj )/(v a) ci2 zk + (a-1) ( a wi- v wj )/(v a) ci2 zi+ ci1 ( a wi- v wj )/(v a) 
{\thinmuskip=2mu
	\medmuskip=3mu plus 2mu minus 3mu
	\thickmuskip=4mu plus 5mu minus 2mu
	\begin{align}\label{eq:systemBipShock}
		\begin{split}
			&\left(c'(\ewi) +  c'(\ewj) + 2 \ewi c''(\ewi)\right)\ewk' = \frac{\ssp \ewj - \sbp \ewi}{\sbp}c''(\ewj)\ewj' - 2 \ewi c'(\ewi),\\
			&\left(c'(\ewi) +  c'(\ewj) + 2 \ewi c''(\ewi)\right)\ewi' = \frac{\ssp \ewj - \sbp \ewi}{\sbp}c''(\ewj)\ewj', \\
			&\left(c'(\ewi) +  c'(\ewj) + 2 \ewj c''(\ewj)\right)\ewj' = \frac{\sbp \ewi - \ssp \ewj}{\ssp \sbp}c''(\ewi)\ewk' + (\sbp-1)\frac{\sbp \ewi - \ssp \ewj}{\ssp \sbp} c''(\ewi)\ewi' + c'(\ewi)\frac{\sbp \ewi - \ssp \ewj}{\ssp \sbp}.
		\end{split}
	\end{align}	
}
Solving for $\ewi'$, $\ewj'$  and $\ewk'$ we get\footnote{Details od derivation are available upon request.}:
\begin{align}\label{eq:PartialsEpsilonk}
\begin{split}
\ewj' = &\frac{c'(\ewi)\left[c'(\ewi) +  c'(\ewj)\right](a \ewi - v \ewj)}{den1}>0,\\
\ewi' =& -\frac{c'(\ewi)\left[c'(\ewi) +  c'(\ewj)\right]c''(\ewj)(a \ewi - v \ewj)^2}{den2} <0,\\
\ewk = &a^2 \ewi \frac{ -2 \left[c'(\ewi) + c'(\ewj)\right]^2 v -\ewi \left[c'(\ewi) + c'(\ewj)\right] \left[c''(\ewj) + 4v c''(\ewi) \right]  - 2 a \ewi^2 c''(\ewi) c''(\ewj) }{den2} -\\
&\frac{ 4 a^2 v \ewi^2 \ewj c''(\ewi) c''(\ewj) +
	v^2 c''(\ewj)  \left[c'(\ewi) + c'(\ewj) + 2 a \ewi c''(\ewi) \right] \ewj^2}{den1}-\\
& \frac{2 a (2a -1)v \ewi \ewj \left[c'(\ewi) + c'(\ewj)\right] c''(\ewj) }{den1} <0,
\end{split}
\end{align}
where
\begin{align*}
 den1 =&  c''(\ewi)c''(\ewj) (a^2 \ewi^2 + v^2 \ewj^2) +  av \left[c'(\ewi) +  c'(\ewj) + 2 c''(\ewi)\ewi \right] \left[c'(\ewi) +  c'(\ewj) + 2 c''(\ewj)\ewj\right] >0.\\
 den2 =& a \left[c'(\ewi) +  c'(\ewj) + 2\ewi c''(\ewi) \right] den1 >0.
\end{align*} 
The respective signs follow directly from the fact that $c$ is convex, and $a \ewi > v \ewj$. The case when $k \in \sp$ is analogous (just switch $\ssp$ and $\sbp$). 
%
%ci1 (a wi + v wj) (a cj2 wi + v (ci1 + cj1 + cj2 wj))
%
Finally, from \eqref{eq:PartialsEpsilonk} we get:
	{\thinmuskip=2mu
		\medmuskip=1mu plus 2mu minus 3mu
		\thickmuskip=4mu plus 5mu minus 3mu
		
	\begin{align*}
		\frac{\partial \avgi }{\partial \epsilon_k}|_{\epsilon_k =0} = -\frac{1}{\sbp \ssp}
	\frac{c'(\ewi) \left(a \ewi + v \ewj \right) \left[ a \ewi c''(\ewj) + v \ewj c''(\ewj) +  v \left(c'(\ewi) + c'(\ewj)\right) \right]}{ den1} <0,
	\end{align*}

}

where the inequality follows directly from Assumptions 1-2. This completes the proof.
\end{proof}

%%%%%%%%%%%%%%%%%%%%%%%%%%%%%%%%%%%%%%%%%%%%%%%%%%%%%%%%%%%%%%%%%%%%%%%%%%%%%%

\subsection*{ Proofs of Claims from Section \ref{sec:Discussion}}

\begin{proof}[\textbf{Proof of Proposition \ref{prop:NashStableNetworks}}]
    Consider contest network $g(\strategy)$ such that $ij \notin \network$ for some players $i$ and $j$.  We show that $g(\strategy)$ is not Nash stable when  $\frac{\phi'(0)}{r} > c'(0) $.
	\begin{itemize}
	\item[(i)] Consider first the case when player $i$ is not involved in any contest, thus $w_i =0$. The marginal benefit of investing $\epsilon >0$ in contest $ij$ calculated at $\epsilon=0$ is $\frac{\phi'(0)}{r}$. The marginal cost of this action is $c'(0)$. As long as $\frac{\phi'(0)}{r} > c'(0)$ player $i$ will wish to start a contest with player $j$.
	
	\item[(ii)] When $w_i >0$, there must exist some some $k$ such that $s_{ik} >0$. We discuss two possible cases:
	\begin{itemize}
	 \item[(a)] There exists a contest $ik \in g(\strategy)$ such that $s_{ik} \geq s_{ki}$.  Consider a deviation in which $i$ reallocates $\epsilon >0$ from contest $ik$ to start contest with $j$.  The marginal benefit of this action for $i$ is $\frac{\phi'(0)}{r}$. The marginal cost of a proposed deviation is   $\frac{(r + 2 \phi(s_{ki}))\phi'(s_{ik})}{(r + \phi(s_{ik}) + \phi(s_{ki}))^2}$. The following chain of inequalities holds:
	\begin{align}\label{ineq:DeviationNash}
\frac{(r + 2 \phi(s_{ki}))\phi'(s_{ik})}{(r + \phi(s_{ik}) + \phi(s_{ki}))^2} \leq \frac{r+2 \phi(s_{ki})}{(r+ 2 \phi(s_{ki}))^2}\phi'(s_{ik}) \leq \frac{1}{r+2 \phi(s_{ki})}\phi'(0) <  \frac{1}{r}\phi'(0),
	\end{align}
	where we have used the fact that $\phi$ is increasing and concave function. So, in this case, the marginal benefit of the proposed deviation is greater than its marginal cost.
	
	\item[(b)]  There is no $ik \in g(\strategy)$ such that $s_{ik} \geq s_{ki}$. In this case  consider a deviation in which $i$ reallocates $s_{ik}$ from contest $ik$ to $ij$.  The change in payoff due to this deviation is equal to 
$\left(\frac{\phi(s_{ik})}{\phi(s_{ik})+r} - \frac{\phi(s_{ki})}{\phi(s_{ki}) + r}\right) - \frac{\phi(s_{ik}) - \phi(s_{ki})}{\phi(s_{ik}) + \phi(s_{ki})+r}$. Simplifying we get:
\begin{align*}
\frac{\phi(s_{ik})}{\phi(s_{ik})+r} - \frac{\phi(s_{ki})}{\phi(s_{ki}) + r} = \frac{\phi(s_{ik}) - \phi(s_{ki})}{ \frac{\phi(s_{ik}) \phi(s_{ki})}{r} + \phi(s_{ik}) + \phi(s_{ki}) +r} >  \frac{\phi(s_{ik}) - \phi(s_{ki})}{\phi(s_{ik}) + \phi(s_{ki})+r},
\end{align*}
where for the last inequality we used the fact that $s_{ik} < s_{ki}$, and $\phi$ is increasing. 
\end{itemize}
\end{itemize}
Hence, provided that $\frac{\phi'(0)}{r} >c'(0)$, Nash stable network $g(\strategy)$  must be such that $s_{ij} + s_{ji} >0$, for any pair of players $i$ and $j$,  that is $ij \in g, \forall i,j \in N$.

We have proved that  a Nash stable network must be the complete network. We now argue that there is a unique strategy profile $\strategy$ such that the complete network $g(\strategy)$ is Nash stable. Moreover, $\strategy$ is such that $s_{ij} = s_{ji} =s>0$, for any two players $i$ and $j$.

To do that, we  recall that there exists a unique pure strategy Nash equilibrium of the game $C(\bnetwork)$ when $\bnetwork$ is the complete network. In this equilibrium each player must play the symmetric strategy, as otherwise the uniqueness result would not hold.\footnote{ If $\nestrategy$ were asymmetric,  by relabeling players we could find more than one pure strategy NE of the contest game on the complete network, which would contradict  Proposition \ref{prop:Uniqueness}.}
Condition $\frac{\phi'(0)}{r} > c'(0)$ ensures that $\nestrategy \neq \vec{0}$, by same argument as used in (i) of this proof.  It directly follows from the definition of a Nash stable network that it must be $s_{ij} = \bar{s}_{ij}$, where $\nestrategy$ is the Nash equilibrium of the contest game on the complete network. 

Finally, when $\frac{\phi'(0)}{r} \leq c'(0)$ exerting positive amount of resources in contest against opponent who invests 0 is never profitable.  Furthermore, if for any pair of players we have $s_{ij}>0$ and $s_{ji}>0$ and, without loss of generality,  $s_{ij} \geq s_{ji}$,  then the marginal loss of $i$ in decreasing $s_{ij}$ is always smaller then the marginal gain measured by the cost decrease, as long as  $\frac{\phi'(0)}{r} \leq c'(0)$. Indeed, the following chain of inequalities hold: 
\begin{align*}
\frac{(r + 2 \phi(s_{ji}))\phi'(s_{ij})}{(r + \phi(s_{ij}) + \phi(s_{ji}))^2} < \frac{1}{r}\phi'(0) \leq c'(0) < c'(w_i),
\end{align*}	
where the first inequality comes from \eqref{ineq:DeviationNash}. This completes the proof. 
\end{proof}	

For completnesss, we provide definition of strongly pairwise stable network.

\begin{definition}[strongly pairwise stable network \citep{bloch2009communication}]\label{def:SPS}
	Network $\network(\strategy)$ is strongly pairwise stable if it is Nash stable and there is no pair of individuals $(i, j )$ and
	joint deviation $(\strategy_i' , \strategy_j')$ such that:
	\begin{align*}
	\pi_k(\strategy_i', \strategy_j', \strategy_{-i-j}) > \pi_k(\strategy) \text{ for  }  k=i,j.
	\end{align*}
\end{definition}

\begin{proof}[\textbf{Proof of Proposition \ref{prop:StongPStableNetwork}}]
	When $\frac{\phi'(0)}{r} \geq 0$ the Nash stable network is the complete network by Proposition \ref{prop:NashStableNetworks}, and each player plays the the same strategy $\vec{s} \gg 0$. Consider any two players $i$ and $j$, and joint deviation in which $i$ chooses $s_{ij}' = 0 $, and $s_{ik}' = s_{ik} = s$ for all $k \neq j$, and $j$ deviates by setting $s_{ji}' =0$ and $s_{j\ell}' =s_{j\ell} =s$  for all $\ell \neq i$. This deviation is profitable for $i$ and $j$, resulting in benefit
	\begin{align*}
	\frac{\phi(0) -\phi(0)}{\phi(0) + \pi(0) + r} -	\frac{\phi(s) -\phi(s)}{\phi(s) + \pi(s) + r}   -c((n-1)s) + c(n s) >0.
	\end{align*}
	Thus, when $\frac{\phi'(0)}{r} \geq 0$ no Nash stable network is immune to bilateral deviations, thus in this case the strongly pairwise stable network does not exist. 
	
	 When $\frac{\phi'(0)}{r} < 0$ the Nash stable network is empty. It is clear that there is no profitable bilateral deviation in this case.
 
%\begin{lemma}\label{lem:InteriorBipartite}
% Uniqueness guarantees that in the star network the equilibrium will be interior as long as the condition $\phi'(0)/r > c'(0) holds (a player wishes to initate a contest). This will hold always for quadratic cost, as long as r is finite. 
%\end{lemma}

\end{proof}

\newpage
{\small 
\section*{Online Appendix B}

	\subsection*{LPFS as a resting point of a dynamic process of network formation} 
	One can think of a stable network as defined in Definition \ref{def:StableNetworks} as a stable state of a coupled dynamic process we present in this section. Over time, players make decisions about their links and about actions assigned to these links. We assume that a link between players $i$ and $j$ is formed if one player decides to form it (unilateral), while link $ij$ is destroyed if both agents agree to destroy it (bilateral). Time is indexed with $t \in \mathbb{N} \cup \{0\}$. In $t=0$ an arbitrary contest network $\wnetwork(\strategy_t)$ is given.
	
	For each  period $t$:
	\begin{itemize}
		\item [(i)] At the beginning of period $t$ strategy profile $\strategy_{t-1}$ is a pure strategy Nash equilibrium of game $C(\bnet_{t-1})$, where  $\bnet_{t-1}$ describes the set of contests in the population at the end of period $t-1$. 
		\item[(ii)] Players $i$ and $j$ are chosen randomly from the population. They jointly choose their linking patterns which leads to graph $\bnetwork_{t}$.  Players evaluate the expected benefit from forming a link as described in Subsection \ref{subsec:EfficiencyAndStability}.
		\item [(iii)]The second dynamic process (\textit{action adjustment process}) starts, and all agents update their actions given $\bnet_{t}$ according to the \textit{action adjustment process} formally described below. This process settles  at the pure strategy NE of game $C(\bnet_{t})$. We assume that this process takes place in continuous time and therefore on a faster time-scale than the network formation process. In other words,  players infinitely more often revise their investment in ongoing contests compared to contemplating starting/ending a contest. 
	\end{itemize}
	%Example \ref{ex:Formation} illustrates the formation process 
	%\input{Example1.tex}
	
	We now formally describe the action adjustment process mentioned in (iii) above. Let $\nabla _{i}\pi_i$ denote the gradient of the payoff function with respect to $\strategy_i \in S_i(\bnet)$. Define function $J:\prod_{i}\Realo^{n}\rightarrow \prod_{i}\Realo^{n}$ with:
	\begin{equation*}
	J(\mathbf{s})=%
	\begin{pmatrix}
	\nabla _{1}\pi _{1}(\mathbf{s}) \\ 
	\nabla _{2}\pi _{2}(\mathbf{s}) \\ 
	... \\ 
	\nabla _{n}\pi _{n}(\mathbf{s})%
	\end{pmatrix}.%
	\end{equation*}
	The action adjustment process is defined with:
	\begin{equation} \label{AAP}
	\mathbf{\dot{s}}= J\mathbf{(s)},
	\end{equation}
	According to \eqref{AAP} each player changes his strategy at a rate proportional to the gradient of her payoff function with respect to her strategy. It is clear that the Nash equilibrium of game $C(\bnet)$ ($\nestrategy$) is the stable state of this process. We also prove that $\nestrategy$ is a globally asymptotically stable state of \eqref{AAP}. Hence if every player adjusts her actions according to the adjustment process in (\ref{AAP}), the action adjustment process converges, irrespective of the initial conditions.
	
	\begin{proposition}\label{prop:ActionAdjustment}
		The action adjustment process given by equation \eqref{AAP} is globally asymptotically stable.
	\end{proposition}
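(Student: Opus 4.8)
The plan is to exhibit a Lyapunov function and lean on the diagonal strict concavity of the game, which is already implicit in the uniqueness argument of Proposition \ref{prop:Uniqueness}. Recall that $C(\bnet)$ is a concave game in the sense of \cite{rosen1965existence}: each $\pi_i$ is concave in $\strategy_i$ and the strategy sets are convex. The uniqueness of $\nestrategy$ asserted in Proposition \ref{prop:Uniqueness} rests on Rosen's diagonal strict concavity (DSC) condition, which is equivalent to saying that the pseudo-gradient $J$ is strictly monotone, i.e. $(\strategy-\strategy')^\top\bigl(J(\strategy)-J(\strategy')\bigr)<0$ for all $\strategy\neq\strategy'$, or equivalently that the symmetrized Jacobian $\tfrac12(DJ+DJ^\top)$ is negative definite throughout the interior of the strategy space. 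I would take this property as delivered by the very computation that yields uniqueness.

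Next I would record that, under the maintained interiority conditions, the unique equilibrium $\nestrategy$ is interior, so that each player's first-order condition holds with equality and hence $J(\nestrategy)=\vec{0}$. I then propose the Lyapunov candidate
\begin{equation*}
V(\strategy)=\tfrac12\,\|\strategy-\nestrategy\|^2,
\end{equation*}
which is positive definite, vanishes only at $\nestrategy$, and is radially unbounded. Along trajectories of \eqref{AAP},
\begin{equation*}
\dot V(\strategy)=(\strategy-\nestrategy)^\top\dot{\strategy}=(\strategy-\nestrategy)^\top J(\strategy)=(\strategy-\nestrategy)^\top\bigl(J(\strategy)-J(\nestrategy)\bigr)<0\quad\text{for }\strategy\neq\nestrategy,
\end{equation*}
where the final inequality is precisely strict monotonicity of $J$ evaluated at the pair $(\strategy,\nestrategy)$. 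Since $V$ is radially unbounded and $\dot V<0$ strictly off the equilibrium, the standard Lyapunov theorem upgrades local to global asymptotic stability, and no LaSalle-type argument is needed.

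The hard part will be justifying the strict monotonicity of $J$ cleanly. If one does not wish to borrow it wholesale from the uniqueness proof, one must verify directly that $DJ+DJ^\top\prec 0$. The block structure helps: each contest $ij$ contributes only a $2\times2$ coupling between $s_{ij}$ and $s_{ji}$ through the contest-success term, while the convex cost $c(w_i)$ contributes $-c''(w_i)$ to every pair of components $(s_{ij},s_{ik})$ sharing the common node $i$. Individual concavity of each $\pi_i$ controls the diagonal blocks, but negative definiteness of the \emph{full} symmetrized Jacobian requires showing that the cross-player coupling through the shared $(s_{ij},s_{ji})$ entries cannot overturn it; this is exactly where Assumptions \ref{ass:Technology}--\ref{ass:Cost} (weak concavity of $\phi$, strict convexity of $c$) must be brought to bear. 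A secondary point is the boundary: if one allows non-interior equilibria, \eqref{AAP} must be replaced by its projection onto the feasible set, and $\dot V\le 0$ argued with the Kuhn--Tucker conditions in place of $J(\nestrategy)=\vec{0}$, mirroring Rosen's treatment of constrained gradient dynamics.
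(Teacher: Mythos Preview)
Your argument is correct and rests on the same structural fact as the paper's proof---negative definiteness of the symmetrized Jacobian $\Jac+\Jac^{\top}$---but you deploy it through a different Lyapunov function. You take $V(\strategy)=\tfrac12\|\strategy-\nestrategy\|^2$ and invoke strict monotonicity of $J$; the paper instead takes $V(\strategy)=J(\strategy)^{\top}J(\strategy)$ and computes $\dot V=J^{\top}(\Jac+\Jac^{\top})J<0$ directly. Your route is the textbook Rosen argument and has the virtue of making the link to diagonal strict concavity (and hence to the uniqueness claim in Proposition~\ref{prop:Uniqueness}) explicit. The paper's route avoids any reference to the equilibrium point $\nestrategy$ in the Lyapunov function, which is a minor convenience.

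Where the two genuinely diverge is in how the key negative-definiteness is established. You flag this as ``the hard part'' and sketch the block structure, but you stop short of the decisive observation. The paper does not argue cross-block dominance at all; instead it appeals to a criterion of Goodman (1980) that reduces the question to three separate checks: (a) strict concavity of $\pi_i$ in own strategy, (b) convexity of $\pi_i$ in opponents' strategies, and (c) concavity of $\sigma(\strategy,\vec{z})=\sum_i z_i\pi_i(\strategy)$ in the full profile for some weight vector $\vec{z}$. Condition (c) with $\vec{z}=\vec{1}$ is where the contest structure pays off spectacularly: the bilateral revenue terms are zero-sum across each link, so $\sum_i\pi_i(\strategy)=-\sum_i c(w_i)$, which is strictly concave by Assumption~\ref{ass:Cost} alone. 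This telescoping trick is what makes the verification clean, and it is the piece your proposal is missing; without it, the direct ``cross-coupling cannot overturn the diagonal'' argument you outline would be considerably messier.
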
{\tiny }

\begin{proof}
 To prove the claim,  we show that the rate of change of $||J||=JJ^{\prime }$ is always negative (and equal to $0$ at $\bar{\strategy}$). Denote with $\Jac$ the Jacobian  of $J$.  The following holds:
\begin{equation*}
\dot{JJ^{\prime} }=(\Jac\mathbf{\dot{s}})^{\prime }J+J^{\prime }\Jac\mathbf{%
	\dot{s}}=\mathbf{(}J^{\prime }\Jac^{\prime }J+J^{\prime }\Jac J\mathbf{)}=J^{\prime
}(\Jac^{\prime }+\Jac)J <0,
\end{equation*}% 
where $\Jac'$ denotes the transpose of $\Jac$.  The last inequality follows from the fact that $(\Jac^{\prime}+\Jac) $ is a negative definite matrix.  To show that  $ (\Jac^{\prime}+\Jac)$ is negative definite we use Lemma 1 from \cite{goodman1980note} which states that $ (\Jac^{\prime}+\Jac)$ is negative definite if, for each player $i$:
\begin{enumerate}
	\item[\textbf{(a)}] $\pi _{i}(\strategy)$ is strictly concave in $\strategy_{i},$
	\item[ \textbf{(b)}]  $\pi _{i}(\strategy)$ is convex in $\strategy_{-i}$,
	\item[\textbf{(c)}] $\sigma(\vec{s}, \vec{z}) = \sum_{i=1}^{n}z_i\pi_i(s)$ is concave in $\mathbf{s}$ for some $\vec{z}\in \Realo.$
\end{enumerate} 

To show \textbf{(a)} we note that
\begin{align}\label{eq:SOC}
\frac{\partial ^2 \pi_i}{{\partial s_{ij}}^2}=\frac{(r+2 \phi (s_{ji})) \left(\phi ''(s_{ij}) (r+\phi (s_{ij})+\phi (s_{ji}))-2 \phi '(s_{ij})^2\right)}{(r+\phi (s_{ij})+\phi (s_{ji}))^3}-c''(s_i) < 0.
\end{align}%
The inequality in \eqref{eq:SOC}  holds as the first term in the difference is negative (due to properties of function $\phi$ stated in Assumption 1) and the second term is positive (due to the strict convexity of function $c$). Furthermore 
\begin{align*}
\frac{\partial^2 \pi_i}{\partial s_{ij} \partial s_{ik}}=- c''(s_i)<0 \; \forall j,k \in N_i. \nonumber
\end{align*} 
Thus, the Hessian matrix $\hessian_i$ of function $\pi _{i}$ with respect to $\vec{s}_{i}$ is the sum of diagonal matrix $\hessian_{i1}$ with elements on the diagonal equal to: $$\frac{(r+2 \phi (s_{ji})) \left(\phi ''(s_{ij}) (r+\phi (s_{ij})+\phi (s_{ji}))-2 \phi '(s_{ij})^2\right)}{(r+\phi (s_{ij})+\phi (s_{ji}))^3}<0,$$ and matrix $\hessian_{i2}$ which has all the elements equal to $-c^{\prime \prime }(s_{i})<0.$ $\Hmat_{i1}$ is a negative definite matrix and $\Hmat_{i2}$ is a negative semidefinite matrix, therefore $\Hmat_{i}=\Hmat_{i1}+\Hmat_{i2}$ is a negative definite matrix.

To see that \textbf{(b)} holds, note that (when $ ij \in \bnet$) :
\begin{align*}
\frac{\partial ^{2}\pi _{i}}{\partial s_{ji}^{2}}= \frac{(r+2 \phi (s_{ij})) \left(2 \phi '(s_{ji})^2-\phi ''(s_{ji}) (r+\phi (s_{ij})+\phi (s_{ji}))\right)}{(r+\phi (s_{ij})+\phi (s_{ji}))^3} >0
\end{align*}
Furthermore, $(\forall jk \in \bnet : k\neq i),\;\frac{\partial ^{2}\pi _{i}}{\partial s_{jk}^{2}}=0$  and $\frac{\partial ^{2}\pi _{i}}{\partial s_{jk}\partial s_{\ell m}}=0$ for any other combination of players $j,k,\ell$ and $m.$ Thus, the Hessian of $\pi _{i}$ with respect to $\vec{s}_{-i}$ is a diagonal matrix with all entries positive or zero and therefore positive semi-definite.

Finally, to prove \textbf{(c)} choose $\vec{z}=\vec{1}$. Then:
\begin{align*}
\sigma (\mathbf{s,1})=\sum\limits_{i=1}^{n}\sum\limits_{j\in N_{i}}\left( \frac{\phi(s_{ij})}{\phi(s_{ij})+\phi(s_{ji})+r%
}-\frac{\phi(s_{ji})}{\phi(s_{ij})+\phi(s_{ji})+r}-c(s_{i})\right)
=-\sum\limits_{i=1}^{n}c(w_{i})
\end{align*}
The last equality above holds since in the first sum above $\frac{\phi(s_{ij})}{\phi(s_{ij})+\phi(s_{ji})+r}$ appears exactly once with a positive sign (as a part of payoff function $\pi _{i}$) and exactly once with a negative sign (as a part of function $\pi _{j}$). Function $-\sum\limits_{i}c(w_{i}) $ $\ $is strictly concave due to the strict convexity of the cost function $c.$ Hence, \textbf{(c)} also holds, which completes the proof. 

\end{proof}

We do not study the properties of the dynamical process of network formation. However, it is clear from the definition that if this process settles on a single network configuration, then this network must be LPFS. 
It is interesting to note that Proposition \ref{prop:ActionAdjustment} has a very practical application. It provides an efficient way to numerically calculate the Nash equilibrium  of game $C(\bnet)$.

\subsection*{Farsightedly stable network}

 Let $\bnet + \{i_1j_1, i_2j_2,..., i_m j_m\}$ denote a graph obtained from $\bnet$ by adding links $\{i_1j_1, i_2j_2,..., i_m j_m\}$, and  $\bnet - \{i_1j_1, i_2j_2,..., i_m j_m\}$  the graph obtained from $\bnet$ by dele\-ting links  $\{i_1j_1, i_2j_2,..., i_m j_m\}$. We will use $\{(i\ell)_{\ell \in L}\}$ to denote set of links $i\ell$ with $\ell \in L$.  Finally, we define the payoff of player $i$ from graph $\bnet $ as the payoff from strategy profile $\strategy$ that induces $\bnet$ (Definition \ref{def:Graphg}) and is such that each player $i$ chooses strategy $\strategy_i$ consistently with (U) for $L_i =\emptyset$. In other words,  it is the payoff of player $i$ obtained at the pure strategy Nash equilibrium of game $C(\bnet).$

 We define farsightedly improving path in our model following \citep{jackson2008,vannetelbosch2015network}. As in Definition \ref{def:StableNetworks} we allow players to unilaterally form many links at the same time, and bilateral deviations. 

\begin{definition}[Farsightedly improving path]
	A farsightedly improving path from graph $\bnet$ to graph $\bnet'$ (denoted with $\bnet \rightarrow \bnet'$) is a finite sequence of graphs $\{\bnet_1, \bnet_2, ... , \bnet_K\}$ with $\bnet_1 = \bnet$ and $\bnet_K =\bnet'$ such that for any $k \in \{1,2,..., K-1\}$ either: 
	\begin{itemize}
		\item [(i)] $\bnet_{k+1} = \bnet_{k}+\{ (i\ell)_{\ell \in L_i} \}  $ for some $i \in N$ such that $\pi_{i}(\bnet_K) > \pi_i(\bnet_k)$, or
		\item [(ii)] $\bnet_{k+1} = \bnet_{k} - \{ij\} + \{ (i\ell)_{\ell \in L_i} \}+  \{ (j\ell)_{\ell \in L_j} \}  $ for some two players  $i,j  \in N$ such that $\pi_{i}(\bnet_K) > \pi_i(\bnet_k)$ and $\pi_{j}(\bnet_K) \geq \pi_i(\bnet_k)$.
	\end{itemize}
	
\end{definition}	

We are now ready to define farsightedly stable contest network. 

\begin{definition}
Let $F(\bnet) = \{\bnet': \bnet \rightarrow \bnet'\}$.	Graph $\bnet$ is farsightedly stable if $F(\bnet) = \emptyset.$  Network $g(\strategy)$ is farsightedly stable if graph $\bnet$ induced by $\strategy$  farsightedly stable.  
\end{definition}

The analysis of farsightedly stable networks is beyond the scope of this paper. Here, by means of an example, we  demonstrate that there exists a farsightedly graph that is not Nash stable, and that there exists Nash stable network which is not farsightedly stable. 

\begin{example}
	Consider population of 3 individuals, and suppose  that $\phi(x) =x$, $c(x) = x^2$ and $r=0$.  Up to isomorphism, there are 4 different network structures in this case presented in Figure \ref{fig:FarsightedlyStable}.  By Proposition \ref{prop:NashStableNetworks}, $\bnet_1$ is Nash stable. The complete graph, however, is not farsightedly stable, since  $\bnet_2 \in F(\bnet_1)$. Indeed, players 2 and 3 find it optimal to deviate and destroy link $23$. One can easily check that  $\bnet_4$ is the unique farsightedly stable graph in this case. Graph $\bnet_4$ is also LFPS.

	\begin{figure}[H]
		\begin{center}
		\includegraphics[scale=0.3]{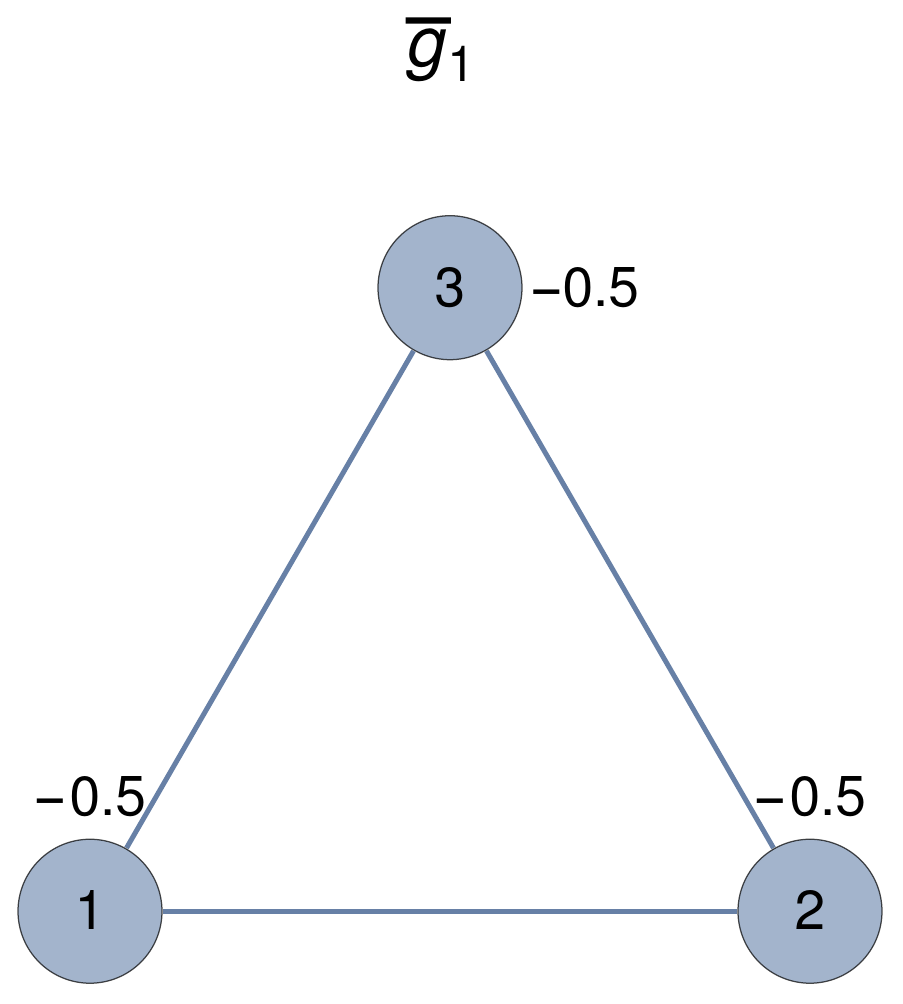} \hspace{20pt}
		\includegraphics[scale=0.3]{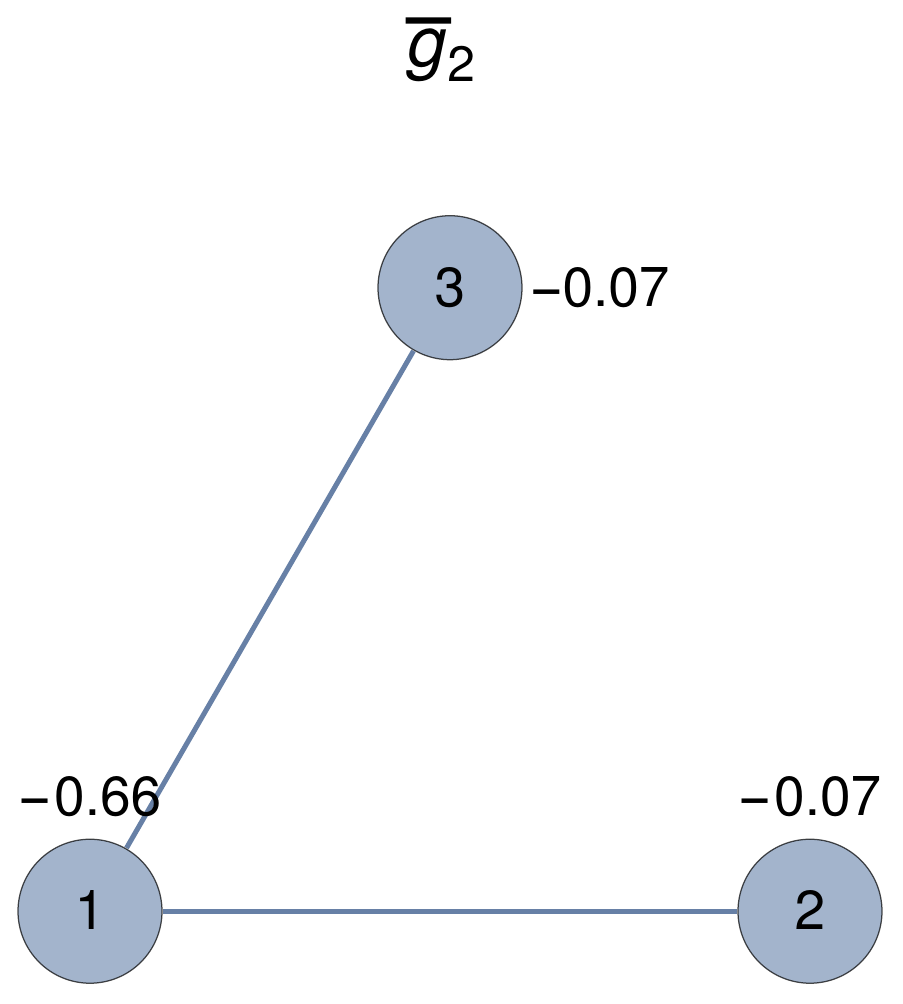} \hspace{20pt}
			\includegraphics[scale=0.3]{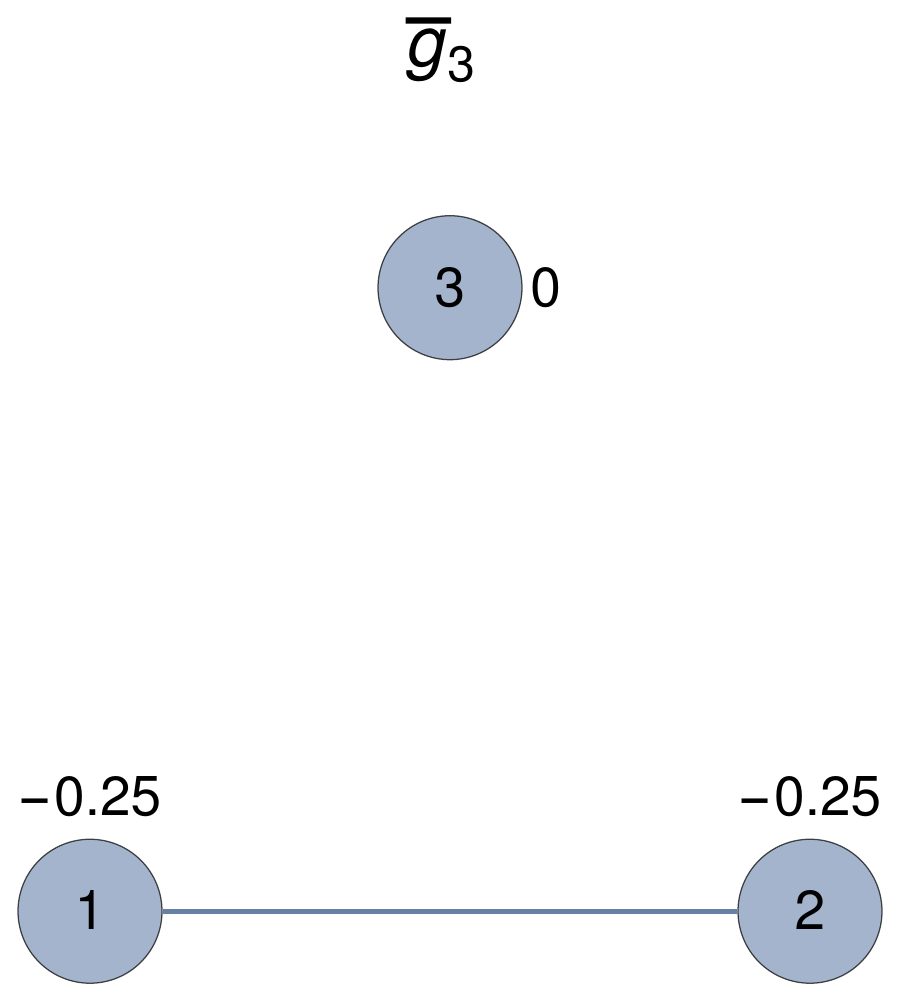} \hspace{20pt}
		\includegraphics[scale=0.3]{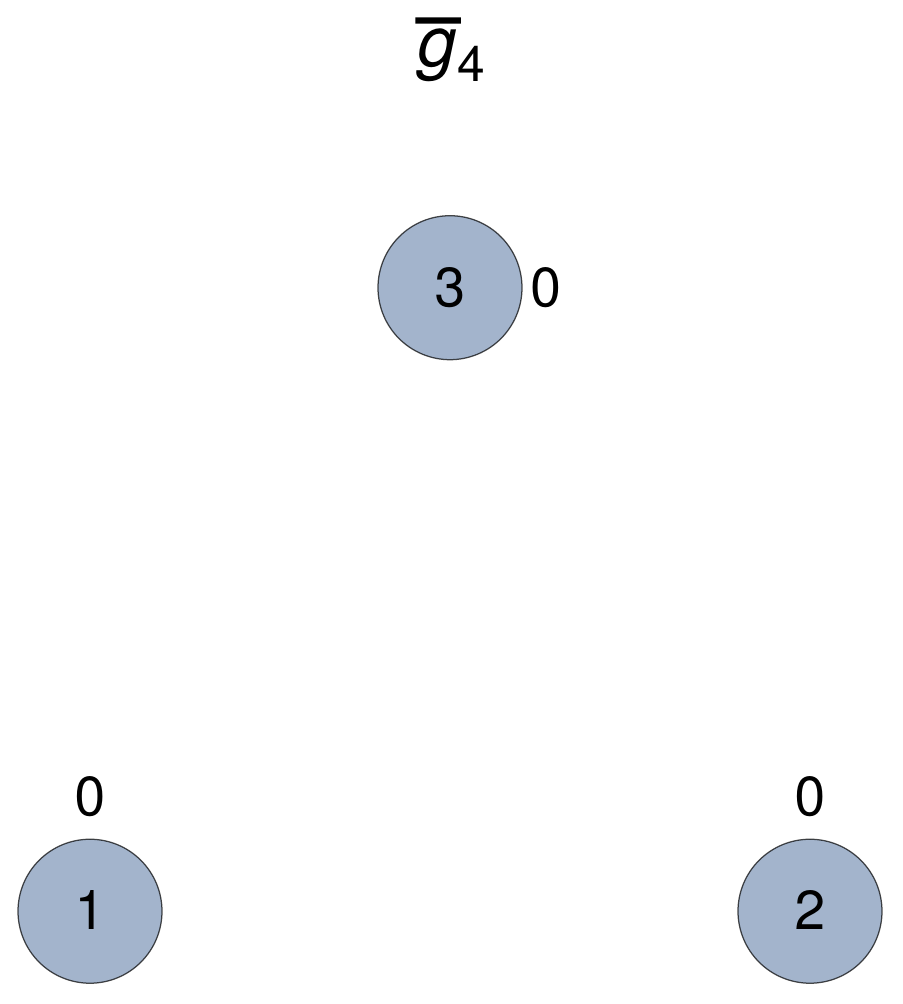} 
		\end{center}
	\caption{Different network structures with 3 players. The payoff obtained by player $i$ at graph $\bnet_j$ ( $\pi_i(\bnet_j)$ ) is written next to corresponding node.  }\label{fig:FarsightedlyStable}
	\end{figure}
\end{example}

\subsection*{Additional auxiliary results}

\begin{proposition} \label{prop:EqBipAppendixOnline}
	 Let $\phi(x) = \lambda x$ and $c(x) = \frac{2}{\alpha}x^{\alpha}$, and  $r=0$. The equilibrium strategy profile $\bar{\strategy}$ of game $C(\Bip{\sbp, \ssp})$ is given with:
	 \begin{align*}
	 &\bsij = \frac{a^{\frac{\alpha-1}{\alpha}}}{(a^{\frac{\alpha-1}{\alpha}} + v^{\frac{\alpha-1}{\alpha}})^\frac{2}{\alpha}}  (a v)^{-\frac{(\alpha-1)^2}{\alpha^2}}\; i \in \bp, j \in \sp,\\
	  &\bsji = \frac{a^{\frac{\alpha-1}{\alpha}}}{(v^{\frac{\alpha-1}{\alpha}} + v^{\frac{\alpha-1}{\alpha}})^\frac{2}{\alpha}}  (a v)^{-\frac{(\alpha-1)^2}{\alpha^2}}\; j \in \sp, i \in \bp.
	 \end{align*}
\end{proposition}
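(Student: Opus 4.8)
The plan is to exploit the symmetry of the game $C(\Bip{\sbp,\ssp})$ to reduce the full system of first-order conditions to two scalar equations, and then solve them in closed form. First I would invoke Proposition \ref{prop:Uniqueness}: since $C(\Bip{\sbp,\ssp})$ has a unique pure-strategy Nash equilibrium and the members of each partition are internally exchangeable, every attacker must play the same per-contest effort against every victim, and symmetrically for victims (any asymmetric profile could be relabelled into a distinct equilibrium, contradicting uniqueness). Writing $x := \bsij$ for the common effort of an attacker against a victim and $y := \bsji$ for the common effort of a victim against an attacker, the totals become $\ewi = \ssp\, x$ for $i \in \bp$ and $\ewj = \sbp\, y$ for $j \in \sp$. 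Proposition \ref{prop:Uniqueness} also supplies interiority as $r \to 0$, so both first-order conditions hold with equality.

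Next I would specialize the interior conditions \eqref{eq:FOC} to $\phi(t) = \lambda t$ (so $\phi' \equiv \lambda$), $c(t) = \tfrac{2}{\alpha} t^{\alpha}$ (so $c'(t) = 2 t^{\alpha-1}$), and $r = 0$. The factors of $\lambda$ and $2$ cancel, and the system collapses to
\begin{align*}
\frac{y}{(x+y)^2} = (\ssp\, x)^{\alpha-1}, \qquad \frac{x}{(x+y)^2} = (\sbp\, y)^{\alpha-1}.
\end{align*}

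The key algebraic step is to divide these two equations, which eliminates the common factor $(x+y)^2$ and, after collecting powers of $x/y$, yields $(y/x)^{\alpha} = (\ssp/\sbp)^{\alpha-1}$, i.e. $y/x = (\ssp/\sbp)^{(\alpha-1)/\alpha}$. This is exactly the ratio $\bsji = [\,\ssp/\sbp\,]^{(\alpha-1)/\alpha}\,\bsij$ already used in the proof of Proposition \ref{prop: ComparativeStaticEfficiency}. Substituting $y = (\ssp/\sbp)^{(\alpha-1)/\alpha}\, x$ back into either equation leaves a single power equation for $x$; solving for $x^{\alpha}$ and carefully collecting the exponents of $\sbp$ and $\ssp$ produces the stated expression for $\bsij$, and multiplying by the ratio then delivers $\bsji$.

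I expect the only genuine obstacle to be the bookkeeping of the fractional exponents in the final simplification — in particular reconciling the denominator with the reduced form $(\sbp^{(\alpha-1)/\alpha} + \ssp^{(\alpha-1)/\alpha})^{2/\alpha}$ and the factor $(\sbp\ssp)^{-(\alpha-1)^2/\alpha^2}$. Everything else is a direct consequence of symmetry together with the interior first-order conditions; no separate existence or fixed-point argument is needed, since Proposition \ref{prop:Uniqueness} already guarantees existence, uniqueness, and interiority of $\nestrategy$ for $r \to 0$.
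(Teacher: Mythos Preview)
Your proposal is correct and follows essentially the same route as the paper: reduce by symmetry to two scalar first-order conditions, divide them to obtain the ratio $\bsji/\bsij = (\ssp/\sbp)^{(\alpha-1)/\alpha}$, and substitute back to solve for $\bsij$. The only addition is your explicit appeal to Proposition~\ref{prop:Uniqueness} to justify the symmetric reduction, which the paper leaves implicit.
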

\begin{proof}
 The equilibrium is defined with the system of equations:
\begin{align}\label{eq:BipFOC}
\begin{split}
\frac{2 \bsji}{(\bsji + \bsij)^2}= 2(v \bsij)^{\alpha-1},  \hspace{10pt}
\frac{2 \bsij}{(\bsji + \bsij)^2}= 2(a \bsji)^{\alpha-1}, 
\end{split}
\end{align}
where $i \in \bp, j \in \sp.$

From \eqref{eq:BipFOC} it follows that: $\bsji = \left[\frac{v}{a}\right]^{\frac{\alpha-1}{\alpha}}\bsij$. Plugging this back into the second equation in \eqref{eq:BipFOC} we get
\begin{align*}
\bsij^{\alpha} = \frac{a^{2\frac{\alpha-1}{\alpha}}}{(a^{\frac{\alpha-1}{\alpha}} + v^{\frac{\alpha-1}{\alpha}})^2}a^{1-\alpha} \left[\frac{a}{v}\right]^{\frac{(\alpha-1)^2}{\alpha}}.
\end{align*}
%which gives:
%\begin{align*}
%\bsij = \frac{a^{2\frac{\alpha-1}{\alpha^2}}}{(a^{\frac{\alpha-1}{\alpha}} + v^{\frac{\alpha-1}{\alpha}})^\frac{2}{\alpha}}a^{\frac{1-\alpha}{\alpha}} \left[\frac{a}{v}\right]^{\frac{(\alpha-1)^2}{\alpha^2}}
%\end{align*}
%
Since
\begin{align*}
a^{2\frac{\alpha-1}{\alpha^2}} a^{\frac{1-\alpha}{\alpha}}  =a^{\frac{2 \alpha -2 + \alpha - \alpha^2}{\alpha^2}} = a^{\frac{\alpha^2 -\alpha}{\alpha^2}} a^{\frac{2 \alpha -2 + 2\alpha - 2\alpha^2}{\alpha^2}} =  a^{\frac{\alpha-1}{\alpha}} a^{-2\frac{(\alpha-1)^2}{\alpha^2}},
\end{align*}
 we write
\begin{align*}
\bsij = \frac{a^{\frac{\alpha-1}{\alpha}}}{(a^{\frac{\alpha-1}{\alpha}} + v^{\frac{\alpha-1}{\alpha}})^\frac{2}{\alpha}}  \left[\frac{a}{v}\right]^{\frac{(\alpha-1)^2}{\alpha^2}} a^{-2\frac{(\alpha-1)^2}{\alpha^2}} =  \frac{a^{\frac{\alpha-1}{\alpha}}}{(a^{\frac{\alpha-1}{\alpha}} + v^{\frac{\alpha-1}{\alpha}})^\frac{2}{\alpha}}  (a v)^{-\frac{(\alpha-1)^2}{\alpha^2}}.
\end{align*}
Symmetrically:
\begin{align*}
\bsji  =  \frac{v^{\frac{\alpha-1}{\alpha}}}{(a^{\frac{\alpha-1}{\alpha}} + v^{\frac{\alpha-1}{\alpha}})^\frac{2}{\alpha}}  (a v)^{-\frac{(\alpha-1)^2}{\alpha^2}}.
\end{align*}
\end{proof}

\begin{lemma}\label{lem:InequalitiesBipr}
 	Let  $a>v \geq 1$. If $\left[\left(a^{\alpha-1} - 3 v^{\alpha-1}\right) +\left[\frac{v}{a}\right]^\frac{\alpha-1}{\alpha} \left(v^{\alpha-1} -  3 a^{\alpha-1}\right) \right] > \frac{2}{\alpha-1}\left(a^{\frac{\alpha-1}{\alpha}} + v^{\frac{\alpha-1}{\alpha}}\right) v^{\frac{(\alpha-1)^2}{\alpha}}$  for $\alpha \geq 2$ then it holds for any $\alpha +\delta$ with $\delta \geq 0$. 
\end{lemma}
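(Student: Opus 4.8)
The plan is to reduce the two–variable inequality to a single–variable one and then prove that the reduced quantity is monotone in $\alpha$, so that positivity at one value of $\alpha$ propagates upward.

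First I would divide both sides of the displayed inequality by $a^{\alpha-1}>0$ and introduce $t=v/a\in(0,1)$ (using $a>v\ge1$) and $p=\alpha-1\ge1$. Computing the exponents via $\tfrac{\alpha-1}{\alpha}=\tfrac{p}{p+1}$, $\tfrac{(\alpha-1)^2}{\alpha}=\tfrac{p^2}{p+1}$, and $\tfrac{p}{p+1}+p=\tfrac{p(p+2)}{p+1}$ (and checking that the $a$–powers on the right cancel), the claim becomes $D(p,t)>0$, where
\[
D(p,t)=1-3\,t^{\frac{p}{p+1}}-\Big(3+\tfrac{2}{p}\Big)t^{p}-\tfrac{2}{p}\,t^{\frac{p^2}{p+1}}+t^{\frac{p(p+2)}{p+1}}.
\]
Since $t$ is fixed by the pair $(a,v)$ and the statement asks to pass from $\alpha$ to $\alpha+\delta$, it suffices to show that $D(p,t)$ is nondecreasing in $p$ for every fixed $t\in(0,1)$; then $D(p_0,t)>0$ forces $D(p,t)>0$ for all $p\ge p_0$.

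Next I would differentiate $D$ in $p$, writing $L=-\ln t>0$. Each of the four exponents is increasing in $p$, so every power $t^{e(p)}$ is itself decreasing in $p$. Consequently the three terms entering $D$ with a negative coefficient all make positive contributions to $\partial_p D$ (and the extra $\tfrac{2}{p^2}$ pieces from differentiating the coefficients are positive as well), while the only term fighting monotonicity is $+t^{p(p+2)/(p+1)}$, contributing $-t^{e_D}L\,e_D'$ with $e_D=\tfrac{p(p+2)}{p+1}$. Thus
\[
\partial_p D \;=\;(\text{nonnegative terms})\;-\;t^{e_D}L\,e_D'.
\]

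The crux, and the only real obstacle, is to show this single negative contribution is dominated. I would keep only the $t^{p}$ contribution $(3+\tfrac2p)t^{p}L$ from the positive part and use three elementary facts: (i) $e_D=p+1-\tfrac1{p+1}>p$, so $t^{e_D}<t^{p}$ for $t\in(0,1)$; (ii) $e_D'=1+\tfrac1{(p+1)^2}\le\tfrac54$ for $p\ge1$; and (iii) $3+\tfrac2p\ge3$. These give
\[
\partial_p D \;\ge\; L\Big[(3+\tfrac2p)t^{p}-t^{e_D}e_D'\Big]\;\ge\; L\big[3t^{p}-\tfrac54\,t^{p}\big]\;=\;\tfrac74\,L\,t^{p}\;>\;0,
\]
so $D(\cdot,t)$ is strictly increasing and the lemma follows. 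The conceptual point making the estimate work is that the lone term with the \emph{wrong} monotonicity carries the largest exponent, hence is the smallest power of $t$, so its decay is easily outweighed by the derivative of the leading $t^{p}$ term, whose coefficient ($\ge3$) dwarfs $e_D'\le\tfrac54$.
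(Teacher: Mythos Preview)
Your argument is correct. The reduction to $D(p,t)$ with $t=v/a\in(0,1)$ and $p=\alpha-1$ is valid (the $a$-powers indeed cancel on the right, leaving $\tfrac{2}{p}(t^{p^2/(p+1)}+t^{p})$), and the monotonicity step checks out: among the five terms of $\partial_p D$, only $-L\,t^{e_D}e_D'$ is negative, and since $e_D=p+\tfrac{p}{p+1}>p$ gives $t^{e_D}<t^{p}$ while $e_D'\le\tfrac54<3\le3+\tfrac2p$, the contribution $(3+\tfrac2p)L\,t^{p}$ alone already dominates it. Hence $D(\cdot,t)$ is strictly increasing and positivity propagates from $\alpha$ to all $\alpha+\delta$.

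The paper takes a different route. Rather than normalizing and differentiating, it compares the two sides at $\alpha+\delta$ directly to $a^{\delta}$ times the corresponding sides at $\alpha$: using $(v/a)^{(\alpha-1)/\alpha}>(v/a)^{(\alpha+\delta-1)/(\alpha+\delta)}$ and $v^{\delta}<a^{\delta}$, the left-hand side at $\alpha+\delta$ is bounded below by $a^{\delta}$ times the left-hand side at $\alpha$, while the right-hand side at $\alpha+\delta$ is bounded above by $a^{\delta}$ times the right-hand side at $\alpha$. Your approach is more systematic and yields the sharper qualitative statement that the normalized gap $D$ is \emph{strictly} increasing in $\alpha$; the paper's approach avoids calculus entirely but relies on a chain of somewhat ad hoc inequalities (and a couple of exponent bookkeeping steps that are easy to get wrong). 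Either way the lemma follows.
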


\begin{proof}
	Suppose that inequality holds for some $\alpha \geq 2$. We show that it holds for $\alpha+\delta$ for some $\delta \geq 0$. 
	
	Since $\frac{\alpha}{\alpha-1} <  \frac{\alpha+ \delta-1} {\alpha+\delta} <1$  and $v<a$ we have that $\left[\frac{v}{a}\right]^\frac{\alpha-1}{\alpha} > \left[\frac{v}{a}\right]^\frac{\alpha + \delta-1}{\alpha+ \delta}$, and thus:
	\begin{align}\label{ineq:Lem9_1}
	\begin{split}
	&\left[\left(a^{\alpha+ \delta-1} - 3 v^{\alpha + \delta-1}\right) +\left[\frac{v}{a}\right]^\frac{\alpha+\delta-1}{\alpha + \delta} \left(v^{\alpha + \delta-1} -  3 a^{\alpha + \delta-1}\right) \right] >\\
	&	\left[\left(a^{\alpha+ \delta-1} - 3 v^{\alpha + \delta-1}\right) +\left[\frac{v}{a}\right]^\frac{\alpha-1}{\alpha } \left(v^{\alpha + \delta-1} -  3 a^{\alpha + \delta-1}\right) \right]  >\\
	&\left[\left(a^{\alpha-1} a^{\delta} - 3 v^{\alpha -1} a^{\delta} \right) +\left[\frac{v}{a}\right]^\frac{\alpha-1}{\alpha } \left(v^{\alpha -1}a^{\delta} -  3 a^{\alpha -1} a^{\delta}\right) \right] = \\
	& a^{\delta} \left[\left(a^{\alpha-1} - 3 v^{\alpha -1}  \right) +\left[\frac{v}{a}\right]^\frac{\alpha-1}{\alpha } \left(v^{\alpha -1} -  3 a^{\alpha -1}\right) \right],
	\end{split}
	\end{align}
	where the last inequality is due to the fact that $v^{\delta}<a^{\delta}$ and $-3 v^{\alpha -1} + \left[\frac{v}{a}\right]^\frac{\alpha-1}{\alpha } v^{\alpha -1} < 0$.
	
	On the other hand we have:
	
	\begin{align}\label{ineq:Lem9_2}
	\begin{split}
	&\frac{2}{\alpha + \delta-1}\left(a^{\frac{\alpha + \delta-1}{\alpha + \delta}} + v^{\frac{\alpha+ \delta-1}{\alpha+ \delta}}\right) v^{\frac{(\alpha+ \delta-1)^2}{\alpha+ \delta}} =\\
	& \frac{2}{\alpha +\delta -1}\left(a^{\frac{\alpha -1}{\alpha}}a^{\frac{\delta}{(\delta+\alpha)\delta}} + v^{\frac{\alpha-1}{\alpha}} v^{\frac{\delta}{(\delta+\alpha)\delta}}\right) v^{\frac{(\alpha  -1)^2}{\alpha}} v^{ \delta  -\frac{\delta}{\alpha(\alpha + \delta)}}<\\
	&\frac{2}{\alpha -1}\left(a^{\frac{\alpha -1}{\alpha}} + v^{\frac{\alpha-1}{\alpha}}\right) v^{\frac{(\alpha  -1)^2}{\alpha}}a^{\frac{\delta}{(\delta+\alpha)\delta}} a^{ \delta  -\frac{\delta}{\alpha(\alpha + \delta)}} = \\
	&\frac{2}{\alpha -1}\left(a^{\frac{\alpha -1}{\alpha}} + v^{\frac{\alpha-1}{\alpha}}\right) v^{\frac{(\alpha  -1)^2}{\alpha}}a^{\delta}.
	\end{split}
	\end{align}
	
	Since by assumption 
	\begin{align*}
	\left[\left(a^{\alpha-1} - 3 v^{\alpha-1}\right) +\left[\frac{v}{a}\right]^\frac{\alpha-1}{\alpha} \left(v^{\alpha-1} -  3 a^{\alpha-1}\right) \right] > \frac{2}{\alpha-1}\left(a^{\frac{\alpha-1}{\alpha}} + v^{\frac{\alpha-1}{\alpha}}\right) v^{\frac{(\alpha-1)^2}{\alpha}}
	\end{align*}
	and $a^{\delta} >0$  the claim directly follows from  \eqref{ineq:Lem9_1} and \eqref{ineq:Lem9_2}. 	  
\end{proof}

%\input{ExamplePathDependence.tex}

%\newpage
%\includepdf[pages={1-},scale=0.75]{Test_tables.pdf}
\end{document}